\crefname{enumi}{}{}
\newcommand{\R}{\mathbb R}
\newcommand{\N}{\mathbb N}
\renewcommand{\emptyset}{\varnothing}
\newtheorem{theorem}{Theorem}[section]
\newtheorem{lemma}[theorem]{Lemma}
\newtheorem{proposition}[theorem]{Proposition}
\newtheorem{corollary}[theorem]{Corollary}
\theoremstyle{definition}
\DeclareMathOperator{\cells}{\textit{cells}}
\DeclareMathOperator{\vol}{Vol}
\newcommand{\scalprod}[2]{\left\langle #1,#2 \right\rangle}
\title{Recursively-Regular Subdivisions and Applications}
\author{Rafel Jaume \thanks{Research supported by ``Obra Social la Caixa'' and DAAD.}}
\author{Günter Rote \thanks{Partially supported by the ESF programme EuroGIGA-VORONOI, Deutsche Forschungsgemeinschaft (DFG): RO 2338/5-1.}} 
\affil{ \small Institut für Informatik, Freie Universität Berlin. \\ Takustraße 9, D-14195. Berlin, Germany. \\ \texttt{ \{jaume,rote\}@inf.fu-berlin.de} }
\date{}
\begin{document}
 \maketitle
\thispagestyle{empty}
\vspace*{-0.8cm}

\begin{abstract}
We generalize regular subdivisions (polyhedral complexes resulting from the projection of the lower faces of a polyhedron) introducing the class of \emph{recursively-regular subdivisions}. 
Informally speaking, a recursively-regular subdivision is a subdivision that can be obtained by splitting some faces of a regular subdivision by other regular subdivisions (and continue recursively). 
We also define the \emph{finest regular coarsening} and the \emph{regularity tree} of a polyhedral complex. 
We prove that recursively-regular subdivisions are not necessarily connected by flips and that they are acyclic with respect to the in-front relation. 
We show that the finest regular coarsening of a subdivision can be efficiently computed, and that whether a subdivision is recursively regular can be efficiently decided.
As an application, we also extend a theorem known since 1981 on illuminating space by cones and present connections of recursive regularity to tensegrity theory and graph-embedding problems.
\end{abstract}



\section{Introduction}

Regular polyhedral complexes appear in a wide variety of situations. 
The minimization diagram of a set of linear functions, whose regularity follows almost directly from the definition, is a common instance. 
Power diagrams are regular complexes as well. 
It is not hard to see that an arrangement of hyperplanes is a regular subdivision as well; it is the projection of the lower envelope of the dual of a zonotope~\cite{Edelsbrunner}. 
Yet another remarkable example is the Delaunay triangulation of a point set. 
A surprising connection is the Maxwell-Cremona correspondence~\cite{Max-cremona}, which relates the regularity of a planar graph to its rigidity as a framework.

Regular subdivisions are quite well-understood even in higher dimensions. 
Although, as shown by Santos~\cite{Santos2005}, not all the triangulations of a point set in dimension five and higher are connected via flips, regular triangulations are.  
Another remarkable result, which holds in any dimension, is that regular subdivisions contain no cycles in the visibility relations in the sense of~\cite{Edelsbrunner89}. 

On the other hand, not so much is known about non-regular subdivisions. 
Several generalizations of regularity have been studied in order to better understand them. 
For instance, the subdivisions induced by the projection of a polytope onto another polytope, introduced by  Billera, Filliman and Sturmfels \cite{BS}, have been extensively studied together with their variants.

For the clarity of presentation, we will use henceforth the notation $[ n ]$ to refer to the set of natural numbers $\{1,\dots,n \}$. 
The $d$-dimensional Euclidean space will be denoted by $\R^d$ and~$\|\cdot\|$ will denote the Euclidean norm.

\subsection{Polyhedral complexes and subdivisions}
\label{tessandsubd}

Since we will need later several basic results on regular subdivisions, we summarize next the relevant facts and notation. 
See~\cite{LRS} for a detailed discussion on this topic.

We use the term \emph{polytope} for a bounded polyhedron, and \emph{polyhedral cone} refers to the (possibly translated) intersection of finitely many closed linear halfspaces. 
A polyhedral cone is \emph{pointed} if it does not contain any line. 
A \emph{polyhedral complex} is a finite set $\mathcal{S}$ of polyhedra such that if $Q \in \mathcal{S}$ and $F$ is a face of $Q$, then $F \in \mathcal{S}$, and for all $Q,R \in \mathcal{S}$, $Q \cap R$ is a face of both $Q$ and $R$. 
A \emph{polyhedral fan} is a polyhedral complex whose elements are cones. 
A fan is \emph{pointed} if all of its cones are pointed. 
A fan is \emph{complete} if the union of all its cones is the whole ambient space. 
The polyhedra in a polyhedral complex will be called \emph{faces}.
The \emph{dimension} of a polyhedral complex is the dimension of its top-dimensional faces. 
A polyhedral complex is \emph{pure} if all its maximal faces have the same dimension. 
A \emph{cell} is a top-dimensional face of a pure polyhedral complex. 
The set of cells of a polyhedral complex $\mathcal{S}$ will be denoted by $\cells(\mathcal{S})$.
A \emph{facet} or \emph{wall} is a face of co-dimension one in a pure complex. 
As usual, \emph{edges} and \emph{vertices} are one and zero-dimensional faces in the complex, respectively.
The (unbounded) one-dimensional faces of a polyhedral fan will be called \emph{rays} as well.
A pure polyhedral complex embedded in $\R^d$ is \emph{full-dimensional} if it has dimension $d$. 
A $d$-dimensional polyhedral complex is \emph{regular} if its faces are the projection of the lower faces of a $(d+1)$-dimensional polyhedron.  

Let $A \subset \R^d$ be a finite set of points. 
A \emph{polyhedral subdivision} (or \emph{subdivision}, for short) of $A$ is a polyhedral complex whose vertices belong to $A$ and the union of whose cells is the convex hull of $A$. 
A \emph{polyhedral subdivision} (or \emph{subdivision}, for short) of a finite set~$V$ of vectors is a polyhedral fan whose rays have as directions a subset of $V$ and whose union is the positive span of $V$. 
A \emph{triangulation} is a subdivision of a point set consisting only of simplices. 
We use the standard notion of a (geometric bistellar) \emph{flip} between triangulations, see~~\cite[Section~2.4]{LRS} and~\cite{Santos2005}.

Given two complexes $\mathcal{S}$ and $\mathcal{S}'$, we say that $\mathcal{S}'$ \emph{refines} $\mathcal{S}$ if every face of $\mathcal{S}'$ is contained in a face of $\mathcal{S}$. 
We say that $\mathcal{S}'$ is a \emph{refinement} of $\mathcal{S}$ and $\mathcal{S}$ is a \emph{coarsening} of $\mathcal{S}'$. 
The set of subdivisions of a point (or vector) set form a poset with the refinement relation. 
Our notion of a subdivision and the refinement relation are simpler than the more subtle definitions in~\cite[Section 2.3]{LRS} or in~\cite{Santos2005} .
The differences are not relevant to our results.

This article is mainly concerned with recursively-regular subdivisions. 
Intuitively, a polyhedral complex $\mathcal{S}$ is \emph{recursively regular} if it is regular, or it has a regular coarsening $\mathcal{S}'$ such that for each cell $C \in \mathcal{S}'$, the restriction of $\mathcal{S}$ to $C$ is recursively regular. 
Of course, this class of subdivisions generalize regular subdivisions.

For the study of the class of recursively-regular subdivisions, it will be convenient to define the \emph{finest regular coarsening} of a polyhedral subdivision, which is the coarsening of the subdivision that is regular and is not refined by any other regular coarsening. 
The proof for its existence is simple albeit somehow surprising, and it turns out to be an interesting object by itself.

\subsection{Regular subdivisions}
\label{regularsubdivs}
\label{secondary}


Given a point $a \in \R^d$ and a scalar $\lambda \in \R$ we denote by ${a  \choose \lambda} \in \R^{d+1}$ the tuple (thought as a point) resulting from adding the coordinate $\lambda$ to $a$. 
Let $A \subset \R^d$ be a finite set of points.
A subdivision $\mathcal{S}$ of $A$ is \emph{regular} if there exists a \emph{height function} $\omega:  A \rightarrow \R$
such that each face of $\mathcal{S}$ is the projection of a face in the lower convex hull of $$ A^{\omega} = \left\{ {a  \choose \omega(a)} : a \in A \right\}. $$ 
The function $\omega$ will be identified with the vector $\omega \in \R^A$. 
The notation $A^{\omega}$ will be used as a function of a point set $A$ and a height function or vector~$\omega$. 
Given a cell $C \in \cells(\mathcal{S})$, we will also use the notation $$A^\omega |_{C}=\left\{ {a  \choose \omega(a)} : a \in A \cap C \right\}.$$

The following proposition indicates that the regularity of a polyhedral subdivision can be expressed locally. 
We refer to~\cite{LRS} for details. 

\begin{proposition}{(Folding form~\cite{LRS})}\label{foldingform}
Let $A \subset \R^d$ be a finite set of points. 
A polyhedral subdivision $\mathcal{S}$ of $A$ is regular if there exists a height function $\omega:  A \rightarrow \R$ such that for every cell $C \in \cells(\mathcal{S})$, the points of $A^\omega | _{C}$ lie in a hyperplane \emph{(coplanarity condition)}, and for every wall $W=C \cap D$, where $C,D \in  \cells(\mathcal{S})$, the point $a  \choose \omega(a)$ lies strictly above the hyperplane containing  $A^\omega |_{D}$, for all $a \in A \cap ( C \setminus D) $ \emph{(local folding condition)}.
\end{proposition}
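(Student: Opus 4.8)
The plan is to convert the two local conditions into the single global statement that a piecewise-linear function is convex, and then to recognize the graph of that function as the lower hull of $A^{\omega}$. Concretely, for each cell $C \in \cells(\mathcal{S})$ the coplanarity condition supplies a hyperplane through $A^{\omega}|_{C}$; this hyperplane is non-vertical, since the vertices of $C$ lie in $A$ and affinely span $\R^d$, so I may write it as the graph of an affine function $\ell_C : \R^d \to \R$ with $\ell_C(a)=\omega(a)$ for every $a \in A \cap C$. I then define $g : \operatorname{conv}(A) \to \R$ by $g|_C = \ell_C$. Because every $a \in A$ lies in some cell, $g(a)=\omega(a)$ throughout $A$, so once $g$ is shown to be convex its graph must be the lower convex hull of $A^{\omega}$: all lifted points sit on the convex graph of $g$, hence $\operatorname{conv}(A^{\omega})$ lies weakly above it, its lower boundary coincides with the graph, and the lower faces project exactly onto the cells of $\mathcal{S}$. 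That is precisely the assertion that $\mathcal{S}$ is regular with height $\omega$.

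First I would check that $g$ is well defined and continuous. On a shared wall $W = C \cap D$ the functions $\ell_C$ and $\ell_D$ agree on $A \cap W$, which contains the vertices of $W$ and therefore affinely spans the hyperplane $\aff(W)\subset\R^d$. Two affine functions that coincide on a hyperplane coincide on all of it, so $\ell_C = \ell_D$ on $W$; being affine inside each cell and matching on every wall, $g$ is a continuous piecewise-linear function on the convex domain $\operatorname{conv}(A)$.

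The heart of the argument is to deduce that $g$ is convex, and here the local folding condition enters. Rather than argue convexity cell by cell, which would force me to control the behavior at faces of codimension two and higher, I would verify convexity along line segments. Fix $p,q \in \operatorname{conv}(A)$; by a genericity/perturbation argument together with the continuity of $g$, it suffices to treat segments that meet only the relative interiors of walls, crossing a chain of cells $C_1,\dots,C_m$ transversally. Then $t \mapsto g((1-t)p+tq)$ is piecewise-linear with breakpoints at the wall crossings, and convexity along the segment is equivalent to an upward kink at each crossing. At a crossing of $W = C_i \cap C_{i+1}$, applying the local folding condition with $C=C_{i+1}$ and $D=C_i$ gives $\ell_{C_{i+1}}(a) > \ell_{C_i}(a)$ for the vertices $a$ of $C_{i+1}$ off $W$. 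Since $\ell_{C_{i+1}}-\ell_{C_i}$ is affine and vanishes on $\aff(W)$, it is strictly positive throughout the open halfspace containing $C_{i+1}$ and negative on the halfspace containing $C_i$; hence it strictly increases as the segment passes from $C_i$ into $C_{i+1}$, which is exactly the statement that the slope of $g$ jumps upward at the crossing. Thus $g$ is convex along every segment, and therefore convex.

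Combining these pieces, $g$ is a convex piecewise-linear function interpolating $\omega$ whose domains of linearity are the cells of $\mathcal{S}$, so by the reduction in the first paragraph $\mathcal{S}$ is regular. The step I expect to be the main obstacle is the convexity argument itself: the subtlety is that local convexity checked only across walls does not, a priori, control the behavior at lower-dimensional faces, and the cleanest way I see to sidestep this is the one-dimensional reduction above, which exploits that a piecewise-linear function on an interval is convex as soon as each of its kinks points upward. The remaining verifications — non-verticality of the hyperplanes, the spanning property of $A \cap W$, and the identification of the convex graph with the lower hull — are routine.
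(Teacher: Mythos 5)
The paper itself contains no proof of this proposition --- it is quoted from \cite{LRS}, with ``We refer to~\cite{LRS} for details'' --- so there is no internal argument to compare against; your strategy (interpolate $\omega$ by a piecewise-linear $g$, prove convexity of $g$ along generic segments via upward kinks, then identify the graph of $g$ with the lower hull of $A^\omega$) is the standard one, and most of it is sound. The construction of $g$ is fine, up to a minor point: well-definedness must be checked on \emph{every} intersection $C\cap D$, not only on walls; the same argument works, since $C\cap D$ is a face of both cells, its vertices lie in $A$, and a polytope is the convex hull of its vertices. The convexity step is also legitimate: the pairs $(p,q)$ whose segment avoids the $(d-2)$-skeleton are dense, $g$ is continuous, and your kink computation at a wall crossing is correct.

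The genuine gap is in your first paragraph, where you claim that once $g$ is convex, ``the lower faces project exactly onto the cells of $\mathcal S$.'' That implication is false as stated, and it is precisely the place where the \emph{strictness} of the folding condition must be used a second time. Convexity gives only that the graph of $g$ is the lower boundary of $\operatorname{conv}(A^\omega)$; it does not prevent several cells of $\mathcal{S}$ from lifting into one and the same lower facet. For instance, $\omega\equiv 0$ on a non-regular triangulation satisfies coplanarity and weak folding, gives the convex $g\equiv 0$ whose graph is indeed the lower hull, yet the unique lower facet projects onto all of $\operatorname{conv}(A)$, so the faces of $\mathcal S$ only refine the projected lower faces and $\mathcal S$ is not regular for this $\omega$. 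In your closing paragraph you assert that ``the domains of linearity are the cells of $\mathcal S$,'' which is exactly the missing statement, but it is never derived from what you proved. The derivation is short and uses your strict kinks: $g\ge \ell_C$ everywhere (by convexity along segments leaving the full-dimensional cell $C$), so if $g(x)=\ell_C(x)$ for some $x\notin C$, the convex nonnegative function $g-\ell_C$ vanishes at both endpoints of a segment from an interior point of $C$ to $x$ and hence vanishes along the whole segment; choosing the interior endpoint generically, this segment crosses some wall in its relative interior, where you showed the kink of $g$ is strictly positive --- a contradiction. Thus the contact set of each $\ell_C$ with the lower hull is exactly the lift of $C$, the lower faces of $\operatorname{conv}(A^\omega)$ project precisely onto the faces of $\mathcal S$, and only then is the proof complete.
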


In view of the previous result, it is easy to see that the regularity of a subdivision is equivalent to the feasibility of a linear program. 
We sketch the proof of this well-known fact because we will use the notation later. 

Note that the coplanarity condition for a cell can be translated into a set of linear homogeneous equations in the heights of its vertices.
Indeed, it is enough to choose an affine basis for each cell and require that each set resulting from extending this basis with a vertex in the cell is affinely dependent. 
Hence, all the coplanarity conditions together restrict the set of possible height functions $\omega$ to a linear subspace of $\R^{n}$. 

Consider now the local folding condition for a wall $W=C \cap D$ with $C,D \in \cells(\mathcal{S})$.
Let $B=\{b_1, \dots, b_{d+1}\}$ be a spanning set of vertices of $D$, and let $a \in A \cap (D \setminus C) $. 
The local folding condition for $W$ can be expressed~as
\begin{equation}
\label{lfc}
\left|
\begin{array}{cccc}
	1 & \dots & 1  \\
    b_1 & \dots& b_{d+1} \\
\end{array} \right|
 \left|
\begin{array}{cccc}
	1 & \dots & 1 & 1  \\
    b_1 & \dots& b_{d+1} & a\\
    \omega(b_1) & \dots & \omega(b_{d+1}) & \omega(a)\\  
\end{array} \right|
> 0.
\end{equation}
By developing the last row of the second determinant, it becomes clear that this condition is a linear homogeneous strict inequality in the heights of the lifted points. 
Therefore, the local folding conditions for all the walls of a subdivision $\mathcal{S}$ define together a relatively open cone in the subspace determined by the coplanarity conditions. 

\label{regsys}
The \emph{regularity system} of a subdivision is the collection of equations and inequalities resulting from its coplanarity and local folding conditions. 
The \emph{weak regularity system} of a subdivision is the system resulting of replacing the strict inequality in~(\ref{lfc}) with a weak inequality. 
The \emph{secondary cone} is the set of solutions of the weak regularity system. 

Note that the regularity system can be defined for coarsenings of polyhedral complexes, even if they are not polyhedral complexes (that is, if the ``faces'' fail to be convex or the tessellation is not face-to-face). 
Moreover, the definitions and statements presented here can be easily generalized to the case where the initial object $A$ is a set of vectors instead of points. 
In such a case, the cells of the complex are cones forming a polyhedral fan whose $1$-faces are rays with directions taken from $A$. 
The local folding (and coplanarity) conditions lose then the row of ones of both determinants appearing in~(\ref{lfc}) and also one column each, since affine bases must be replaced with linear bases. 
We will use the term subdivision in an ambiguous manner to stress this fact and focus on point-set subdivisions in the proofs. 


\subsection{The secondary fan and the secondary polytope}
Regular subdivisions were first studied by Gelfand, Kapranov and Zelevinsky \cite{GKZ}, who introduced the secondary fan and the secondary polytope. 
These two objects encode the combinatorics of the refinement poset of the regular subdivisions of a point set. 
In addition, they directly imply that regular triangulations are connected by flips---local operations. 
We next give the necessary definitions to state this result.

The GKZ\emph{-vector} $\alpha(\mathcal{T})$ of a triangulation $\mathcal{T}$ of a finite point set $A$ is the vector $\alpha(\mathcal{T}) \in \R^A$ whose $a$-th component is 
\[
\sum \limits_{\substack{C \in \cells(\mathcal{T})  \\ C \ni a } } \vol(C).
\]
The convex hull $\Sigma(A) \subset \R^A$ of all vectors $\alpha(\mathcal{T})$ over all triangulations $\mathcal{T}$ of $A$ is an $(n-d-1)$-dimensional polytope called the \emph{secondary polytope of $A$}.

\begin{theorem}[Gelfand, Kapranov and Zelevinsky~\cite{GKZ}]
The secondary cones of the regular triangulations of a $d$-dimensional point set $A$ define a $(n{-}d{-}1)$-dimensional complete polyhedral fan (called the \emph{secondary fan of} $A$). 
The secondary fan of~$A$ is the normal fan of~$\Sigma(A)$. 
\end{theorem}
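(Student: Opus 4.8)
The plan is to relate the secondary cones to a single, globally defined, concave piecewise-linear \emph{volume functional} on the space $\R^A$ of height functions, and to identify this functional as the support function of $\Sigma(A)$. For a height function $\omega$ and a triangulation $\mathcal{T}$ of $A$, let $g_{\mathcal{T},\omega}$ be the continuous function on the convex hull of $A$ that is affine on each cell of $\mathcal{T}$ and takes the value $\omega(a)$ at each $a \in A$. Integrating the affine interpolant over a single $d$-simplex $C$ with vertices $v_0,\dots,v_d$ gives $\int_C g_{\mathcal{T},\omega}=\tfrac{\vol(C)}{d+1}\sum_i \omega(v_i)$, and summing over the cells redistributes the volume exactly as in the definition of the GKZ-vector, yielding
\begin{equation}\label{gkzvol}
\int_{\operatorname{conv}(A)} g_{\mathcal{T},\omega}\,dx \;=\; \frac{1}{d+1}\,\scalprod{\alpha(\mathcal{T})}{\omega}.
\end{equation}

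First I would observe that the lower convex hull of $A^{\omega}$ is the graph of the convex envelope $\hat g_\omega(x)=\min\{\sum_a \lambda_a\omega(a):\sum_a\lambda_a a=x,\ \lambda\ge 0,\ \sum_a\lambda_a=1\}$, and that for \emph{every} triangulation $\mathcal{T}$ one has $g_{\mathcal{T},\omega}\ge \hat g_\omega$ pointwise, with equality precisely when $\mathcal{T}$ refines the regular subdivision induced by $\omega$; this is exactly where the coplanarity and local folding conditions of \cref{foldingform} enter. Integrating and using~\eqref{gkzvol} gives
\begin{equation}\label{gkzmin}
\Phi(\omega):=\int_{\operatorname{conv}(A)}\hat g_\omega\,dx \;=\; \frac{1}{d+1}\min_{\mathcal{T}}\scalprod{\alpha(\mathcal{T})}{\omega},
\end{equation}
the minimum ranging over all triangulations of $A$. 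Hence $\Phi$ is concave and piecewise linear, and for generic $\omega$ the equality case pins down a single minimizer, namely the regular triangulation $\mathcal{T}_\omega$ induced by $\omega$.

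The key step is then to read off both assertions from~\eqref{gkzmin}. The right-hand side is, up to the factor $1/(d{+}1)$, the support function of $\Sigma(A)=\operatorname{conv}\{\alpha(\mathcal{T})\}$ in the minimizing direction, so the maximal domains of linearity of $\Phi$ are exactly the (inner) normal cones of $\Sigma(A)$ at its vertices. On the other hand, by the equality analysis above, the domain of linearity on which $\mathcal{T}$ is the minimizer is precisely the set of $\omega$ inducing the regular triangulation $\mathcal{T}$, that is, the solution set of its weak regularity system, which is its secondary cone. Therefore each secondary cone of a regular triangulation coincides with a normal cone of $\Sigma(A)$, the vertices of $\Sigma(A)$ are exactly the GKZ-vectors of regular triangulations, and the collection of secondary cones is the normal fan of $\Sigma(A)$. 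Completeness is then automatic, since the normal fan of a polytope covers $\R^A$. For the dimension, the common lineality space of all these cones is the set of $\omega$ on which $\Phi$ is globally linear, namely the restrictions to $A$ of affine functions on $\R^d$, a space of dimension $d+1$; equivalently $\dim\Sigma(A)=n-d-1$, so the fan has dimension $n-d-1$ modulo its lineality.

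The main obstacle I expect is the equality-case analysis for $g_{\mathcal{T},\omega}\ge \hat g_\omega$: one must argue carefully that the interpolant over a triangulation agrees with the convex envelope exactly when every cell is lifted into a supporting hyperplane from below (the content of \cref{foldingform}), that points lying strictly above the lower hull are simply not used, and that for generic heights this forces a single triangulation, so that each secondary cone is full-dimensional and corresponds to a genuine vertex of $\Sigma(A)$. Everything else---the volume identity~\eqref{gkzvol}, the concavity of $\Phi$, and the standard identification of the linearity domains of a support function with the normal cones of the underlying polytope---is routine once this correspondence is established.
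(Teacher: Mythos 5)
This theorem is stated in the paper as background and attributed to Gelfand, Kapranov and Zelevinsky; the paper contains no proof of it, so there is nothing internal to compare your argument against. Judged on its own merits, your proposal is essentially the classical proof of the GKZ theorem (the one via the characteristic volume functional, as in \cite{GKZ} and in the book \cite{LRS}), and it is correct in outline: the identity $\int_{\operatorname{conv}(A)} g_{\mathcal{T},\omega} = \tfrac{1}{d+1}\scalprod{\alpha(\mathcal{T})}{\omega}$ is right (note it silently uses that the $a$-th component of the GKZ-vector counts only cells having $a$ as a \emph{vertex}), the inequality $g_{\mathcal{T},\omega}\ge \hat g_\omega$ follows from convexity of the envelope, and reading the two statements off the concave support function $\Phi$ is the standard conclusion. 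The points you flag as the main obstacle are indeed where the substance lies, and I would add two more items that your sketch compresses: (1) the equality case is not just ``$\mathcal{T}$ refines the subdivision induced by $\omega$'' --- one also needs every vertex of $\mathcal{T}$ to lift onto the lower hull, so points lifted strictly above must be unused, and conversely one needs the existence of at least one triangulation achieving the envelope (a pulling refinement of the induced subdivision), so that the minimum in your formula for $\Phi$ is actually attained; (2) to conclude that $\alpha(\mathcal{T})$ is a genuine \emph{vertex} of $\Sigma(A)$ for every regular $\mathcal{T}$, you need that the solution set of the \emph{strict} regularity system is a nonempty open set, whence the secondary cone (the weak system's solution set, which your argument identifies with the set of minimizing directions for $\alpha(\mathcal{T})$) is full-dimensional; together with the genericity argument in the reverse direction this gives the bijection between vertices of $\Sigma(A)$ and regular triangulations. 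Finally, the dimension count requires $A$ to affinely span $\R^d$, and the converse inclusion (lineality space $\subseteq$ affine functions) needs a triangulation using all points of $A$, e.g.\ a pulling triangulation, so that an $\omega$ lying in all secondary cones together with $-\omega$ forces the interpolant to be affine and to interpolate all of $\omega$. None of these are gaps in the sense of a wrong step; they are the standard technical core of the theorem, and your plan identifies them correctly.
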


As a consequence, the vertices of $\Sigma(A)$ correspond to regu\-lar triangulations of $A$ and the edges of  $\Sigma(A)$ correspond to flips between regular triangulations. 
This proves, in particular, that the regular triangulations of $A$ are connected in the graph of flips.

\subsection{Edelsbrunner's acyclicity theorem}
\label{acycl}
Another nice (geometrically induced) combinatoric property of regular subdivision is the acyclicity in their \emph{in-front} relation.
We state here the definitions we need for later on. 

Let $x$ be a point in $\R^d$ and $S,T \subset \R^d$ be two disjoint convex sets. 
We say that $S$ is \emph{in front of} $T$ with respect to $x$ if there is an open halfline $\ell$ starting at $x$ so that $S_0 = \ell \cap S \neq \emptyset$, $T_0 = \ell \cap T \neq \emptyset$ and every point of $S_0$ lies between $x$ and any point of $T_0$. 
This relation is called the \emph{in-front} relation (from $x$), which is well-defined and antisymmetric because of the convexity of $S$ and $T$. 
The relation can be defined for a direction as well, when $x$ is considered to lie at infinity. 
It can be extended to all the faces of a polyhedral complex. 
To do that, the relation between faces is inherited from the relation between their relative interiors, which are pairwise-disjoint. 
The definition is similarly extended to polyhedral fans. 


A polyhedral complex is said to be \emph{cyclic} in a direction $ v $ (or from a point $x$) if the in-front relation induced by $v$ (or  $x$) on its open cells contains a cycle. 
The complex is called \emph{acyclic} if it is not cyclic from any point or direction. 



\begin{theorem}[Acyclicity Theorem \cite{Edelsbrunner89}]
\label{acyclthm}
Regular polyhedral complexes are acyclic.
\end{theorem}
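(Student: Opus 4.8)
The plan is to exhibit, for every viewpoint $x$ (and, analogously, for every direction $v$), a real-valued \emph{potential} on the cells that strictly decreases along the in-front relation; the mere existence of such a potential forbids cycles. Following the convention announced in the excerpt I focus on the point-set case, the vector (fan) case being identical after deleting the rows of ones from the determinants.

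First I would recast regularity in its \emph{upper-envelope} form. Fix a height function $\omega$ realizing $\mathcal{S}$. Each cell $C$ is the projection of a lower facet of $\mathrm{conv}(A^\omega)$, supported by a non-vertical hyperplane that I write as the graph of an affine function $h_C(y) = \langle \alpha_C, y\rangle + \beta_C$ on $\R^d$. Because this hyperplane supports $\mathrm{conv}(A^\omega)$ from below, $h_C$ lies weakly below the lower hull everywhere and coincides with it over $C$; hence on the domain $U = \bigcup \cells(\mathcal{S})$ the lower hull equals the pointwise maximum $\max_C h_C$, and $C = \{\, y \in U : h_C(y) = \max_D h_D(y)\,\}$. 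I would also record that $U$, being the projection of a polyhedron, is convex, so along any ray the complex is entered and left at most once.

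The geometric core is that along any ray the cells are met in order of increasing slope. On a ray $y(s) = x + su$, $s\ge 0$, the function $s \mapsto \max_C h_C(x+su)$ is convex and piecewise linear, the piece on which cell $C$ is attained having slope $\langle \alpha_C, u\rangle$; convexity forces these slopes to be nondecreasing in $s$, and strictly increasing at each transition. I then convert this direction-dependent slope order into a direction-free inequality, which is the real point. Suppose $S$ is in front of $T$ from $x$, witnessed by a ray in direction $u$ that meets $S$ at parameter $s_S$ strictly before $T$. The switch of the maximum from the $S$-piece to the later $T$-piece forces $\langle \alpha_S, u\rangle < \langle \alpha_T, u\rangle$, so $s \mapsto (h_S - h_T)(x+su)$ has negative slope; since it is nonnegative at $s = s_S$ (where $h_S$ attains the maximum), it is strictly positive at $s = 0$, that is $h_S(x) > h_T(x)$. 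Defining the potential $f(C) := h_C(x)$, we obtain that $S$ in front of $T$ implies $f(S) > f(T)$, so any cycle $S_1, \dots, S_k, S_1$ would yield $f(S_1) > \dots > f(S_1)$, a contradiction. For a direction $v$ the same computation, with $x$ sent to infinity, produces the potential $f(C) := \langle \alpha_C, v\rangle$, again strictly monotone.

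The main obstacle is exactly this conversion step: the per-ray argument only compares slopes for the one witnessing direction $u$, and different pairs in a putative cycle may be witnessed by different directions, so the slope inequalities cannot be chained directly. The crux is that convexity of the lifted surface upgrades the direction-dependent slope inequality into the direction-\emph{free} inequality $h_S(x) > h_T(x)$, yielding a single global potential valid for all witnessing rays at once. Some care is also needed at degenerate configurations, such as rays grazing lower-dimensional faces or a direction $u$ parallel to a shared wall. Here I would use that the in-front relation is defined on the relative interiors of the cells, so the witnessing ray may be taken to meet $S$ and $T$ in their relative interiors; this makes each crossing transversal and each slope inequality strict, and in particular handles the boundary case $x \in \mathrm{relint}(S)$, where $h_S(x) > h_T(x)$ already holds by uniqueness of the maximizer at $x$.
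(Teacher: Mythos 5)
The paper states this theorem without proof---it is quoted from Edelsbrunner~\cite{Edelsbrunner89}, and the surrounding text only records the statement (plus the remark that Edelsbrunner proved a stronger version for all relatively open faces)---so there is no in-paper argument to compare yours against. Your proposal is correct, and it reconstructs what is essentially the classical lifting argument behind Edelsbrunner's theorem: write each cell $C$ as the set where its supporting affine function $h_C$ attains the upper envelope $\max_D h_D$ of the lifted surface, and use convexity of that envelope to turn the in-front relation from $x$ into the single, direction-free inequality $h_S(x)>h_T(x)$, so that $C\mapsto h_C(x)$ is a strict potential and cycles are impossible; for a direction $v$ the linear part $\langle\alpha_C,v\rangle$ plays the same role. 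The one delicate step is the strictness of $\langle\alpha_S,u\rangle<\langle\alpha_T,u\rangle$, and your justification is slightly misworded: transversality of the crossings is not what matters. The correct reason, which your setup already contains, is that equality of slopes would make the envelope affine on the ray segment between the two touching intervals, forcing $h_T$ to attain the envelope at a point of $\mathrm{relint}(S)$; since $T=\{y: h_T(y)=\max_D h_D(y)\}$ and distinct cells of a complex meet only in a common proper face, this would put a point of $\mathrm{relint}(S)$ inside $T$, a contradiction. With that repair (and the trivial degenerate case in which all lifted points are coplanar, where the subdivision has a single cell), your argument is a complete and self-contained proof of the statement as the paper uses it, namely acyclicity of the in-front relation on the open cells; extending it to all relatively open faces, as in Edelsbrunner's stronger version, would require choosing for each face a supporting hyperplane whose contact set is exactly that face and some extra care when one face is contained in another.
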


In fact, Edelsbrunner proved something stronger: that the in-front relation is acyclic for all relatively-open faces of a regular polyhedral complex.

\subsection{Our contribution} 


This article is concerned with recursively-regular subdivisions and some related objects and problems. 
We give a combinatorial characterization of this type of subdivisions based on linear algebra, which leads to efficient algorithms for their recognition and provides meaningful structural properties. 
In~\cref{FRC}, we introduce two constructions closely related to recursively-regular subdivisions: the finest regular coarsening and the regularity tree of a polyhedral subdivision. 
We provide algorithms for the construction of these objects, which have applications in different areas that we explore. 

In addition, we examine some of their combinatorial properties in comparison to regular subdivisions.
In particular, we show that, unlike the regular subdivisions, the recursively-regular subdivisions of a point set are not necessarily connected by bistellar flips. 
On the other hand, recursively-regular subdivisions remain acyclic in the sense of Edelsbrunner (\cref{recareacyclic}).

As the main application, we address the problem of finding (or deciding whether it exists) a one-to-one assignment of a set of floodlights to a set of points such that the floodlights cover the space when translated to the assigned points.
The given floodlights are assumed to be the cells of a complete polyhedral fan.
We say that the fan is \emph{universal} if the floodlights can cover the space regardless of the given point set. 
We prove that recursively-regular fans are indeed universal, and that having a cycle in visibility is sufficient yet not necessary for a fan to be non-universal.
It remains open though to give a characterization of universal fans. 

We also examine two related graph-theoretic problems. 
The first one deals with rigidity of tensegrity frameworks. 
Specifically, we show how to detect the redundant (useless, in a sense) cables from a spider web (tensegrity made of cables and whose convex-hull vertices are pinned). 
The second is concerned with straight line-segment embeddings of digraphs on point sets such that the directions of the arcs satisfy given constraints. 
We show that a big family of digraphs (together with the directions constraints) can be embedded in any given point set whereas some non-trivial digraphs (with constraints) cannot be embedded in some types of point sets. 

\section[The finest regular coarsening and the regularity tree]{The finest regular coarsening and the regularity tree}
\label{FRC}

In this section, we study the finest regular coarsening of a subdivision, which we will use afterwards to define the regularity tree. 
Finally, we will introduce the class of recursively-regular subdivisions and analyze some of its properties.

Roughly speaking, the finest regular coarsening of a subdivision is the finest among all the coarsenings of the subdivision that are regular. 
One should note that it is not obvious whether this object is well-defined. 
We show first that this is indeed the case. 
We do it observing that merging two cells of a subdivision corresponds to converting a local folding condition into a coplanarity condition and, furthermore, this transformation can be done by simply replacing the strict inequality by an equation with the same coefficients. 
In other words, we are looking for the smallest set of inequalities we need to ``relax'' in order to make a given system compatible. 

We first expose a (we assume) well-known fact of linear algebra, for which we could not find a reference. 
We include it for completeness and because it definitely provides an insight into the problem. 
In particular, we give an algorithm in~\cref{mrs} to compute the finest regular coarsening, whose correctness will be implied be the following discussion.

\subsection{A detour through linear algebra}
\label{MRS}
Let $M \in \R^{m \times n}$ be a matrix with row vectors ${s_1,\ldots,s_m \in \R^n}$. 
The \emph{system of~$M$}, denoted by $S(M)$, is the system 
\begin{equation}
\label{system}
S(M): \begin{cases}
\begin{array}{l}
Mx >  0 \\
x \in  \R^n.
\end{array}
\end{cases}
\end{equation}
Given $E \subset [m]$, we use $S^{\geq}(M,E)$ to denote the system 
$$
S^{\geq}(M,E): \begin{cases}
\begin{array}{l}
\scalprod{s_i }{ x}  \geq  0   \text{,  for all } i \in E \\
\scalprod{s_j }{ x}  >  0   \text{,  for all } j  \in  [m] \setminus E \\
x \in  \R^n. 
\end{array}
\end{cases}
$$
Given $E \subset [m]$, the \emph{system of $M$ relaxed by $E$}, denoted by $S(M,E)$, is the system 
$$
S(M,E): \begin{cases}
\begin{array}{l}
\scalprod{s_i }{ x}  =  0   \text{,  for all } i \in E \\
\scalprod{s_j }{ x}  >  0   \text{,  for all } j  \in  [m] \setminus E \\
x \in  \R^n. 
\end{array}
\end{cases}
$$

Literally, the adjective ``relaxed'' would better fit $S^{\geq}(M,E)$ but the following proposition shows that the two systems are equivalent in the cases we are interested in.  
Our purpose is to show that, given a matrix $M \in \R^{m \times n}$, there is a unique set $E \subset [m]$ of minimum cardinality such that $S^{\geq}(M,E)$ has a solution, and that this set can be easily found. 
If $S(M)$ is already compatible, it is clear that $E = \emptyset $ is the unique such set. 
Otherwise, we show that the problem can be transformed into an equivalent one. 

\begin{proposition}
Let $M \in \R^{m \times n}$ be such that $S(M)$ is incompatible, and let $E \subset [m]$ be a set of minimum cardinality such that $S^{\geq}(M,E)$ is compatible. 
Then, $S^{\geq}(M,E)$ and $S(M,E)$ have the same set of solutions.
\end{proposition}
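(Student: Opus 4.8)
The plan is to prove the two solution sets coincide by establishing the two inclusions separately. One direction is immediate: any solution of $S(M,E)$ satisfies $\scalprod{s_i}{x} = 0 \ge 0$ for every $i \in E$ and $\scalprod{s_j}{x} > 0$ for every $j \in [m] \setminus E$, so it automatically solves $S^{\geq}(M,E)$. Thus the solution set of $S(M,E)$ is contained in that of $S^{\geq}(M,E)$, and all the real work lies in the reverse inclusion.

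For the reverse inclusion I would argue by contradiction, exploiting the minimality of $E$. Suppose some solution $x$ of $S^{\geq}(M,E)$ fails an equality, i.e.\ there is an index $i_0 \in E$ with $\scalprod{s_{i_0}}{x} > 0$ (note that $\scalprod{s_{i_0}}{x} \ge 0$ holds by feasibility). The crucial observation is that this same $x$ is then a solution of the weakly relaxed system $S^{\geq}(M, E \setminus \{i_0\})$: for every $i \in E \setminus \{i_0\}$ the weak inequality $\scalprod{s_i}{x} \ge 0$ still holds, while for every $j \in [m] \setminus (E \setminus \{i_0\})$---that is, the original strict indices together with $i_0$---the strict inequality $\scalprod{s_j}{x} > 0$ holds, by feasibility for $S^{\geq}(M,E)$ in the former case and by the assumption on $i_0$ in the latter. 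Hence $S^{\geq}(M, E \setminus \{i_0\})$ is compatible, contradicting the minimality of $|E|$.

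Consequently every solution $x$ of $S^{\geq}(M,E)$ must satisfy $\scalprod{s_i}{x} = 0$ for all $i \in E$, so it is a solution of $S(M,E)$; this gives the second inclusion and completes the proof. The argument is short, and I do not anticipate a genuine obstacle. The one point deserving care---and the step I would treat as the main technical hurdle---is the verification that the witness $x$ remains feasible for the reduced system $S^{\geq}(M, E \setminus \{i_0\})$, where one must track carefully which constraints are strict and which are weak after $i_0$ has been removed from the relaxing set. The conceptual heart of the matter is simply recognizing that minimality of $E$ is precisely the hypothesis forbidding any slack on the relaxed constraints, which forces them to hold with equality on the whole solution set.
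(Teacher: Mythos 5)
Your proof is correct and follows essentially the same argument as the paper: both derive a contradiction with the minimality of $E$ by showing that a relaxed constraint satisfied strictly by a feasible point can be dropped from the relaxation set. The only cosmetic difference is that you remove a single index $i_0$ and stay entirely within the weak systems $S^{\geq}(M,\cdot)$, while the paper removes the whole set $E_0$ of strictly satisfied indices at once and observes that the point then solves the equality-relaxed system $S(M, E \setminus E_0)$; both yield a smaller compatible relaxation and hence the same contradiction.
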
 

\begin{proof}
It is clear that the set of solutions of $S^{\geq}(M,E)$ contains the set of solutions of $S(M,E)$. 
Assume $x_0$ is a solution of $S^{\geq}(M,E)$ and is not a solution of $S(M,E)$. 
This means that at least one of the inequalities indexed by $E$ is strictly satisfied by $x_0$. 
If $E_0 \neq \emptyset$ is the set of such inequalities, then $x_0$ is a solution of $S(M, E \setminus E_0)$. 
Since $E_0 \subset E$, this contradicts the assumed minimality of $E$.
\end{proof}

The previous observations motivate the following definition.
Given $M \in \R^{m \times n}$, the \emph{minimum relaxation set} of the system $S(M)$, denoted by $E(M)$, is the intersection of all the sets  $E \subset [m]$ such that $S(M,E)$ is compatible.
The \emph{minimum relaxation} of the system $S(M)$ is the system $S(M,E(M))$.
We will prove that this system is compatible. 
Hence, it will be clear that it is the (unique) set of minimum cardinality that needs to be relaxed in $S(M)$ in order to make the system compatible.  

For our purposes it is easier to argue in terms of the dual problem. 
Given $M \in \R^{m \times n}$, the \emph{dual system of $S(M)$}, denoted by $S^*(M)$, is the system 
\[
S^*(M): \begin{cases}
\begin{array}{l}
 M^\top y =0 \\
 y \geq 0, \; y \neq 0 \\
 y \in \R^m  . 
\end{array}
\end{cases}
\]
A system and its dual are related by the following special case of the Farkas Lemma. 

\begin{lemma}[Gordan's Theorem]
\label{Gordan}
Given $M \in \R^{m \times n}$, the system $S(M)$ is compatible if and only if the dual system $S^*(M)$ is incompatible.
\end{lemma}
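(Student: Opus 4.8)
The plan is to show that the two systems can never be simultaneously compatible, and then that at least one of them always is; the second half is the substantial part.

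For the first half I would argue directly. Suppose, for contradiction, that some $x$ solves $S(M)$ while some $y$ solves $S^*(M)$. Then
\[
0 = \scalprod{M^\top y}{x} = \scalprod{y}{Mx} = \sum_{i=1}^m y_i \scalprod{s_i}{x}.
\]
Each summand is nonnegative, since $y \geq 0$ and $\scalprod{s_i}{x} > 0$; and because $y \neq 0$, at least one index has $y_i > 0$, forcing the sum to be strictly positive — a contradiction. This settles one implication of the equivalence together with its contrapositive, so it remains to prove that incompatibility of $S^*(M)$ forces compatibility of $S(M)$.

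For this I would pass to the geometry of the rows. Let $C = \operatorname{conv}\{s_1, \dots, s_m\} \subset \R^n$, a polytope. A short normalization argument shows that $S^*(M)$ is compatible if and only if $0 \in C$: any solution $y$ satisfies $\sum_i y_i > 0$, so rescaling by $1/\sum_i y_i$ converts $M^\top y = 0$ into an expression of $0$ as a convex combination of the $s_i$, and conversely any such convex combination is a solution of $S^*(M)$. It therefore suffices to prove that $0 \notin C$ implies that $S(M)$ has a solution.

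The hard part is precisely this last step, and I would obtain it from the separating-hyperplane theorem. Since $C$ is compact, convex, and does not contain the origin, let $p \in C$ be the point of $C$ nearest to $0$; then $p \neq 0$, and the variational characterization of the nearest-point projection gives $\scalprod{p}{z} \geq \scalprod{p}{p} > 0$ for every $z \in C$, in particular $\scalprod{s_i}{p} > 0$ for all $i$. Thus $x = p$ solves $S(M)$, which completes the proof. One could equally invoke Farkas' Lemma, of which this is the announced special case, but the direct separation argument is self-contained and makes the underlying geometry transparent. The only genuine ingredient beyond elementary sign-chasing is the strict separation of a point from a compact convex set, and that is exactly where the convexity and compactness of $C$ are essential.
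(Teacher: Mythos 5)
Your proof is correct, and it is worth pointing out that the paper itself does not prove this lemma at all: it is quoted as a known classical fact (a special case of the Farkas Lemma), and the paper's own effort goes into the extension to systems containing equations (\cref{lem:GordanExt}), which is proved in the appendix by reduction to the present statement. Your proposal therefore supplies a genuinely self-contained argument where the paper defers to the literature. Both halves check out: the impossibility of simultaneous solvability is the standard sign computation $0 = \scalprod{M^\top y}{x} = \scalprod{y}{Mx} = \sum_{i} y_i \scalprod{s_i}{x} > 0$, using $y \geq 0$, $y \neq 0$ and $Mx > 0$; and for the substantial direction you correctly identify solvability of $S^*(M)$ with the condition $0 \in \operatorname{conv}\{s_1,\dots,s_m\}$ (rescaling by $\sum_i y_i > 0$ and back), and then use the nearest-point characterization to produce, when $0$ lies outside this compact convex set, a vector $p$ with $\scalprod{s_i}{p} \geq \scalprod{p}{p} > 0$ for every $i$, i.e., a solution of $S(M)$. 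This also silently handles the degenerate case of a zero row $s_i = 0$: then $0$ lies in the hull, $S^*(M)$ is solvable, and indeed $S(M)$ is not, so no case is missing. What the separation argument buys is transparency and self-containedness, the only nonelementary ingredient being the existence and variational characterization of the projection onto a compact convex set; what the paper's choice buys is brevity, reserving its appendix for the equality-constrained extension that its algorithms actually need.
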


This result can be read in the following way: there is no solution for the original system if and only if there exists a non-zero non-negative linear combination $y_0$ of some inequalities leading to the contradiction ``$0>0$''. 
Such a combination $y_0$ is called a\emph{ contradiction cycle} and can be interpreted as a solution to the dual system. 

%

It is convenient to state first the following lemma, which translates Gordan's Theorem to the case where also linear homogeneous equations are included in the system.
Before stating it, we need one more definition. 
Given $M \in \R^{m \times n}$, and $E \subset [m]$, the \emph{dual system of $S(M,E)$}, denoted by $S^*(M,E)$, is the system 
\[
S^*(M,E): \begin{cases}
\begin{array}{l}
 M^\top y =0 \\
 y =(y_1,\dots,y_m)\in \R^m\\
  y_i \geq 0 \,  \text{,  for all } i \in [m] \setminus E  \text{, and}\\
  \text{there exists } \; j \in [m] \setminus E \text{ such that } y_j >0.
\end{array}
\end{cases}
\]


\begin{restatable}[Extension of Gordan's theorem]{lemma}{gordan}
\label{lem:GordanExt}

Given $M \in \R^{m \times n}$, and $E \subset [m]$, the system $S(M,E)$ is compatible if and only if the dual system $S^*(M,E)$ is incompatible.
\end{restatable}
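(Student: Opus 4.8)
The plan is to reduce the mixed system $S(M,E)$---which combines the homogeneous equalities indexed by $E$ with the homogeneous strict inequalities indexed by $[m]\setminus E$---to a \emph{purely} strict system on a lower-dimensional space, so that the already-available Gordan's Theorem (\cref{Gordan}) applies verbatim. First I would set $L := \{x \in \R^n : \scalprod{s_i}{x}=0 \text{ for all } i\in E\}$, the linear subspace cut out by the equality constraints, and record the orthogonality identity $L = (\operatorname{span}\{s_i : i\in E\})^\perp$. Choosing a matrix $U$ whose columns form a basis of $L$ and writing $x = Uz$, every point of $L$ is the image of a unique $z$, and $\scalprod{s_j}{x} = \scalprod{U^\top s_j}{z}$. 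Hence $S(M,E)$ is compatible if and only if the purely strict system $S(M')$ is compatible, where $M'$ is the matrix whose rows are $U^\top s_j$ for $j \in [m]\setminus E$.

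Second, I would apply \cref{Gordan} to $M'$: the strict system $S(M')$ is compatible if and only if its dual $S^*(M')$ is incompatible. By definition, $S^*(M')$ asks for a nonzero nonnegative vector $y' = (y'_j)_{j\in[m]\setminus E}$ with $M'^\top y' = 0$, that is, $U^\top\!\bigl(\sum_{j\notin E} y'_j s_j\bigr) = 0$.

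Third---this is the step requiring care, and the one I expect to be the \emph{main obstacle}---I would translate $S^*(M')$ back into $S^*(M,E)$. Since the columns of $U$ span $L$, the condition $U^\top w = 0$ is equivalent to $w \perp L$, i.e.\ $w \in L^\perp = \operatorname{span}\{s_i : i\in E\}$. Thus $S^*(M')$ is compatible if and only if there are nonnegative coefficients $y'_j$ (not all zero) and \emph{real} coefficients $y_i$ with $i\in E$ such that $\sum_{j\notin E} y'_j s_j = -\sum_{i\in E} y_i s_i$. Renaming $y_j := y'_j$ for $j\notin E$, this is precisely a vector $y\in\R^m$ with $M^\top y = \sum_{i} y_i s_i = 0$, with $y_j\ge 0$ for $j\notin E$, and with $y_j>0$ for at least one $j\notin E$ (the ``not all zero'' clause, since those entries are already nonnegative)---that is, a solution of $S^*(M,E)$. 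Chaining the three equivalences yields $S(M,E)$ compatible $\iff S(M')$ compatible $\iff S^*(M')$ incompatible $\iff S^*(M,E)$ incompatible, as claimed.

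Finally I would verify the degenerate cases to confirm the bookkeeping: if $E=\emptyset$ then $L=\R^n$, one may take $U$ to be the identity, $M'=M$, and the statement collapses to plain Gordan; if $[m]\setminus E=\emptyset$ then $S(M,E)$ is solved by $x=0$ while $S^*(M,E)$ can never satisfy its ``there exists $j\in[m]\setminus E$'' clause, so both sides hold vacuously. The only genuinely delicate point is the identity $\ker U^\top = L^\perp = \operatorname{span}\{s_i : i\in E\}$, which is exactly what permits the sign-unconstrained multipliers $y_i$ ($i\in E$) to appear in the dual; everything else is a direct substitution.
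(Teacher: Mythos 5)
Your proof is correct and takes essentially the same route as the paper's: both restrict the system to the subspace $L$ cut out by the equality constraints and invoke Gordan's theorem for the strict inequalities there, translating dual certificates back and forth. The differences are only in execution---you parametrize $L$ by a basis matrix $U$ and recover the equality multipliers from $\ker U^\top = L^\perp = \operatorname{span}\{s_i : i \in E\}$, whereas the paper first discards linearly dependent equality rows, applies an invertible coordinate change $T$ so that $L$ becomes a coordinate subspace, and produces the dual certificate by the explicit block computation $z_0 = (-R^\top y_0,\, y_0)$; the paper's block $N$ is exactly your $M'$ written in the coordinates given by the last columns of $T$, and your variant has the minor advantage of not needing the preliminary row-reduction step.
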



In~\cref{proofGordan}, we prove the previous lemma by reducing it to Gordan's Theorem.
It is also possible to prove it by linear programming duality.

\begin{theorem}
\label{minimalrelaxation} 
Let $M \in \R^{m \times n}$ be  a matrix.
The system $S(M,E(M))$ is compatible.
\end{theorem}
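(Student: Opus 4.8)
The plan is to reduce the statement to a closure property of the family of ``good'' relaxation sets. Let $\mathcal{F} = \{E \subseteq [m] : S(M,E) \text{ is compatible}\}$. This family is finite and nonempty: taking $E = [m]$ turns every constraint into an equation $\langle s_i, x\rangle = 0$, so $S(M,[m])$ is solved by $x = 0$, whence $[m] \in \mathcal{F}$. By definition $E(M) = \bigcap_{E \in \mathcal{F}} E$ is the intersection of finitely many members of $\mathcal{F}$, so the theorem follows at once once I show that $\mathcal{F}$ is closed under intersection; a trivial induction then yields $E(M) \in \mathcal{F}$, which is exactly the compatibility of $S(M,E(M))$.

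To prove closure under intersection, I would argue directly rather than through the dual. Suppose $x_1$ solves $S(M,E_1)$ and $x_2$ solves $S(M,E_2)$, and set $E = E_1 \cap E_2$. I claim $x_1 + x_2$ solves $S(M,E)$. For $i \in E$ we have $i \in E_1$ and $i \in E_2$, so $\langle s_i, x_1\rangle = \langle s_i, x_2\rangle = 0$ and hence $\langle s_i, x_1 + x_2\rangle = 0$. For $j \in [m]\setminus E$ we have $j \notin E_1$ or $j \notin E_2$; in either case one of $\langle s_j, x_1\rangle$, $\langle s_j, x_2\rangle$ is strictly positive (the index is a strict inequality in the corresponding system), while the other is nonnegative (it equals $0$ if its index lies in the relevant relaxation set, and is positive otherwise). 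Thus $\langle s_j, x_1 + x_2\rangle > 0$, which is precisely the sign pattern required by $S(M,E)$, so $E \in \mathcal{F}$.

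Since this additivity argument settles the theorem outright, there is essentially no obstacle: the only thing to double-check is the bookkeeping of signs above, which is routine, and the genuine content is simply the observation that summing two witnesses merges their relaxation sets by intersection. Notably, this route does not even invoke Gordan's Theorem or \cref{lem:GordanExt}.

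For completeness I would also record the dual argument, which is presumably what the surrounding machinery is aimed at and which I expect to be the technically more delicate alternative. Assume for contradiction that $S(M,E(M))$ is incompatible. By \cref{lem:GordanExt} the dual $S^*(M,E(M))$ is compatible, yielding $y \in \R^m$ with $M^\top y = 0$, with $y_i \geq 0$ for all $i \in [m]\setminus E(M)$, and with $y_j > 0$ for some $j \in [m]\setminus E(M)$. Since $j \notin E(M) = \bigcap_{E\in\mathcal{F}} E$, there is some $E^* \in \mathcal{F}$ with $j \notin E^*$; let $x^*$ solve $S(M,E^*)$. Pairing $M^\top y = 0$ with $x^*$ gives $0 = \langle y, M x^*\rangle = \sum_i y_i \langle s_i, x^*\rangle$. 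Because $E(M) \subseteq E^*$, every index $i \notin E^*$ lies outside $E(M)$, so there $y_i \geq 0$ while $\langle s_i, x^*\rangle > 0$, whereas indices $i \in E^*$ contribute $0$ since $\langle s_i, x^*\rangle = 0$. Hence the sum is at least $y_j \langle s_j, x^*\rangle > 0$, a contradiction. The delicate point in this version is lining up the index sets—using $E(M) \subseteq E^*$ to guarantee that the sign-free coordinates of $y$ (those indexed by $E(M)$) only multiply vanishing terms—so that the free signs cannot spoil the strict positivity.
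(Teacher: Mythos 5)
Your proposal is correct, and your primary argument takes a genuinely different---and more elementary---route than the paper. The paper's proof is dual and iterative: assuming $S(M)$ is incompatible, it invokes Gordan's Theorem (\cref{Gordan}) to extract a contradiction cycle $y_0$, argues that the support $E_0$ of $y_0$ must be contained in every set $E$ for which $S(M,E)$ is compatible (otherwise $y_0$ would solve $S^*(M,E)$), and then iterates this step via \cref{lem:GordanExt} on successively enlarged sets, terminating because $S(M,[m])$ is compatible. Your sum-of-witnesses argument replaces all of this with the purely primal observation that if $x_1$ solves $S(M,E_1)$ and $x_2$ solves $S(M,E_2)$, then $x_1+x_2$ solves $S(M,E_1\cap E_2)$ (your sign bookkeeping is right: equations add to equations, and each strict index gets one strictly positive and one nonnegative term), so the finite nonempty family $\mathcal{F}$ is closed under intersection and hence contains $E(M)=\bigcap_{E\in\mathcal{F}}E$. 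This is shorter, self-contained, and avoids duality entirely. What the paper's longer route buys is algorithmic content: the criterion it establishes along the way---every positive coordinate of any solution of the dual system must belong to $E(M)$---is precisely the engine of \cref{mrs}, whose proof explicitly reuses this incremental argument to compute $E(M)$ by solving at most $m$ linear programs; your primal proof establishes existence but does not by itself yield that criterion. Your supplementary dual argument is also correct and is closer in spirit to the paper's (it rests on the same key lemma, \cref{lem:GordanExt}), but it is a one-shot contradiction---pairing the dual certificate $y$ against a primal witness $x^*$ for some $E^*\in\mathcal{F}$ with $j\notin E^*$, and using $E(M)\subseteq E^*$ so that the unsigned coordinates of $y$ only multiply vanishing terms---rather than the paper's incremental construction.
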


\begin{proof}
If $S(M)$ is compatible, then $E(M)=\emptyset$, since $S(M,\emptyset) = S(M)$ is compatible. 
Assume, then, that $S(M)$ is not compatible.
\cref{Gordan} provides a solution $y_0$ of the dual system $S^*(M)$. 
Let $E_0 \neq \emptyset$ be the set of positive coordinates of $y_0$. 
We will show that $E_0 \subset E$ for any $E \subset [m]$ with $S(M,E)$ compatible. 
Indeed, if we assume the contrary, then $y_0$ is also a solution of $S^*(M,E)$, and we can derive that $S(M,E)$ is not compatible, forcing the contradiction. 
As any set $E$ making $S(M,E)$ compatible must contain $E_0$, we focus now on the system $S(M,E_0)$, which has strictly fewer inequalities than $S$. If it is compatible, then obviously $E(M)=E_0$. 
Otherwise, we keep transforming inequalities into equations iterating the previous arguments (using~\cref{lem:GordanExt}) until a compatible system is found. 
The process finishes because $S(M,[m])$ is compatible. 
Since all the elements we introduce in our relaxation set must be necessarily in any set making the system compatible, it is clear that the set obtained at the end is $E(M)$. 
\end{proof}


An intuitive explanation of why the minimum relaxation is unique can be easily obtained if one looks at the complementary problem. 
That is, given a system of weak homogeneous linear inequalities, decide which is the maximum number of constraints that can be satisfied strictly. 
The set of solutions of the system is a closed polyhedral cone $K$. 
If~$x_0$ is a point in the relative interior of $K$, then the desired maximal set of constraints consists exactly of those constraints that are strictly satisfied by~$x_0$. 
That is, finding this minimum relaxation is equivalent to finding a point in the relative interior of a (possibly not full-dimensional) polyhedral cone given by a set of weak (possibly redundant) inequalities. 

The results proven above can be generalized to systems of non-homogeneous inequalities. 
The main difference would be that $S(M,[m])$ is not necessarily compatible in this case and, therefore, there may be no relaxation at all. 
Nevertheless, whenever there exists a compatible relaxed system, the minimum relaxation is well-defined and can be computed in the same way as in the homogeneous case.

\subsection{The finest regular coarsening of a subdivision}

The algebra developed above will make it very easy to show that there exists a (well-defined) finest regular coarsening of a polyhedral subdivision. 
We next introduce some additional terminology concerning coarsenings. 

Given a polyhedral subdivision $\mathcal{S}$, and a coarsening $\mathcal{S}'$ of $\mathcal{S}$, the \emph{coarsening function} (from $\mathcal{S}$ to $\mathcal{S}'$) is the function $\kappa:\cells(\mathcal{S}) \rightarrow \cells(\mathcal{S}')$ such that $C \subset \kappa(C)$, for all $C \in \cells(\mathcal{S})$. 
Given two coarsenings $\mathcal{S}_1$ and $\mathcal{S}_2$ of $\mathcal{S}$, we say that $\mathcal{S}_1$ is \emph{finer} than $\mathcal{S}_2$ if $\mathcal{S}_2$ is a coarsening of $\mathcal{S}_1$. 
A coarsening is \emph{proper} if it has strictly fewer cells than the original subdivision. 
The \emph{trivial coarsening} is the one merging all the cells into a single one.

Using the definitions in~\cite{LRS}, the refinement relation induces a partial order on the set subdivisions. 
Furthermore, the restriction of this partial order to regular subdivisions is a lattice. 
This lattice is isomorphic to the face lattice of the secondary polytope of the point set.
However, as far as we know, not much work has been done concerning coarsenings of non-regular subdivisions.
The finest regular coarsening goes in that direction, and permits to map every non-regular subdivision to a regular one which is, in a specific sense, the most similar to it.

The \emph{finest regular coarsening} of a subdivision $\mathcal{S}$ of a point set $A$ is the subdivision obtained by the projection of the lower hull of $A^{\omega_0}$, where $\omega_0$ is a solution of the minimum relaxation of the regularity system of $\mathcal{S}$. 
The next theorem justifies the name in the previous definition.

\begin{theorem} 
\label{thm:coarsening}
Let $\mathcal{S}$ be a polyhedral subdivision, and $\mathcal{S}_0$ be the finest regular coarsening of $\mathcal{S}$. 
Then, $\mathcal{S}_0$ is a regular coarsening of $\mathcal{S}$ and all the regular coarsenings of $\mathcal{S}$ are coarsenings of $\mathcal{S}_0$.
\end{theorem}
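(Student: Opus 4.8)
The plan is to translate the claim about the finest regular coarsening directly into the language of the minimum relaxation set developed in Section~\ref{MRS}, and then let Theorem~\ref{minimalrelaxation} do the heavy lifting. The key dictionary is the one the authors already flagged: the regularity system of $\mathcal{S}$ is a system $S(M)$ of homogeneous (in)equalities in the height vector $\omega \in \R^A$, where the coplanarity equations cut out the ambient subspace and each local folding condition contributes one strict inequality $\scalprod{s_i}{\omega}>0$ of the form~(\ref{lfc}). Merging two adjacent cells $C,D$ across a wall $W$ amounts to forcing the point $a \choose \omega(a)$ to lie \emph{on} (rather than strictly above) the hyperplane of $A^\omega|_D$, i.e.\ to replacing the strict inequality of that wall by the equation with the same coefficients---precisely the operation of moving an index into the relaxation set $E$. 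So a regular coarsening of $\mathcal{S}$ corresponds to a subset $E \subset [m]$ of walls to be flattened such that the relaxed system $S(M,E)$ is compatible, and the height vector $\omega_0$ realizing the coarsening is any solution of that system.

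First I would make this correspondence precise in one direction: given a regular coarsening $\mathcal{S}'$ of $\mathcal{S}$ with coarsening function $\kappa$, the walls of $\mathcal{S}$ that get swallowed into the interiors of cells of $\mathcal{S}'$ determine a set $E_{\mathcal{S}'} \subset [m]$, and any height function witnessing the regularity of $\mathcal{S}'$ satisfies $S(M,E_{\mathcal{S}'})$ (the surviving folding conditions stay strict, the flattened ones become equalities, and the coplanarity equations persist because cells only grow). Hence $S(M,E_{\mathcal{S}'})$ is compatible. Conversely, by Proposition~\ref{foldingform} (folding form), any solution of a compatible relaxed system $S(M,E)$ is a height function whose lower hull projects to a regular subdivision that coarsens $\mathcal{S}$. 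This two-way translation is the technical core, and it is where I expect the main obstacle to lie: one must check that relaxing exactly the inequalities of $E$ yields a genuine polyhedral complex coarsening $\mathcal{S}$ (walls disappear cleanly, no cell becomes non-convex, the coplanarity conditions of the enlarged cells are automatically implied by those of $\mathcal{S}$ together with the flattened folding equalities), so that the combinatorial notion of ``coarsening'' and the algebraic notion of ``relaxation set'' agree. The fact that the regularity system may be defined even for non-face-to-face relaxations (as the authors note after the regularity system is introduced) is what lets this go through without circularity.

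With the dictionary established, the theorem becomes a corollary of the minimum-relaxation machinery. By definition, $\mathcal{S}_0$ is the subdivision obtained from a solution $\omega_0$ of the minimum relaxation $S(M,E(M))$, which is compatible by Theorem~\ref{minimalrelaxation}; so by the folding form $\mathcal{S}_0$ is a regular coarsening of $\mathcal{S}$, giving the first assertion. For the second, let $\mathcal{S}'$ be any regular coarsening, with associated set $E_{\mathcal{S}'}$ making $S(M,E_{\mathcal{S}'})$ compatible. Since $E(M)$ is the intersection of all such compatible relaxation sets, we have $E(M) \subset E_{\mathcal{S}'}$; in terms of the walls, every folding condition that is flattened in passing to $\mathcal{S}_0$ is also flattened in passing to $\mathcal{S}'$. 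Because each flattened wall is a merge of two cells, more flattened walls means a coarser subdivision, so $\mathcal{S}'$ is a coarsening of $\mathcal{S}_0$. I would make this last implication rigorous by arguing at the level of cells: two cells of $\mathcal{S}$ lie in a common cell of $\mathcal{S}_0$ iff they are connected by a path of walls in $E(M)$, and since $E(M) \subset E_{\mathcal{S}'}$ such a path also connects them in $\mathcal{S}'$, forcing $\kappa_{\mathcal{S}_0}$ to refine $\kappa_{\mathcal{S}'}$. This yields exactly that all regular coarsenings of $\mathcal{S}$ are coarsenings of $\mathcal{S}_0$, completing the proof.
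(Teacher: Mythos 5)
Your proof is correct and takes essentially the same route as the paper's: both rest on the dictionary that relaxing a local folding inequality into an equality corresponds to merging the two cells across that wall, so regular coarsenings of $\mathcal{S}$ correspond exactly to compatible relaxation sets of its regularity system, and then \cref{minimalrelaxation} (with $E(M)$ contained in every compatible relaxation set) delivers both that $\mathcal{S}_0$ is a regular coarsening and that it is refined by no other. The paper's version is terser, leaving implicit the two-way correspondence and the final step that containment of relaxation sets implies the coarsening relation, which you rightly spell out via wall-paths between cells; this is elaboration, not a different argument.
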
 

\begin{proof} 
Observe first that the relaxation of a constraint corresponding to a local folding condition in the regularity system of $\mathcal{S}$ converts this condition into a coplanarity condition (up to a non-zero scalar factor) for the two cells incident to the wall. 
Thus, the new system is equivalent to the regularity system of the polyhedral complex resulting from merging the two cells of $\mathcal{S}$ involved in the constraint. 
That is, coarsenings of $\mathcal{S}$ have regularity systems that are relaxations of the regularity system of $\mathcal{S}$.
In addition, a coarsening is regular if and only if its regularity system has a solution. 
Hence, $\mathcal{S}_0$ is a coarsening, is regular and it is the regular coarsening that merges the minimum number of cells, that is, the finest one. 
\end{proof}

It will come in handy later to say that a subdivision is \emph{completely non-regular} if its finest regular coarsening is its trivial coarsening. 
This implies, in particular, that every wall of the subdivision can appear in a contradiction cycle of its regularity system.

\subsection{Relation to the secondary polytope}
Note that, once we are convinced that the finest regular coarsening is well-defined for any subdivision of a finite point set $A$, it is easy to derive an alternative definition in terms of the secondary polytope $\Sigma(A)$ of a point set $A$. 
Considering the definitions of subdivision and refinement used in~\cite{LRS}, the faces of $\Sigma(A)$ correspond to regular subdivisions of $A$ and their inclusion relations correspond to coarsening relations. 
The vertices of $\Sigma(A)$ are the GKZ-vectors of all regular triangulations of $A$. 
Non-regular triangulations however have GKZ-vectors that are not vertices of $\Sigma(A)$. 
Moreover, for non-regular triangulations the function mapping a triangulation to its GKZ-vector may not even be injective~\cite{LRS}. 
In any case, the normal cone $\alpha(\mathcal{T})$ of a triangulation $\mathcal{T}$ in $\Sigma(A)$ is isomorphic to the secondary cone of $\mathcal{T}$~\cite{GKZ}. 
It is then not surprising that the finest regular coarsening of a triangulation $\mathcal{T}$ corresponds to the subdivision associated to the smallest face in $\Sigma(A)$ containing $\alpha(\mathcal{T})$. 
As stated in \cite{LRS}, the secondary cone can be similarly defined for general subdivisions (not only for triangulations). 
This cone will be contained in the linear subspace $L$ of the height-functions space defined by the coplanarity conditions. 
Of course, the cone will be also contained in the affine hull $H$ of the secondary fan, which is $(n-d-1)$-dimensional. 
If the dimension of $L \cap H$ is $k < n-d-1$, the subdivision is regular if its secondary cone is $k$-dimensional as well. 
Then, any height function in the relative interior of the secondary cone will certainly produce the subdivision. 
If the subdivision $\mathcal{S}$ is not regular, this cone will not be full-dimensional with respect $L \cap H$. 

\subsection{The regularity tree and recursively-regular subdivisions}
\label{TRP}

Roughly speaking, recursively-regular subdivisions are subdivisions that can be decomposed, via a regular coarsening, into recursively-regular pieces.
More formally, 
a polyhedral subdivision $\mathcal{S}$ is \emph{recursively regular} if it is regular or there exists a proper, non-trivial, and regular coarsening $\mathcal{S}'$ of $\mathcal{S}$ with coarsening function $\kappa$ such that $\kappa^{-1}(C)$ is recursively regular for each cell $C \in \mathcal{S}'$.

Note that the previous definition can be extended to polyhedral fans. 
We will use the notation $\mathfrak{R}(A)$ to refer to the set of recursively-regular subdivisions of a point configuration $A$. 
The class of all recursively-regular subdivisions of any point set will be denoted by $\mathfrak{R}$. 
We will show that $\mathfrak{R}$ is larger than the class of regular subdivisions and that the regularity tree can even have arbitrary depth. 

To proceed, we need to introduce some notation and technical definitions. 
Given a subset $\mathcal{C}$ of $\cells(\mathcal{S})$, we denote by $|  \mathcal{C} | $ the ground set $\cup_{C \in \mathcal{C}}  \, C$ covered by these cells. 
Similarly, if $ \mathcal{S}$ is a subdivision, $|  \mathcal{S} | $ will denote the union of the cells of $\mathcal{S}$. 
A \emph{subdivision tree} of a subdivision $\mathcal{S}$ of a point set $A$ is a rooted tree whose vertices are subsets of $\cells(\mathcal{S})$, whose root is $ \cells( \mathcal{S})$, and such that if the children of $\mathcal{C}$ are $\mathcal{C}_1,\dots,\mathcal{C}_l$, then $|  \mathcal{C}_1| ,\dots,|  \mathcal{C}_l| $ are the cells of a polyhedral subdivision of $A \cap |  \mathcal{C} | $. 
A subdivision tree is called \emph{regular} if the subdivisions of $A \cap |  \mathcal{C} | $ used to split the nodes of the tree are all regular. 

Note that a subdivision is recursively-regular if and only if it has a regular subdivision tree. 
However, a subdivision can have many subdivision trees, and even many regular subdivision trees. 
Fortunately, we can define a canonical one, which will be later used to decide if a subdivision is recursively regular:
The \emph{regularity tree} of the subdivision $\mathcal{S}$ is the subdivision tree created by the following recursion.
\begin{enumerate}[(a)]
\item If a subdivision $\mathcal{S}$ is regular or its finest regular coarsening is trivial, its regularity tree is the tree whose single node is $|\mathcal{S}|$.
\item The regularity tree of a non-regular subdivision $\mathcal{S}$ with a non-trivial finest regular coarsening $\mathcal{S}_0$ is obtained by appending to its trivial coarsening the regularity tree of $\kappa^{-1}(C)$, for each cell $C \in \mathcal{S}_0$.
\end{enumerate}

\cref{Fig:rectree} exhibits an example of a regularity tree. 
The figure shows a triangulation in $\mathfrak{R}$ which needs two levels of recursion to fit the definition of recursively-regular subdivision. 
The coordinates of this example and a proof that the finest regular coarsening of the depicted subdivision is the subdivision defined by the second level of the tree are provided in~\cref{Ap:twolevels}. 
Note that the example consists of a ``pinwheel'' triangulation (refining the ``mother of all examples'' in~\cite{LRS}) inserted into a triangle of a bigger copy of the pinwheel triangulation. 
The insertion procedure can be repeated recursively to obtain a triangulation whose regularity tree has a number of levels linear in the number of vertices.

Note that the leaves of the regularity tree of $\mathcal{S}$ are a partition of $\cells(\mathcal{S})$. 
We say that a \emph{leaf $\mathcal{C}$ is regular}, respectively \emph{completely non-regular}, if the subdivision induced by~$\mathcal{S}$ on $\mathcal{C}$ is regular, respectively completely non-regular.  
By our definition, there are two possibilities for the leaves of the the regularity tree: they are either regular or completely non-regular. 

\begin{figure}[th]
\centering
 \includegraphics[scale=0.7,page=6]{./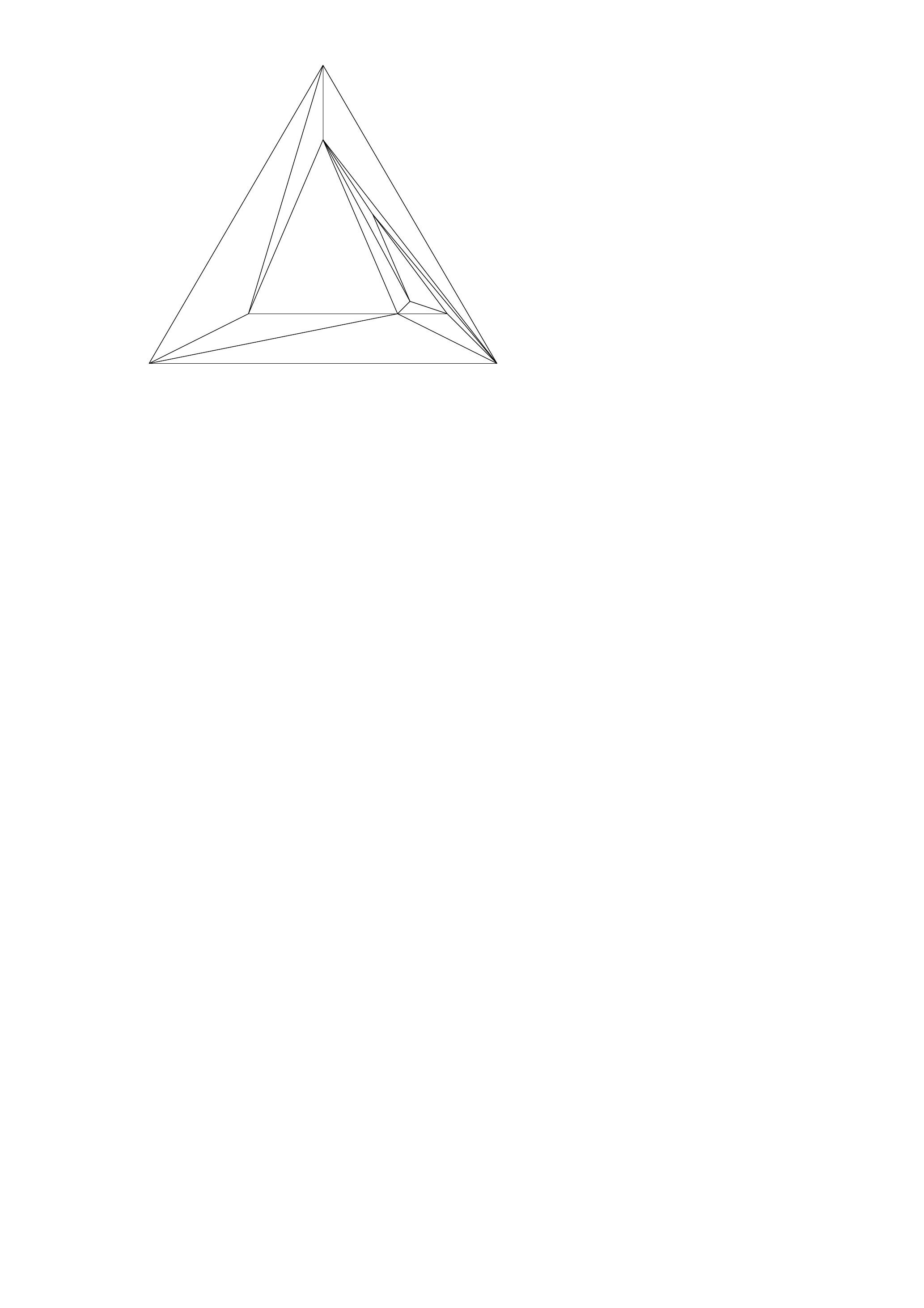}
 \caption{A recursively-regular subdivision and a sketch of its regularity tree.}
 \label{Fig:rectree}
\end{figure}

The following theorem relates the regularity tree and the recursive regularity of a subdivision. 

\begin{theorem}
\label{thm:recursive}
A polyhedral subdivision $\mathcal{S}$ is recursively regular if and only if	 the leaves of its regularity tree are regular.
\end{theorem}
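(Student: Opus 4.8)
The statement is an ``if and only if,'' so I would prove the two directions separately. The key conceptual tool is already in place: by the remark following the definition of the subdivision tree, a subdivision is recursively regular precisely when it admits \emph{some} regular subdivision tree. Thus the entire content of the theorem is that the specific, canonically-constructed regularity tree ``detects'' recursive regularity: it has only regular leaves exactly when some regular subdivision tree exists at all. The two directions therefore amount to (i) showing the canonical construction is a legitimate witness when its leaves are regular, and (ii) showing that if \emph{any} witness exists, the canonical one must also succeed.

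For the \emph{backward} direction (regular leaves $\Rightarrow$ recursively regular), the plan is to observe that the regularity tree is itself a subdivision tree by construction: at each non-leaf node the children arise by splitting $\kappa^{-1}(C)$ over the cells $C$ of the finest regular coarsening $\mathcal{S}_0$, and $\mathcal{S}_0$ is regular by~\cref{thm:coarsening}. Hence every internal splitting is regular. If in addition all leaves are regular, then every node of the tree is split by a regular subdivision, so the regularity tree is a \emph{regular} subdivision tree. Since having a regular subdivision tree is equivalent to recursive regularity, $\mathcal{S}$ is recursively regular. I would present this as a short induction on the height of the tree, using the recursive definition of recursive regularity directly: the root is split by the regular coarsening $\mathcal{S}_0$, and each piece $\kappa^{-1}(C)$ is recursively regular by the inductive hypothesis applied to its own (smaller) regularity tree.

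For the \emph{forward} direction (recursively regular $\Rightarrow$ regular leaves), the plan is again induction, but here the crux is a \emph{maximality} argument, and this is the step I expect to be the main obstacle. Suppose $\mathcal{S}$ is recursively regular, so there is \emph{some} proper non-trivial regular coarsening $\mathcal{S}'$ witnessing this, with each $\kappa'^{-1}(C')$ recursively regular. The regularity tree, however, uses the \emph{finest} regular coarsening $\mathcal{S}_0$ rather than the arbitrary witness $\mathcal{S}'$. The essential fact I would invoke is the second assertion of~\cref{thm:coarsening}: every regular coarsening of $\mathcal{S}$ is a coarsening of $\mathcal{S}_0$, so $\mathcal{S}_0$ is finer than (or equal to) $\mathcal{S}'$. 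I must argue that refining the coarsening preserves recursive regularity of the pieces, i.e.\ that each cell of $\mathcal{S}_0$ restricts to a recursively-regular subdivision. Intuitively, $\mathcal{S}_0$ refines $\mathcal{S}'$, so each cell of $\mathcal{S}_0$ lies inside a single cell $C'$ of $\mathcal{S}'$, and the restriction of $\mathcal{S}$ to that $\mathcal{S}_0$-cell is a restriction of the recursively-regular subdivision $\kappa'^{-1}(C')$; recursive regularity should pass to such sub-pieces. I would isolate this as the key lemma and then feed it into the induction: since $\mathcal{S}_0$ is a proper regular coarsening (note $\mathcal{S}$ non-regular forces $\mathcal{S}_0$ proper, and recursive regularity of a non-regular $\mathcal{S}$ forces $\mathcal{S}_0$ non-trivial via the existence of the proper non-trivial witness $\mathcal{S}'$ together with $\mathcal{S}_0 \preceq \mathcal{S}'$), each piece $\kappa^{-1}(C)$ over $C \in \mathcal{S}_0$ is recursively regular, and by induction on the number of cells its own regularity tree has regular leaves. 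Assembling these subtrees is exactly step~(b) of the regularity-tree recursion, so all leaves of the regularity tree of $\mathcal{S}$ are regular.

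The delicate points to nail down are the base case and the well-foundedness of the induction. The base case is when $\mathcal{S}$ is regular (a single regular leaf) or when $\mathcal{S}_0$ is trivial; in the latter case I must show a recursively-regular subdivision cannot have a trivial finest regular coarsening unless it is already regular, which follows because a recursively-regular non-regular subdivision admits a proper non-trivial regular coarsening $\mathcal{S}'$, and $\mathcal{S}_0 \preceq \mathcal{S}'$ prevents $\mathcal{S}_0$ from being trivial. The induction terminates because each child piece $\kappa^{-1}(C)$ has strictly fewer cells than $\mathcal{S}$ whenever $\mathcal{S}_0$ is a proper non-trivial coarsening. The genuinely non-routine ingredient is the lemma that recursive regularity is inherited by the restriction to a cell of a finer regular coarsening; everything else is bookkeeping organized around~\cref{thm:coarsening}.
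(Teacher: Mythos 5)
Your plan is correct and follows essentially the same route as the paper's proof: the \emph{if} direction is immediate because the regularity tree itself serves as a regular subdivision tree, and the \emph{only if} direction is an induction on the number of cells that invokes the second part of \cref{thm:coarsening} to place the finest regular coarsening below the arbitrary witness coarsening, together with the key lemma that recursive regularity is inherited under restriction to a cell (which the paper justifies by observing that regularity and coarsening relations behave well under intersection with a polyhedron). Your explicit handling of the base case—ruling out a trivial finest regular coarsening for a non-regular recursively-regular subdivision via $\mathcal{S}_0$ being finer than the non-trivial witness—is left implicit in the paper, but it is the same argument.
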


\begin{proof}
If all leaves are regular, the regularity tree itself certifies the recursive regularity of $\mathcal{S}$, proving the \emph{if} direction.

For the \emph{only if}, it will be proved that the leaves of the regularity tree of any subdivision in $\mathfrak{R}$ are regular. 
We do this by induction on the number of cells of the subdivision. 
The base case is when the subdivision consists of a single cell $C$. 
In this case, the only leaf of its regularity tree is $C$, which is regular. 

For the inductive step, let $\mathcal{S}$ be in $\mathfrak{R}$, and assume that the regularity tree of any recursively-regular subdivision with fewer cells than $\mathcal{S}$ has regular leaves. 
Let $\bar{\mathcal{S}}$ be a regular coarsening with coarsening function $\bar{\kappa}$ splitting $\mathcal{S}$ into smaller recursively-regular subdivisions. 
Indeed, by definition, there is a regular subdivision tree of $\mathcal{S}$ representing a set of coarsenings certifying that it is recursively regular. 
We want to show that the regularity tree is a valid certificate as well. 
The second part of \cref{thm:coarsening} asserts that $\bar{\mathcal{S}}$ is a coarsening of the finest regular coarsening $\mathcal{S}_0$ of $\mathcal{S}$. 
This implies that each cell $C \in \mathcal{S}_0$ is contained in some cell $C' \in \bar{\mathcal{S}}$, such that $\mathcal{S}$ restricted to $C'$ is recursively regular. 
Note that refinement relations and regularity behave well with respect to restrictions to polyhedra.
That is, the subdivision obtained by intersecting all the faces of a regular subdivision with a polyhedron is regular as well, and the intersection of a coarsening with a polyhedron is a coarsening of the original subdivision, intersected with the polyhedron. 
Hence, recursive regularity behaves well with respect to restriction to polyhedra and it follows that $\mathcal{S}$ restricted to $C \subset C'$ is recursively regular. 
By induction hypothesis, the leaves of the regularity tree of $\mathcal{S}$ restricted to $C$ are regular, for every $C \in \cells(\mathcal{S}_0)$. 
Since the leaves of the regularity tree of $\mathcal{S}$ are the leaves of the regularity trees of its children, this completes the proof. 
\end{proof}

\begin{figure}[ht]
\begin{center}
\subfigure[Non-regular triangulation]{\includegraphics[scale=0.6,page=1]{./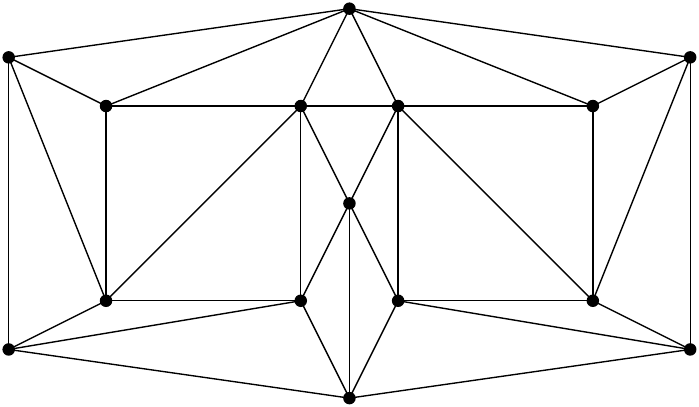} \label{Fig:doublegen}}
\qquad \qquad
\subfigure[Finest regular coarsening]{\includegraphics[scale=0.6,page=2]{./doublegen.pdf} \label{coarsdouble}}
\end{center}
\caption{Two-dimensional recursively-regular and non-regular triangulation.}
\end{figure}

We present now some properties of the recursively-regular subdivisions. 

\begin{proposition}
\label{recareacyclic}
Let $A$ be a finite point set. 
Every regular subdivision of $A$ is recursively-regular. 
Every recursively regular subdivision of $A$ is acyclic.
The converse of the previous statements is in general not true. 
\end{proposition}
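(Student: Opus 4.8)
The statement bundles three assertions; the first is immediate, the second carries the real content, and the third asks for two examples. For the first assertion the plan is simply to invoke the definition: a regular subdivision satisfies the clause ``it is regular'' in the definition of recursive regularity, so every regular subdivision lies in $\mathfrak{R}$ with nothing further to check. For the second assertion I would induct on the number of cells, using \cref{acyclthm} (Edelsbrunner) as the engine. The base case of a single cell is vacuously acyclic. For the inductive step, let $\mathcal{S}\in\mathfrak{R}$. If $\mathcal{S}$ is regular it is acyclic by \cref{acyclthm}; otherwise the definition supplies a proper, non-trivial, regular coarsening $\mathcal{S}'$ with coarsening function $\kappa$ such that every block $\kappa^{-1}(C)$, $C\in\mathcal{S}'$, is recursively regular. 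Since $\mathcal{S}'$ is non-trivial, each block has strictly fewer cells than $\mathcal{S}$, so by induction each $\kappa^{-1}(C)$ is acyclic, while $\mathcal{S}'$ itself is acyclic by \cref{acyclthm}.

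The crux is a \emph{lifting lemma}: if $C$ is in front of $D$ in $\mathcal{S}$ from a point $x$ and $\kappa(C)\neq\kappa(D)$, then $\kappa(C)$ is in front of $\kappa(D)$ in $\mathcal{S}'$ from $x$. To prove it, take the witnessing halfline $\ell$ and points $p\in\operatorname{relint}(C)$ and $q\in\operatorname{relint}(D)$ on $\ell$ with $p$ strictly between $x$ and $q$. As $C\subseteq\kappa(C)$ and $D\subseteq\kappa(D)$ are full-dimensional, $\operatorname{relint}(C)\subseteq\operatorname{int}\kappa(C)$ and $\operatorname{relint}(D)\subseteq\operatorname{int}\kappa(D)$; these interiors are disjoint convex open sets (distinct cells of $\mathcal{S}'$), so $\ell$ meets them in two disjoint open intervals containing $p$ and $q$ respectively, and $p$ before $q$ forces $\kappa(C)$'s interval to precede $\kappa(D)$'s, exhibiting $\kappa(C)$ in front of $\kappa(D)$. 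Armed with this, suppose for contradiction that $\mathcal{S}$ has a cycle $C_1,\dots,C_k,C_1$ in the in-front relation from some $x$. If all images $\kappa(C_i)$ equal one cell $C^{\ast}$, the whole cycle lies inside $\kappa^{-1}(C^{\ast})$; since the in-front relation depends only on the cells and $x$, this is a cycle of $\kappa^{-1}(C^{\ast})$, contradicting its acyclicity. Otherwise the lemma gives, for each $i$, that $\kappa(C_i)$ is in front of or equal to $\kappa(C_{i+1})$; deleting the ``equal'' steps leaves a closed walk of positive length in the in-front digraph of $\mathcal{S}'$, hence a directed cycle there, contradicting the acyclicity of the regular $\mathcal{S}'$ (a length-two cycle already contradicts antisymmetry). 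A cycle from a direction $v$ is handled identically, with full lines of direction $v$ replacing halflines from $x$.

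For the third assertion I would supply the two counterexamples separately. The converse of the first implication (recursively regular $\Rightarrow$ regular) fails through the examples already on display: the triangulations of \cref{Fig:rectree} and \cref{Fig:doublegen} are recursively regular but not regular, their non-trivial finest regular coarsenings (e.g.\ \cref{coarsdouble}) splitting them into regular pieces. The converse of the second (acyclic $\Rightarrow$ recursively regular) requires a subdivision that is acyclic yet not recursively regular, and here the plan is to exhibit a \emph{completely non-regular} subdivision that is acyclic. By \cref{thm:recursive} a completely non-regular subdivision with more than one cell is not recursively regular, because its regularity tree is the single node $|\mathcal{S}|$, a leaf that fails to be regular; it then suffices to produce such a subdivision whose in-front relation carries no cycle, for instance a planar completely non-regular triangulation for which acyclicity is checked directly.

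The main obstacle I anticipate lies in two places. For the second assertion, the delicate point of the lifting lemma is guaranteeing that ``in front'' survives the coarsening rather than degenerating when $\ell$ grazes shared walls; routing the argument through the relative interiors, which sit inside the interiors of the coarse cells, is precisely what makes it go through cleanly. For the third assertion, the genuine work is constructing and certifying the acyclic but non-recursively-regular example, namely verifying both that its finest regular coarsening is trivial (via a contradiction cycle meeting every wall) and that no radial in-front cycle occurs.
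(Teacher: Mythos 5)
Your proposal is correct and follows essentially the same route as the paper's proof: the same induction on the number of cells, the same appeal to \cref{acyclthm} for the regular coarsening $\mathcal{S}'$, the same projection of a hypothetical cycle under $\kappa$ (your lifting lemma simply makes explicit the step the paper compresses into ``the cycle induces another one in $\mathcal{S}'$''), and the same counterexamples---\cref{Fig:doublegen} for the first converse and a completely non-regular acyclic triangulation (the paper's \cref{Fig:anrr}, certified in \cref{app:notrecreg}) for the second. The only differences are ones of detail: you spell out the reduction of the projected closed walk to a genuine directed cycle, while leaving the explicit acyclic, non-recursively-regular example to be constructed and certified, which is exactly what the paper defers to its appendix.
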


\begin{proof}
Note first that regular subdivisions are in $\mathfrak{R}(A)$ by directly applying the definition. 
We will prove that any $\mathcal{S}$ in $\mathfrak{R}$ must be acyclic by induction on its number of cells. 
For the base case, we use that a single-cell subdivision is always acyclic.  
If $\mathcal{S}$ has more than one cell, we distinguish two cases. 
If $\mathcal{S}$ itself is regular, then \cref{acyclthm} shows that it must be acyclic. 
Otherwise, there exists a regular coarsening $\mathcal{S}'$ of $\mathcal{S}$ with coarsening functions $\kappa$. 
Assume for the sake of contradiction that $\mathcal{S}$ contains a cycle and consider the image by $\kappa$ of the involved faces. 
If this image contains more than one cell, the cycle induces another one in $\mathcal{S}'$ leading to a contradiction with its assumed regularity. 
So the cycle must be contained in $\kappa^{-1}(C)$ for a single cell $C \in \mathcal{S}'$. 
But $\kappa^{-1}(C)$ is a recursively-regular subdivision having strictly fewer cells than $\mathcal{S}$ and, hence, acyclic by the induction hypothesis. 

\cref{Fig:doublegen} shows a non-regular triangulation that belongs to $\mathfrak{R}$. 
A certificate for its non-regularity is included in~\cref{Ap:nonreg}, while that it belongs to $\mathfrak{R}$ is straightforward after observing that the coarsening in~\cref{coarsdouble} is regular. 
For the properness of the second inclusion, we refer to the example shown in~\cref{Fig:anrr}, which shows an acyclic subdivision that does not belong to~$\mathfrak{R}$. 
Its acyclicity and that it does not belong to~$\mathfrak{R}$ will be certified in~\cref{app:notrecreg}. 
\end{proof}

\begin{figure}
\begin{center}
\includegraphics[page=1,scale=0.6]{./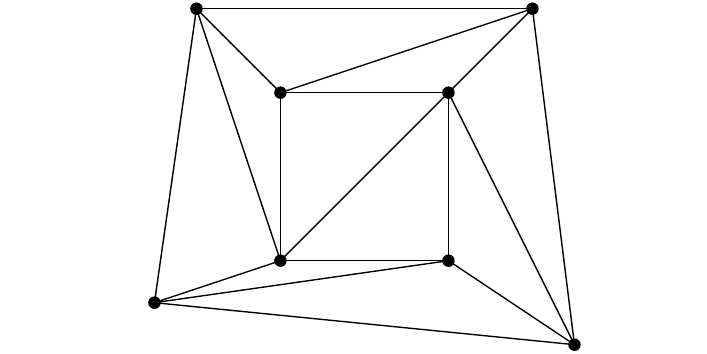} 
\end{center}
\caption{An acyclic triangulation that is not recursively regular.}
\label{Fig:anrr}
\end{figure}


The next proposition illustrates that $\mathfrak{R}$ includes some ``pathological'' triangulations. 
More precisely, we will show that there are triangulations in $\mathfrak{R}$ that are not connected in the graph of flips of its vertex set. 
To prove this, we will simply show that the non-regular triangulations used by Santos in \cite{Santos2005} are indeed in $\mathfrak{R}$.

\begin{proposition}
\label{prop:noflips}
There exists a point set $A \subset \R^5$ whose recursively-regular triangulations are not connected by flips. 
\end{proposition}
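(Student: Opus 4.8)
The plan is to borrow Santos's construction wholesale and upgrade his disconnection statement to one about $\mathfrak{R}(A)$. Santos exhibits a point set $A \subset \R^5$ whose flip graph is disconnected; concretely, there is a non-regular triangulation $T$ of $A$ that is isolated (admits no bistellar flip) and therefore constitutes a connected component of the flip graph all by itself. On the other hand, $A$ certainly admits regular triangulations (take a generic height function), and by the theorem of Gelfand, Kapranov and Zelevinsky all regular triangulations of $A$ lie in a single connected component of the flip graph. Since every regular triangulation is recursively regular (\cref{recareacyclic}), the entire proof reduces to the single claim that $T \in \mathfrak{R}(A)$: once this is established, $\mathfrak{R}(A)$ meets two distinct flip components—the isolated one containing $T$ and the one containing the regular triangulations—so the recursively-regular triangulations of $A$ cannot be joined by any flip path.

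To prove $T \in \mathfrak{R}(A)$ I would build its regularity tree explicitly and apply \cref{thm:recursive}. First I would compute the finest regular coarsening $\mathcal{S}_0$ of $T$ by solving the minimum relaxation of its regularity system, as justified by \cref{thm:coarsening}. What must be checked is that $\mathcal{S}_0$ is a proper, non-trivial coarsening—equivalently, that $T$ is not completely non-regular—and that, for each cell $C \in \mathcal{S}_0$, the restriction of $T$ to $C$ is again recursively regular. Here the highly structured nature of Santos's example is decisive: his configuration is assembled by gluing small, combinatorially controlled gadgets, so each piece $\kappa^{-1}(C)$ is either small enough to be regular outright or splits further under its own finest regular coarsening, and the recursion terminates at regular leaves. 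By \cref{thm:recursive} it then suffices to certify that every leaf of the resulting regularity tree is regular.

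The main obstacle is precisely this leaf-by-leaf verification: one must rule out that any cell reached by the recursion is completely non-regular, i.e. that no restricted subdivision has a trivial finest regular coarsening. This amounts to checking feasibility of the relaxed regularity systems $S(M,E)$ of the restricted subdivisions, and it is where the exact coordinates of Santos's points—not merely the combinatorics of $T$—must be used, since completely non-regular leaves are exactly the obstruction to recursive regularity. Once every leaf is confirmed regular, we conclude $T \in \mathfrak{R}(A)$; combined with the flip isolation of $T$ and the flip connectivity of the (recursively regular) regular triangulations, this shows that the recursively-regular triangulations of $A$ are not connected by flips.
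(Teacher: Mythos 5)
Your global skeleton is sound and in fact matches the paper's: take Santos's five-dimensional configuration $A$, note that its regular triangulations are all in one flip component (Gelfand--Kapranov--Zelevinsky) and are recursively regular (\cref{recareacyclic}), and then show that a flip-isolated, non-regular triangulation $T$ from \cite{Santos2005} lies in $\mathfrak{R}(A)$. The genuine gap is that you never prove the one claim everything hinges on, namely $T\in\mathfrak{R}(A)$. Your plan is to compute the finest regular coarsening of $T$ from the coordinates by solving the minimum relaxation of its regularity system, and then to verify leaf by leaf that the regularity tree of $T$ has only regular leaves (\cref{thm:recursive}); but you do not carry out any of this, and the sentence asserting that each piece $\kappa^{-1}(C)$ ``is either small enough to be regular outright or splits further under its own finest regular coarsening, and the recursion terminates at regular leaves'' is precisely the statement to be proved, offered without justification. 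As written, this is a verification scheme (decidable in principle by the paper's own algorithms), not a proof, and nothing in the proposal identifies the structure of Santos's example that would make the check succeed.

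The paper closes exactly this gap with a structural argument that needs no coordinate computation and only one level of recursion. Santos's triangulations all refine the prism $\mathcal{P}$ over the central subdivision $\mathcal{B}$ of the 24-cell $Q$: the subdivision $\mathcal{B}$ is central, hence regular, and consists of 24 pyramids over regular octahedra, so $\mathcal{P}$ is regular as well (a prism over a regular subdivision is regular). The restriction of any triangulation in $\mathfrak{T}$ to a cell of $\mathcal{B}$ is regular because a triangulation of a pyramid is regular if and only if the induced triangulation of its base is, and octahedra admit only regular triangulations; together with the fact that triangulations of prisms over simplices are regular, this makes the restriction to every cell of $\mathcal{P}$ regular. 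Hence each triangulation in $\mathfrak{T}$ refines a regular subdivision with regular restrictions, which certifies recursive regularity directly from the definition. Note that this also shows your insistence on the finest regular coarsening and the regularity tree is an unnecessary detour: by definition, \emph{any} proper regular coarsening whose restrictions are recursively regular (here, outright regular) suffices, and the freedom to pick the natural coarsening $\mathcal{P}$ rather than the canonical finest one is what makes the argument workable by hand.
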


\begin{proof}
Santos constructs in~\cite{Santos2005} a set of triangulations $\mathfrak{T}$ of a five-dimensional point set $A$ that are pairwise disconnected in its graph of flips. 
We show that all the triangulations in $\mathfrak{T}$ are recursively regular. 
The convex hull of $A$ is a prism over a polytope $Q$ called the 24-cell. 
The polytope~$Q$ is four-dimensional and has 24 facets, which are regular octahedra. 
All the triangulations in $\mathfrak{T}$ are refinements of the prism $\mathcal{P}$ (in the sense of~\cite[Definition~4.2.10]{LRS}) over a subdivision $\mathcal{B}$ of~$A \cap Q$.
The subdivision $\mathcal{B}$ is a central subdivision of $Q$, it is thus regular (see~\cite[Section~9.5]{LRS}) and consists of 24 pyramids over octahedra. 
Therefore, the prism $\mathcal{P}$ is regular as well (because the prism over a regular subdivision is regular~\cite[Lemma 7.2.4
]{LRS}). 
Each cell of $\mathcal{B}$ is triangulated in a specific way for every triangulation in $\mathfrak{T}$. 
However, since a triangulation of a pyramid is regular if and only if the triangulation induced on its base is regular (see \cite[Observation~4.2.3]{LRS}), and the bases of the pyramids are regular octahedra (which are known to have only regular triangulations), the restriction any triangulation in $\mathfrak{T}$ to any cell of $\mathcal{B}$ is regular. 
Hence, the restriction of any triangulation in $\mathfrak{T}$ to every cell of $\mathcal{P}$ is regular as well, since a triangulation of a prism over a simplex is regular (\cite[Section~6.2]{LRS}).
Thus, every triangulation in $\mathfrak{T}$ is recursively regular. 
Indeed, each triangulation in $\mathfrak{T}$ is a refinement of a regular subdivision $\mathcal{P}$, and its restriction to any cell of $\mathcal{P}$ is regular.
%
\end{proof}

In fact, the previous proposition shows that there is a point set $A$ with at least 12 triangulations in $\mathfrak{R}(A)$ that are pairwise disconnected and disconnected from any regular triangulation in the graph of flips of $A$, as observed in~\cite{Santos2005}. 


\subsection{Algorithms}
\label{alg}
We study how the problem of finding the minimum relaxation of a system, which is equivalent to finding a point in the relative interior of polyhedral cone given by a set of inequalities. 
This problem has been rediscovered several times and an approach to it can be found for instance in~\cite{Fukuda}. 
We give an algorithm starts from a compatible dual system and moves towards compatible primal using the algebraic machinery introduced before.

\begin{proposition}[\textit{folklore}]
\label{mrs}
Let $M \in \R^{m \times n}$.
The minimum relaxation set $E(M)$ of system $S(M)$ (consisting of $m$ linear inequalities on $n$ variables) can be computed solving at most $m$ linear programs in $m$ variables and $n$ constraints.
\end{proposition}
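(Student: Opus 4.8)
The plan is to read \cref{minimalrelaxation} constructively and turn its iterative argument into an algorithm whose steps are linear programs, then to bound their number. I would maintain a relaxation set $E$, initialised to $\emptyset$, under the invariant $E \subseteq E(M)$, and repeat the following until success: test whether the current relaxed system $S(M,E)$ is compatible; if it is, halt and return $E$; if it is not, extract a contradiction cycle from the incompatibility certificate, adjoin its positive coordinates to $E$, and iterate. Since the argument in \cref{minimalrelaxation} already shows that the indices added this way must lie in every compatible relaxation set, the returned $E$ will be exactly $E(M)$.

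Each step is one linear program phrased on the dual side, matching the stated dimensions. By \cref{lem:GordanExt}, $S(M,E)$ is incompatible precisely when $S^*(M,E)$ is compatible, i.e.\ when some $y \in \R^m$ satisfies $M^\top y = 0$, with $y_i \geq 0$ for all $i \in [m]\setminus E$ and $y_j > 0$ for at least one $j \in [m]\setminus E$. I would both detect and produce such a $y$ by solving
\[
\max\ \textstyle\sum_{j \in [m]\setminus E} y_j \quad\text{subject to}\quad M^\top y = 0,\ \ y_j \in [0,1]\ (j \notin E),\ \ y_i \in \R\ (i \in E).
\]
This program has the $m$ unknowns $y_1,\dots,y_m$ and the $n$ equality constraints $M^\top y = 0$, the box conditions being mere bounds on the variables, which is exactly ``$m$ variables and $n$ constraints''. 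Its optimum is positive iff $S^*(M,E)$ is compatible; in that case the positive coordinates of an optimal $y$ outside $E$ form a nonempty set $E_0$, and I set $E \leftarrow E \cup E_0$. An optimum of $0$ signals that $S(M,E)$ is compatible.

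For correctness I would invoke the central step of the proof of \cref{minimalrelaxation}: the support of any solution of $S^*(M,E)$ is contained in every index set whose relaxation makes the system compatible, hence in $E(M)$; this preserves the invariant $E \subseteq E(M)$, so no superfluous constraint is ever relaxed, and at termination the definition of $E(M)$ as the intersection of all compatible relaxation sets forces $E = E(M)$. For the count, every program either has optimum $0$ and terminates the algorithm, or has a positive optimum and strictly enlarges $|E|$ by at least one index; there are at most $|E(M)| \le m$ such increasing programs. When $E(M) = [m]$ the last increase reaches $[m]$, whose relaxation $S(M,[m])$ is compatible a priori (e.g.\ $x = 0$ solves it), so no terminating program is run and the total is at most $m$; when $E(M) \subsetneq [m]$ we have $|E(M)| \le m-1$, so even counting the terminating program the total stays at most $m$.

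The main obstacle is the bookkeeping that makes this a genuine, bounded linear program: the dual cone $\{y \geq 0 : M^\top y = 0\}$ is homogeneous, so I must normalise (here via the bound $y_j \le 1$) to ensure that a positive objective value reliably certifies a contradiction cycle and that the argmax yields usable support, while leaving the variables $y_i$ with $i \in E$ free in sign. I would also have to check that maximising $\sum_{j \notin E} y_j$ faithfully captures the existential clause ``$y_j > 0$ for some $j \notin E$'' of $S^*(M,E)$ rather than a disjunction, and confirm that adjoining the entire support of one optimal $y$ per iteration is both safe, by the containment in $E(M)$ above, and efficient, keeping the total at $m$ programs.
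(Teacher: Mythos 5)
Your proposal is correct and follows essentially the same route as the paper's proof: both iteratively solve the dual feasibility test of \cref{lem:GordanExt} as a linear program maximizing $\sum_{j \in [m]\setminus E} y_j$ subject to $M^\top y = 0$, enlarge $E$ by the positive coordinates of the optimal solution outside $E$, and bound the number of iterations by $m$ via the argument from \cref{minimalrelaxation}. The only (harmless) differences are your normalization by box bounds $y_j \in [0,1]$ in place of the paper's single added constraint $\sum_{i \in [m]} y_i \le 1$, and your slightly more careful accounting of the terminating iteration.
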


\begin{proof}
In the proof of~\cref{minimalrelaxation} we show that if a coordinate can take a positive value in a solution of $S^*(M)$, then $E(M)$ must include the corresponding index. 
It is also argued that the minimal relaxation of the system can be obtained by incrementally applying this criterion. 
We will convert here this incremental procedure into an algorithm that uses linear programming. 
We start setting $E = \emptyset$ and we insert into the set $E$ the indices that must belong to $E(M)$.
The compatibility of $S(M,E)$ is related to its dual system $S^*(M,E)$. 
To check whether $S^*(M,E)$ has a solution, we solve the linear program 
\[
\textit{maximize} \; \sum \limits_{i \in [m] \setminus E} y_i, 
\] 
subject to the linear constraints given by the system $S^*(M,E)$ plus the condition $\sum_{i \in [m]} y_i \le 1$, which ensures that the maximum is bounded. 
For the ease of argumentation, we add a linear inequality in order to make the dual feasible region bounded. 
If the optimum value is zero, then none of the variables in $[m] \setminus E$ can attain a positive value under the constraints of $S^*(M,E)$, and thus it is incompatible. 
Consequently, the system $S(M,E)$ is compatible and~$E$ is the minimal relaxation (because we have only added an index to $E$ if we know that the index must be in any relaxation set making the system compatible). 
The converse is also true: if the function takes a positive value, a non-empty set of variables $ E_0  \subset [m] \setminus E$ take positive values. 
Therefore, as argued in the proof of~\cref{minimalrelaxation}, we know that $E_0 \subset E(M)$. 
Hence, we add the indices of $E_0$ to $E$ and iterate the process. 
At each iteration, we discover at least one new index that belongs to $E(M)$ and, thus, at most $m$ iterations are needed. 
\end{proof}

%

With help of the previous theorem, it becomes easy to prove that the finest regular coarsening of a subdivision can be efficiently computed. 
The best known bound for linear programming is polynomial only if the total number of bits $L$ needed to encode the coefficients is counted as input size (as in the Turing machine model). 
Alternatively, we can say that a linear program can be solved in time polynomial in the number of variables and $L$. 
We choose this second option to formalize the following bound.  

\begin{corollary} \label{frccomp}
Let $\mathcal{S}$ be subdivision of a point set $A$ in any fixed dimension and let $L$ be the total number of bits necessary to encode the coordinates of $A$.
The finest regular coarsening of~$\mathcal{S}$ can be computed in time polynomial in $|  A | $ and $L$.
\end{corollary}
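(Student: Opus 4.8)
The plan is to reduce the computation to the linear-algebraic machinery of \cref{MRS} and read off the answer through \cref{thm:coarsening}. By the definition of the finest regular coarsening, it suffices to (i) write down the regularity system of $\mathcal{S}$: its coplanarity equations cut out a linear subspace of $\R^{A}$, and in a basis of that subspace the local folding conditions become a pure strict-inequality system $S(M)$ of the form \eqref{system}; (ii) compute the minimum relaxation set $E(M)$ via \cref{mrs}; (iii) extract a height vector $\omega_0$ in the relative interior of the associated secondary cone, i.e.\ a solution of $S(M,E(M))$, which is compatible by \cref{minimalrelaxation}; and (iv) project the lower hull of $A^{\omega_0}\subset\R^{d+1}$ to obtain the cells (equivalently, merge the cells of $\mathcal{S}$ across the relaxed walls). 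The content of the corollary is that every one of these steps runs in time polynomial in $|A|$ and $L$.

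First I would bound the size of the regularity system. Since the dimension $d$ is fixed and the vertices of $\mathcal{S}$ are a subset of $A$, the standard polynomial bound on the number of faces of a subdivision in fixed dimension guarantees that the numbers of cells, walls, and vertex--face incidences are all $\mathrm{poly}(|A|)$; hence $M$ has $n=|A|$ columns and $m=\mathrm{poly}(|A|)$ rows, one per coplanarity equation and one per local folding inequality. Expanding the second determinant of \eqref{lfc} along its last row shows that each coefficient of a row is (up to the global factor given by the first determinant) a signed minor of order $d{+}1$ of the matrix of homogenized coordinates of a constant number of points of $A$. In fixed dimension such a minor is computed with $O(1)$ arithmetic operations, and by the Hadamard bound its bit-size is $O(L)$; so the whole matrix $M$ has encoding length $\mathrm{poly}(|A|)\cdot O(L)$, which is polynomial in $|A|$ and $L$.

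With this in hand I would invoke \cref{mrs}: $E(M)$ is obtained by solving at most $m$ linear programs, each with $m$ variables, $O(m+n)$ constraints, and coefficients of the size just bounded. Each is solvable in time polynomial in its number of variables and encoding length, hence polynomial in $|A|$ and $L$, and there are only $\mathrm{poly}(|A|)$ of them. To produce $\omega_0$ I would solve one further linear program, maximizing a common slack $t$ subject to $\scalprod{s_j}{\omega}\geq t$ for $j\notin E(M)$, the equalities $\scalprod{s_i}{\omega}=0$ for $i\in E(M)$, and a normalization bounding $\omega$; since $S(M,E(M))$ is compatible the optimum is attained at a vertex $\omega_0$ with $t>0$, which strictly satisfies every non-relaxed folding condition, and standard vertex-complexity bounds give $\omega_0$ a bit-size polynomial in the input. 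Finally, the lower hull of the $|A|$ points of $A^{\omega_0}$ is a convex-hull computation in the fixed dimension $d{+}1$, hence polynomial in $|A|$; its projection is the finest regular coarsening by \cref{thm:coarsening}.

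The main obstacle is not the algorithmic recipe, which follows immediately from the earlier results, but the bookkeeping that upgrades ``polynomially many arithmetic operations'' to a genuine polynomial-time bound in the bit model: one must certify that the entries of $M$, the intermediate linear-programming data, and the extracted vertex $\omega_0$ all have bit-size polynomial in $L$. This is precisely where fixed dimension enters twice---to keep the number of faces of $\mathcal{S}$ (and thus $m$) polynomial in $|A|$, and to keep the minors in \eqref{lfc} of bounded order so that the Hadamard and Cramer bounds apply. Once these size estimates are established, the polynomiality of linear programming in the number of variables and the encoding length $L$ closes the argument.
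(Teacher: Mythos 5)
Your proposal is correct and follows essentially the same route as the paper: both reduce the problem to the minimum-relaxation algorithm of \cref{mrs} applied to the regularity system, and both close the argument with the same bit-complexity bookkeeping (coefficients are fixed-order determinants in the coordinates of $A$, hence of size polynomial in $L$; the number of walls is polynomial in $|A|$ in fixed dimension; each LP runs in polynomial time). The only divergence is the final reconstruction step---you solve one extra slack-maximizing LP to extract an explicit $\omega_0$ and then compute the lower hull of $A^{\omega_0}$, whereas the paper avoids the explicit lifting by merging cells symbolically (labeling cells whenever a wall's constraint is relaxed) and taking convex hulls of the merged vertex sets; this is exactly the definition-based route the paper mentions first and then sets aside, and your vertex-bit-size argument supplies the extra care it requires.
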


\begin{proof}
It follows from the definition of the finest regular coarsening that it can be determined by finding a point $\omega_0$ in the relative interior of the secondary cone of  $\mathcal{S}$, computing the point set $A^{\omega_0}$ and its convex hull to finally project its lower faces. 
However, it is easier to iteratively construct it following the algorithm in \cref{mrs} to find the minimum relaxation set of its regularity system. 
Whenever a constraint is relaxed (a dual variable is unrestricted), we merge the cells sharing the corresponding wall. 
We perform the merge operation symbolically by giving a common label to the merged cells. 
When the iteration ends, we construct the cells of the finest regular coarsening by computing the convex hull of the vertices of the cells with the same label.
Since we assume that the dimension is constant and the vertices of the finest regular coarsening are a subset of $A$, the construction of the cells can be done in polynomial time.
 
Note that the coefficients of the linear program come from $d$-dimensional determinants on the coordinates of points in $A$. 
Therefore, the number of bits needed to encode them is polynomial in $L$.
In each iteration, a linear program with a number of constraints proportional to $|  A | $ and as many variables as walls in $\mathcal{S}$ is solved. 
Therefore, the whole algorithm takes polynomial time in $|  A | $ and $L$. 
\end{proof}

Some improvements can probably be done when computing the finest regular coarsening by taking into account the special structure of the regularity system of $\mathcal{S}$. 
In particular, the matrix~$M$ associated to the system is sparse and its structure is related to the combinatorics of the subdivision. 
Each row, corresponding to a wall, has at most $d+2$ non-zero coefficients. 
In addition, $d$ of the involved vertices can be taken to be an affine basis for the corresponding wall. 
Then, the corresponding $d$ coefficients are positive while the other two are negative. 
If $\mathcal{S}$ is a triangulation, this means that each vertex that is involved in a folding condition appearing in a contradiction cycle must be involved in another condition of the contradiction cycle. 
Moreover, if a vertex belongs to the wall corresponding to a condition in a contradiction cycle, it must appear in another condition of the cycle associated to a wall that does not contain it (because the contributions to a vertex in a dual solution must add up to zero). 
If $\mathcal{S}$ is not a triangulation, a similar combinatorial property still holds. 

The statement in the previous corollary is not trivial because there exist subdivisions, even in the plane, with a linear number of simultaneous flips \cite{GHNP}. 
That is, a linear number of pairs of cells that can be independently merged or not. 
Consequently, these subdivisions have an exponential number of minimal coarsenings that one might need to test for regularity. 
The scenario seems even worse when it comes to recursive regularity. 
Fortunately, as a consequence of \cref{thm:recursive}, this can indeed be decided in polynomial time using the procedure in \cref{frccomp}. 

\begin{proposition}
Let $\mathcal{S}$ be subdivision of a point set $A$ in fixed dimension and let $L$ be the total number of bits necessary to encode the coordinates of $A$.
Whether $\mathcal{S}$ is recursively regular can be decided in time polynomial in $|  A | $ and $L$.
\end{proposition}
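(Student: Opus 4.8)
The plan is to build the regularity tree of $\mathcal{S}$ explicitly, following the two-case recursion in its definition, and then to invoke \cref{thm:recursive}, which reduces recursive regularity to the regularity of the leaves of this tree. Concretely, I would run the following recursive procedure. Given the current subdivision (initially $\mathcal{S}$, and in deeper calls a restriction $\kappa^{-1}(C)$, which is a subdivision of the subset $A \cap C \subseteq A$), compute its finest regular coarsening $\mathcal{S}_0$ using \cref{frccomp}. Three situations arise, exactly matching cases (a) and (b) of the definition of the regularity tree. If $\mathcal{S}_0$ equals the subdivision itself, the subdivision is regular and the node is a \emph{regular leaf}. If $\mathcal{S}_0$ is the trivial coarsening while the subdivision has more than one cell, the node is a \emph{completely non-regular leaf}; as soon as one such leaf is found we halt and report that $\mathcal{S}$ is not recursively regular. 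Otherwise $\mathcal{S}_0$ is a proper, non-trivial coarsening, and we recurse on $\kappa^{-1}(C)$ (the restriction of $\mathcal{S}$ to $C$) for each cell $C \in \cells(\mathcal{S}_0)$. By \cref{thm:recursive}, $\mathcal{S}$ is recursively regular precisely when every leaf produced is regular, i.e., when no completely non-regular leaf is ever encountered.

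Next I would bound the running time. The crucial observation---and the reason the procedure is efficient despite the warning that a subdivision may have exponentially many minimal coarsenings---is that at each internal node we branch only on the \emph{unique} finest regular coarsening (\cref{thm:coarsening}), not on all possible coarsenings. Since an internal node corresponds to a proper, non-trivial coarsening, it has at least two children, and each child $\kappa^{-1}(C)$ has strictly fewer cells than its parent. The sets $\kappa^{-1}(C)$ partition the cells of the parent, so the leaves of the regularity tree form a partition of $\cells(\mathcal{S})$; hence there are at most $|\cells(\mathcal{S})|$ leaves and, because every internal node has at least two children, at most $|\cells(\mathcal{S})| - 1$ internal nodes. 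Thus the tree has $O(|\cells(\mathcal{S})|)$ nodes. In fixed dimension $|\cells(\mathcal{S})|$ is polynomial in $|A|$, and each node requires a single finest-regular-coarsening computation on a point set $A \cap C \subseteq A$ (hence of bit-size at most $L$), which by \cref{frccomp} runs in time polynomial in $|A|$ and $L$. Multiplying the per-node cost by the polynomial number of nodes yields the claimed bound.

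The main obstacle is precisely this size bound on the regularity tree: a priori one fears, as the preceding discussion stresses, an exponential number of candidate coarsenings to test for regularity. The plan resolves this by exploiting the canonicity and uniqueness of the finest regular coarsening, which collapses the search into a single canonical tree whose leaves partition the cells. The remaining ingredients are routine: correctness is handed to us by \cref{thm:recursive}, and the fact that the restriction to a cell $C$ is again a subdivision of a subset $A \cap C$ of $A$ guarantees that every recursive call is an instance of the same problem with input size bounded by that of the original.
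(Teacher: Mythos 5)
Your proposal is correct and follows essentially the same approach as the paper: build the regularity tree by iterated finest-regular-coarsening computations (via \cref{frccomp}) and invoke \cref{thm:recursive} to reduce the decision to checking that every leaf is regular. The only difference is the accounting of the number of coarsening computations: the paper charges each descent in the tree to a wall of $\mathcal{S}$ that newly appears in a finest regular coarsening, bounding the count by the number of walls, whereas you bound the tree size directly (every internal node has at least two children and the leaves partition $\cells(\mathcal{S})$, so there are $O(|\cells(\mathcal{S})|)$ nodes)---an equally valid, and arguably cleaner, polynomial bound.
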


\begin{proof}
\cref{thm:recursive} ensures that we only need to compute the regularity tree of $\mathcal{S}$ to decide whether $\mathcal{S}$ belongs to $\mathfrak{R}$ or not. This is done by computing the finest regular coarsening of subdivisions of some subsets of $A$. Each time we go down a level in the tree, there is one wall in the finest regular coarsening that was not in any previous finest regular coarsenings. 
Therefore, if we charge the computation of the finest regular coarsening to this wall, we can conclude that the number of computations is bounded by the number of walls in $\mathcal{S}$, which is polynomial if $d$ is considered to be a constant.
\end{proof}

\section[Illumination by floodlights in high dimensions]{Illumination by floodlights in high dimensions}
\label{floodlights}

In the last decades, a wide collection of problems have been studied concerning illumination or guarding of geometric objects. 
The first Art Gallery problem posed by Klee asked simply how many guards are necessary to guard a polygon. 
Since then, considerable research has addressed several variants of this problem, such as finding watchman routes or illuminating sets of objects. 
A remarkable group of problems arises when the light sources (or the surveillance devices) do not behave in the same way in all directions. 
In the major part of the literature, these problems are studied only in the plane. 
A compilation of results on this type of problem can be found in \cite{Urrutia}. 
The problem we are interested in assumes that a light source can illuminate only a convex unbounded polyhedral cone. 
We are given the polyhedral cones available and a set of points representing the allowed positions for their apices. 
We can then choose the assignment of the floodlights to the points in order to cover some target set. 
The assignment will be required to be one-to-one and the floodlights will not be permitted to rotate. 

The first problem we look at in this section is the \emph{space illumination problem} in three or higher dimensions. 
Informally speaking, the problem asks if given a set of floodlights and a set of points there is a placement of the floodlights on the points such that the whole space is illuminated.
Afterwards, we study the generalization to higher dimensions of the \emph{stage illumination problem}, introduced by Bose, Guibas, Lubiw, Overmars, Souvaine and Urrutia \cite{BGAMSU}. 
First of all, we reproduce a result that will be used in the subsequent proofs. 

\subsection{Power diagrams and constrained least-squares assignments}
\label{sec:power}
We present here a connection between least-squares assignments and power diagrams observed in~\cite{AHA}.
%
%

The \emph{power diagram} of a finite set of points $Q \subset \R^d$ (called \emph{sites}) with assigned weights $w: Q \to \R^+$ is the polyhedral complex whose cells are 
$$
R_q=\{x \in \R^d: \| x-q \|^2 -w(d) \leq \| x-q' \|^2 -w(q') \mbox{ for all } q' \in Q  \},\text{ for } q \in Q.
$$
For every  $q\in Q$, the locus $R_q$ is a polyhedron called the \emph{region} of $q$. 
Note that, for every $q\in Q$, the value $\| x-q \|^2 -w(d)$ is the power of the point $x$ with respect to a circle centered at $q$ and having radius $\sqrt{w(q)}$, in case $w(q) \ge 0$. 
This is the reason why the power diagram is often defined for a set of circles instead of weighted points. 
For more details on this type of diagrams, see the survey in~\cite{AurenhammerPower}. 


Given a finite point set $S$ and a set $Q$ of weighted points, we say that an assignment $\sigma: S \to Q$ is \emph{induced by the power diagram} of $Q$ if $ \sigma^{-1}(q) \subset R_q$, for all $q \in Q$. 

Given a finite set of points $Q$, a function $c:Q \to \N$, and a set $S$ of $\sum_{q \in Q}  c(q)$ points, a \emph{constrained least-squares assignment} for $Q$ and $S$ with \emph{capacities} $c$ is an assignment minimizing 
$ \sum_{q \in Q} \| b- \tau(b)  \|^2 $
among all $\tau: S \to Q$ satisfying  $|\tau^{-1}(q)|=c(q)$ for all $q \in Q$. 


\begin{theorem}[Aurenhammer, Hoffmann and Aronov~\cite{AHA}]
Let $Q$ be a finite set of points with weights $w: Q \to \R^+$ and let $S$ be a point set. 
\begin{enumerate}[(i)]
\item Any assignment $\sigma:S \to Q$ induced by the power diagram of $Q$ is a constrained least-squares assignment for $Q$ and $S$ with capacities $c(q)=|\sigma^{-1}(q)|$, for all $q \in Q$. 
\item Conversely, if $\pi$ is a constrained least-squares assignment for $Q$ and $S$ with capacities $c$, then there exist $w$ such that $\pi$ is induced by the power diagram of $Q$ weighted by $w$. 
\end{enumerate}
\end{theorem}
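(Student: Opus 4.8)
The plan is to prove the two directions separately, both resting on one algebraic observation: once the capacities $c$ are fixed, the least-squares cost of an assignment differs from the sum of \emph{powers} of the assigned points only by a constant. Concretely, for any assignment $\tau:S\to Q$ with $|\tau^{-1}(q)|=c(q)$ one has, after regrouping,
\begin{equation*}
\sum_{b \in S} \|b - \tau(b)\|^2 = \sum_{b \in S}\Big( \|b-\tau(b)\|^2 - w(\tau(b))\Big) + \sum_{q \in Q} c(q)\, w(q),
\end{equation*}
where the last sum is independent of $\tau$ because exactly $c(q)$ points are sent to $q$. Thus minimizing the least-squares cost over all assignments with capacities $c$ is equivalent to minimizing the total power $\sum_{b}(\|b-\tau(b)\|^2 - w(\tau(b)))$ over the same set.

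For part (i), I would use this identity to deduce global optimality from pointwise optimality. If $\sigma$ is induced by the power diagram, then $b \in R_{\sigma(b)}$ means precisely that $\sigma(b)$ minimizes $\|b-q\|^2 - w(q)$ over all $q \in Q$; hence $\sigma$ minimizes the total power \emph{among all assignments whatsoever}, not merely those respecting the capacities. Since $\sigma$ itself has capacities $c(q)=|\sigma^{-1}(q)|$, it is in particular a minimizer of the total power over the capacity-constrained assignments, and by the displayed identity it is therefore a constrained least-squares assignment. Ties on region boundaries cause no trouble, since $R_q$ is closed and only the weak inequality is needed.

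For part (ii), the natural tool is linear-programming duality for the transportation problem. I would write the constrained least-squares assignment as the LP minimizing $\sum_{b,q} x_{b,q}\|b-q\|^2$ subject to $\sum_q x_{b,q}=1$ for each $b\in S$, $\sum_b x_{b,q}=c(q)$ for each $q\in Q$, and $x\ge 0$; this is a balanced transportation problem, so its optimum is attained at an integral vertex, namely $\pi$. Dualizing, there exist potentials $\alpha_b$ for the source rows and $\beta_q$ for the sink rows with $\alpha_b+\beta_q\le\|b-q\|^2$ for all $b,q$, and with equality whenever $\pi(b)=q$ (complementary slackness). Rearranging, $\pi(b)=q$ forces $\|b-q\|^2-\beta_q\le\|b-q'\|^2-\beta_{q'}$ for every $q'\in Q$, which is exactly the membership condition $b\in R_q$ for the power diagram with weights $w:=\beta$. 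Hence $\pi^{-1}(q)\subset R_q$ for all $q$, so $\pi$ is induced by the power diagram of $Q$ weighted by $w$; and since adding a common constant to all weights leaves every region unchanged, I can shift $\beta$ into $\R^+$ to match the stated hypothesis.

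The genuinely substantive step is part (ii), producing the weights: part (i) is a short decomposition-plus-exchange argument, whereas (ii) requires an existence certificate for the weights, and the dual potentials supply exactly the quantities that play that role. The points needing care are the integrality of the transportation optimum, so that the argument concerns the actual assignment $\pi$ rather than a fractional relaxation, and the final shift of weights into $\R^+$; both are standard once the duality is in place. Alternatively, I note that (ii) can be carried out without explicit LP duality by invoking the cyclic-monotonicity (no-improving-swap) condition for quadratic optimal transport, from which the same potentials $\beta_q$ can be reconstructed.
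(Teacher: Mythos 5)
Your proof is correct. Note that the paper itself gives no proof of this statement---it is imported verbatim from Aurenhammer, Hoffmann and Aronov \cite{AHA}---so there is no in-paper argument to compare against; your two-step argument (the constant-sum identity $\sum_b w(\tau(b)) = \sum_q c(q)w(q)$ reducing constrained least-squares cost to total power for part (i), and balanced-transportation LP duality with the dual potentials $\beta_q$ serving as the weights for part (ii), followed by a uniform positive shift of the weights, which leaves every power region unchanged) is precisely the standard argument of the cited source, and both the integrality step and the complementary-slackness step are handled correctly.
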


The analogous result can be stated replacing $S$ by a continuous measure. 
In particular, the measure could be uniform in a polytope and the capacities would then be a partition of its volume. 
As a consequence, the following Minkowsky-type theorem can be easily derived. 

\begin{theorem}[Aurenhammer, Hoffmann and Aronov~\cite{AHA}]\label{prop:partition} 
Let $Q$ be a finite set of points. 
\begin{enumerate}[(i)]
\item For any finite set $S$ of points and function $c:Q \to \N $ such that $\sum_{q \in Q} c(q)= |S|$, there exist weights $w: Q \to \R^+$ such that the power diagram of $Q$ weighted by $w$ induces an assignment $\sigma: S \to Q$ with $|\sigma^{-1}(q)|=c(q)$, for all $q \in Q$.\label{prop:dispow} 
\item For any polytope $P$ and function $c:Q \to \R^+$ such that $\sum_{q \in Q} c(q)= \vol(P)$, there exist weights $w: Q \to \R^+$ such that the power diagram of $Q$ weighted by $w$ induces an assignment $\sigma: P \to Q$ with $\vol(\sigma^{-1}(q))=c(q)$, for all $q \in Q$.\label{prop:contpow}
\end{enumerate}
\end{theorem}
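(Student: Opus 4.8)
The plan is to deduce both parts from the least-squares characterization in the preceding theorem of Aurenhammer, Hoffmann and Aronov, together with the continuous analogue mentioned in the remark just above the statement.

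For part~(i), I would first observe that the set of assignments $\tau: S \to Q$ satisfying the capacity constraint $|\tau^{-1}(q)| = c(q)$ for all $q \in Q$ is non-empty---precisely because $\sum_{q \in Q} c(q) = |S|$---and \emph{finite}. Hence the objective $\sum_{b \in S}\|b - \tau(b)\|^2$ attains its minimum at some assignment $\pi$, which is by definition a constrained least-squares assignment for $Q$ and $S$ with capacities $c$. Applying part~(ii) of the previous theorem yields weights for which $\pi$ is induced by the power diagram of $Q$, i.e. $\pi^{-1}(q) \subset R_q$ for every $q$. Since each region $R_q$ depends only on the differences $w_q - w_{q'}$, one may add a large enough constant to all weights so that $w: Q \to \R^+$ without changing the diagram. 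Taking $\sigma = \pi$ then gives the desired assignment, as $|\sigma^{-1}(q)| = c(q)$ holds by construction. This part is essentially immediate once a minimizer is known to exist, which here is trivial because the feasible set is finite.

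For part~(ii), I would mimic the same argument in the continuous setting, replacing $S$ by the uniform (Lebesgue) measure on $P$ and the counting constraint by the volume constraint $\vol(\tau^{-1}(q)) = c(q)$. Concretely, among all measurable maps $\tau: P \to Q$ partitioning $P$ into pieces $\tau^{-1}(q)$ of prescribed volumes $c(q)$, one seeks a minimizer $\pi$ of the transportation cost $\int_P \|x - \tau(x)\|^2\,dx$, and then invokes the continuous version of the correspondence to conclude that the optimal partition is realized by a power diagram of $Q$, so that $\vol(\pi^{-1}(q)) = c(q)$ as required; again the weights may be shifted to lie in $\R^+$.

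The main obstacle is the existence of this continuous minimizer: unlike the finite case, the feasible set is infinite-dimensional, so a genuine compactness argument is needed. I would phrase the problem as a semi-discrete optimal transport between the uniform measure on $P$ and the discrete measure $\sum_{q \in Q} c(q)\,\delta_q$; the set of couplings with these fixed marginals is tight, hence weak-$*$ compact, and the quadratic cost is lower semicontinuous, so an optimal plan exists and, being optimal for a quadratic cost, is supported on the graph of a map whose level sets form a power (Laguerre) diagram. An alternative route, arguably closer to the ``easily derived'' spirit of the remark, is to discretize: approximate the volume measure on $P$ by $N$ sample points, apply part~(i) to obtain weights $w_N$, and pass to a convergent subsequence as $N \to \infty$. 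In that approach the real effort lies in showing that the weights remain bounded and that the induced region volumes converge to the prescribed values $c(q)$---which is exactly where the analytic work is concentrated in either formulation.
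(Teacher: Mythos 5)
Your proposal is correct, but note that the paper never proves this statement at all: it is quoted from~\cite{AHA}, and the paper's only justification is the remark that the result ``can be easily derived'' from the preceding correspondence theorem together with its continuous analogue (which the paper likewise states without proof). Your part~(i) coincides exactly with that intended derivation: existence of a constrained least-squares assignment is trivial because the feasible set is finite and non-empty, part~(ii) of the correspondence theorem produces the weights, and your observation that adding a common constant to all weights leaves every region $R_q$ unchanged cleanly handles the requirement $w: Q \to \R^+$. For part~(ii) you diverge from the paper: where it simply invokes the continuous analogue of the correspondence as a black box, you supply the missing analysis yourself, formulating the problem as semi-discrete optimal transport, obtaining existence of an optimal plan from tightness (hence weak-$*$ compactness) of the set of couplings and lower semicontinuity of the quadratic cost, and then invoking the structure theorem that optimal semi-discrete plans for quadratic cost are induced by maps whose cells are power (Laguerre) cells. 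That route is sound and yields a self-contained proof, but be aware that this structure theorem is essentially the continuous AHA correspondence in different clothing, so you are importing machinery of the same depth as the citation you replace rather than something more elementary; your alternative discretize-and-pass-to-the-limit route stays closer to the paper's toolkit, but, as you correctly flag, proving that the weights stay bounded and that the cell volumes converge to the prescribed capacities is precisely where the substantive work would lie.
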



If $k=|Q| \leq |S|=n$, a partition as indicated in~\cref{prop:partition}-(\ref{prop:dispow}) can be computed in $O(k^2 n \log n )$ time by an algorithm given in~\cite{Alberts}. 
For the special case $|S|=|Q|=n$ and $c(q)=1$ for all $q \in Q$, the problem of finding such a partition can be formulated as a linear sum assignment problem and can be thus solved using the Hungarian method in $O(n^3)$ time.

\subsection{Illuminating space}\label{illumin}

The results presented here use recursively-regular polyhedral fans. 
These objects are analogous to recursively-regular subdivisions of a point set with vectors instead of points as base elements.
We next introduce some new definitions specific to this problem. 
The \emph{ground set} of a polyhedral fan $\mathcal{F}$, denoted by $|  \mathcal{F} | $, is the union of all its cells. 
We say that a $d$-dimensional polyhedral fan is \emph{complete} if its ground set is the whole space and that it is \emph{conic} if the ground set is a pointed $d$-dimensional cone. 
Similarly, we will talk about the \emph{complete case} and the \emph{conic case} to refer to instances of the problem where the given fan is complete or conic, respectively. 
A facet of a fan will be called \emph{interior} if it is not contained in the boundary of the ground set of the fan. 
A cone $K$ is said to \emph{contain a direction (or vector) $v$} if it contains the ray $\rho_v$ starting at the apex of $K$ and having direction (or direction vector) $v$. 
We will say that the direction is \emph{interior to a cone} if  $\rho_v$ intersects the boundary of $K$ only in its apex. 

Let $P$ be a polyhedron 
\[
P= \bigcap_{i \in I} \Pi_i^+,
\]
where $\Pi_i$ are the hyperplanes supporting the facets of $P$, for $i \in I$.
The \emph{reverse polyhedron} of~$P$, denoted by $P^-$, is defined as
\[
P^-= \bigcap_{i \in I} \Pi_i^-.
\]
The \emph{reverse fan} of a polyhedral fan $\mathcal{F}$ is the fan obtained by reversing all its faces. The \emph{reverse cone} of a conic fan is the reversed set of its ground set.  
Note that if $P$ is a cone with apex at the origin, then $P^-=-P$.


Given a $d$-dimensional complete polyhedral fan $\mathcal{F}$ with $n$ cells and a set of $n$ points $P \subset \R^d$, we say that an assignment $\sigma: \cells(\mathcal{F}) \rightarrow P$ is \emph{covering} if it is one-to-one and $$\bigcup \limits_{C \in \cells(\mathcal{F})} \left( C+{\sigma(C)} \right) \supset |  \mathcal{F} | .$$
Note that the floodlights are only translated to the corresponding points and not rotated, as in other variants of the problem. 


We are now ready to state formally the \emph{space illumination problem}. 
Given a $d$-dimensional polyhedral fan and a set of points in $\R^d$ we would like to know whether there is a covering assignment for that fan and the point set. 
Galperin and Galperin~\cite{Galperin} proved that a covering assignment can be found if the fan is complete and regular, regardless of the given point set and in any dimension.

\begin{theorem}[Galperin and Galperin~\cite{Galperin}; Rote~\cite{RoteCG}]
\label{thm:Gal}
Let $\mathcal{F} \subset \R^d$ be a full-dimensional regular polyhedral fan consisting of $n$ cells and $P \subset \R^d$ be a set of $n$ points. 
There is a covering assignment for $\mathcal{F}$ and $P$. 
\end{theorem}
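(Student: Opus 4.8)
The plan is to reduce the statement to a point-matching problem and feed it into the power-diagram machinery of Aurenhammer, Hoffmann and Aronov recalled above. First I would use regularity to give $\mathcal F$ an explicit analytic description. Since $\mathcal F$ is complete and regular, it is the projection of the lower faces of a $(d{+}1)$-dimensional polyhedral cone; equivalently there is a positively homogeneous convex piecewise-linear function $h:\R^d\to\R$ whose domains of linearity are exactly the cells of $\mathcal F$, with $h(z)=\langle g_i,z\rangle$ on the cell $C_i$. The gradient vectors $g_1,\dots,g_n$ are the vertices of a polytope $\Pi$, and $\mathcal F$ is the normal fan of $\Pi$, so that
\[
C_i=\{\,z\in\R^d:\langle g_i-g_j,z\rangle\ge 0\ \text{ for all } j\,\}.
\]
With this description, an assignment $\sigma$ with $\sigma(C_i)=p_i$ is covering precisely when, for every $y\in\R^d$, some index $i$ satisfies $y-p_i\in C_i$, i.e.
\[
\langle g_i-g_j,\,y\rangle\ \ge\ \langle g_i-g_j,\,p_i\rangle \quad \text{for all } j. \qquad (\star)
\]

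The key idea is to produce, together with the matching, a vector of potentials $\phi=(\phi_1,\dots,\phi_n)$ such that the maximization diagram of the affine maps $y\mapsto\langle g_i,y\rangle-\phi_i$ refines the translated cones. Concretely, suppose
\[
\phi_i-\phi_j\ \ge\ \langle g_i-g_j,\,p_i\rangle \quad \text{for all } i,j. \qquad (\dagger)
\]
Then for an arbitrary $y$, choosing $i\in\argmax_k(\langle g_k,y\rangle-\phi_k)$ gives $\langle g_i-g_j,y\rangle\ge\phi_i-\phi_j\ge\langle g_i-g_j,p_i\rangle$, which is exactly $(\star)$; since the maximum is always attained, the translated cones cover $|\mathcal F|=\R^d$. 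So it suffices to exhibit a bijection $i\mapsto p_i$ admitting potentials satisfying $(\dagger)$.

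To obtain such a matching and its potentials I would invoke \cref{prop:partition}. Taking the vertices $g_1,\dots,g_n$ of $\Pi$ as sites with unit capacities and $P$ as the point set (after reversing the sites, i.e.\ passing to the reverse fan, so that the relevant optimum is the least-squares assignment that a power diagram realizes), the theorem yields weights $w_i$ whose power diagram induces a bijection sending each $g_i$ to a distinct point $p_i$ in its region. This assignment is the optimal (least-squares) matching, hence \emph{cyclically monotone}: along every cyclic reindexing the total cost does not decrease. Cyclic monotonicity is exactly the no-bad-cycle condition for the difference constraints $(\dagger)$ (the cyclic sum of their right-hand sides is $\le 0$), so by the standard feasibility criterion for such systems---equivalently Farkas/LP duality, with the $\phi_i$ being the dual potentials of the assignment, and in fact obtainable directly from the power-diagram weights as $\tfrac12(\|g_i\|^2-w_i)$---a solution $\phi$ of $(\dagger)$ exists. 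Together with the previous paragraph this proves that a covering assignment exists.

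The main obstacle is the consistent bookkeeping of signs rather than any single hard inequality. Two conventions must be pinned down together: whether the regular fan is read through its maximization or minimization diagram (equivalently whether one works with $\mathcal F$ or its reverse fan, and with $g_i$ or $-g_i$), and whether the correct matching is the least-squares one or the one \emph{maximizing} total squared distance. These choices are forced, and the right combination can be fixed by checking the one-dimensional instance of two opposite rays and two points: there the covering condition reduces to assigning the points to the rays in the unique order for which the two half-lines overlap, and that order is precisely the one making $(\dagger)$ feasible. The remaining points are routine: degenerate configurations (a point of $P$ on a wall of the diagram, or a $y$ with non-unique $\argmax$) are harmless because $(\star)$ needs only one valid index, and the cones of $\mathcal F$ already share their apex at the origin, so translating $C_i$ by $p_i$ is exactly the operation in the statement.
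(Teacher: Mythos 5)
Your proposal is correct and is essentially the paper's (i.e.\ Rote's) power-diagram proof, as reproduced and generalized in the proof of \cref{thm:recreg}: your condition $(\dagger)$ says exactly that each $p_i$ lies in the cell $C_{i,*}$ of the lower envelope $\varphi_*(\mathcal F,\omega)$ (with $\omega_i=-\phi_i$), your implication $(\dagger)\Rightarrow(\star)$ is the algebraic form of \cref{lem:powerdual} combined with \cref{Lemma:reverse}, and the existence of the matching with potentials is the same invocation of \cref{prop:partition} (the cyclic-monotonicity/difference-constraint detour, with the sign reversal you flag, is just an unpacking of how that power-diagram assignment works). So the two arguments coincide, with the envelope geometry translated into potential-function language.
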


In particular, the there is a covering assignment for a fan in the plane and any point set of the right cardinality. 
This last statement was rediscovered with a small variation in the formulation of the problem in \cite{BGAMSU}, where an $O(n \log n)$ algorithm for finding a covering assignment is given as well. 
The conic case in the plane has also been considered with the extra assumption that the points are contained in the reverse cone of the fan. 
In this case, a covering assignment can be always found as well. 
However, if the points are not required to lie in the reverse cone, deciding the existence of a covering assignment becomes NP-hard even in the plane, since the problem is equivalent to the \emph{wedge illumination problem} studied in \cite{CRSV}. 
It is worth mentioning the problem of illumination disks with a minimum number of points in the plane, studied by Fejes T{\'o}th~\cite{Toth}. 
The notion of illumination in that work is not the usual one and the lights can be placed anywhere. 
Surprisingly enough, he used the properties of power diagrams to prove an upper bound on the number of needed points, the same diagrams used by Rote~\cite{RoteCG} to provide an alternative proof of~\cref{thm:Gal}.

We generalize first the conic case to higher dimensions and prove that it is sufficient for the fan to be recursively-regular to ensure the existence of a covering assignment for any point set in the reverse cone of the fan. 
Afterwards, we use this result to prove that~\cref{thm:Gal} can be extended to recursively-regular fans in the complete case as well. 
Both generalizations are synthesized in the following statement (note that $(\R^d)^-=\R^d$ and there is thus no restriction for $P$ in the complete case). 

\begin{theorem}
\label{thm:recreg}
Let $\mathcal{F} \subset \R^d$ be a full-dimensional recursively-regular polyhedral fan consisting of $n$ cells and $P \subset |  \mathcal{F} | ^-$ be a set of $n$ points. There is a covering assignment for $\mathcal{F}$ and $P$. 
\end{theorem}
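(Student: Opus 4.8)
My plan is to prove the statement by induction on the number of cells of $\mathcal{F}$, treating the conic case ($|\mathcal{F}|=K$ a pointed cone, $P\subset K^-$) as the general one and recovering the complete case as the specialization $K=\R^d$, where $K^-=\R^d$ and $P$ is unrestricted. In the base case $\mathcal{F}$ is regular. If $K=\R^d$ this is precisely \cref{thm:Gal}. If $K$ is a proper pointed cone I would recover the covering through the power-diagram argument underlying Rote's proof of \cref{thm:Gal}: by the correspondence of Aurenhammer, Hoffmann and Aronov together with \cref{prop:partition}, one realises the sought assignment from a power diagram on the sites $P$ whose cells have recession cones equal to the reverse cells $C_i^-=-C_i$. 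Here the hypothesis $P\subset K^-$ is exactly what is needed to cover the apex of $K$: some translated cell must contain the origin, i.e. some $p\in\bigcup_i C_i^-=K^-$, and this is guaranteed precisely because $P\subset K^-$.

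For the inductive step suppose $\mathcal{F}$ is recursively regular but not regular. By definition there is a proper, non-trivial regular coarsening $\mathcal{F}'$ with cells $D_1,\dots,D_m$ such that each restriction $\mathcal{F}|_{D_k}$ is recursively regular; write $n_k=|\cells(\mathcal{F}|_{D_k})|$, so that $\sum_k n_k=n$. Each $D_k$ is a pointed cone with $D_k^-=-D_k\subset K^-$, and the reverse cells $-D_k$ tile $K^-$ face-to-face since $-\mathcal{F}'$ is again a regular fan. The strategy is to split $P$ into groups $P_1,\dots,P_m$ with $|P_k|=n_k$ and $P_k\subset D_k^-$; then the inductive hypothesis applied to the smaller fan $\mathcal{F}|_{D_k}$ with points $P_k$ yields a covering assignment $\sigma_k$ of $D_k$. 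Because the $D_k$ tile $K$ and each $\sigma_k$ covers its own $D_k$, the disjoint union of the $\sigma_k$ is a bijection $\cells(\mathcal{F})\to P$ whose translated cells cover $\bigcup_k D_k=K=|\mathcal{F}|$, as required. To construct the split respecting the coarse reverse fan I would appeal to a capacity-constrained assignment as in \cref{prop:partition}, with capacities $n_k$ attached to the reverse cells $-D_k$.

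The main obstacle is precisely the existence of this split. A literal partition with $P_k\subset -D_k$ and $|P_k|=n_k$ need not exist: the adversary may place all of $P$ in the interior of a single reverse cell $-D_{k_0}$, so that more than $n_{k_0}$ points are forced into group $k_0$ and Hall's condition fails. This is not an artefact of the argument but reflects that coverage is inherently global---already in \cref{thm:Gal} a region is typically covered with the help of cells whose apices lie outside it---so the covering cannot be decomposed cell-by-cell exactly as naively proposed. Overcoming this is where the power-diagram machinery and the hypothesis $P\subset K^-$ must genuinely interact: rather than forcing the set containments $P_k\subset -D_k$, one should read the capacity-constrained power diagram (or a volume version over $K^-$, in the spirit of \cref{prop:partition}) as prescribing groups whose \emph{recession} behaviour the conic base case can absorb, and then certify both that the conic inductive hypothesis applies to each group and that the $m$ partial covers union to all of $K$. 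Making this reconciliation rigorous is, I expect, the crux of the proof.

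As a consistency check, \cref{recareacyclic} already shows that any recursively-regular fan is acyclic, matching the necessary condition that a universal fan carries no visibility cycle; since acyclicity alone does not imply universality, the proof must exploit the full recursive-regular structure through the coarsening recursion above, and not merely the absence of cycles.
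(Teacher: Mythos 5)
You correctly identified the inductive skeleton (regular base case via \cref{thm:Gal} and its conic analogue \cref{Theorem:Main}, then splitting $P$ according to a regular coarsening), and you correctly identified the fatal obstacle in the naive version: one cannot force a capacity-respecting partition with $P_k \subset -D_k$, since all of $P$ may sit inside a single reverse cell. But your proposal stops exactly at that point---you explicitly defer ``making this reconciliation rigorous'' as ``the crux of the proof''---and that crux is precisely what the paper's proof supplies. So as written there is a genuine gap, not a complete argument.

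The two missing ideas are the following. First, the statement carried through the recursion is not about covering cones from their reverse cones, but about covering \emph{shifted} polyhedra: \cref{Lemma:reverse} shows that if a conic fan is the restriction of a complete regular fan to $K=\bigcap_i \Pi_i^+$, then its cells cover any polyhedron $Q=\bigcap_i (\Pi_i^+ + t_i)$ obtained by translating the facet hyperplanes, from any point set in $Q^-$. This strengthened form is what allows the induction to absorb the deformations that an adversarial point distribution forces. Second---and this is the key construction absent from your proposal---the paper lifts the finest regular coarsening $\mathcal{F}_0$ to a cone in $\R^{d+1}$ and, for a weight vector $\omega$ on $\cells(\mathcal{F}_0)$, forms \emph{two} power diagrams from the same vertically shifted hyperplanes: the projection $\varphi^*(\mathcal{F}_0,\omega)$ of their upper envelope, whose cells $C^*$ tile the ground set and become the coverage targets, and the projection $\varphi_*(\mathcal{F}_0,\omega)$ of their lower envelope, whose cells $C_*$ are where the points are grouped. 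Rote's lemma (\cref{lem:powerdual}) guarantees $C_* \subset (C^*)^-$ for every cell $C$ of $\mathcal{F}_0$. One then invokes part (i) of \cref{prop:partition} to choose $\omega$ so that each $C_*$ contains exactly $|\kappa^{-1}(C)|$ points of $P$, where $\kappa$ is the coarsening function; the sub-fan $\kappa^{-1}(C)$ must then cover the shifted polyhedron $C^*$ from points in its reverse polyhedron, which is exactly the strengthened statement (directly by \cref{Lemma:reverse} when the sub-fan is regular, and by recursing with a new pair of power diagrams otherwise, the invariant ``points lie in the reverse polyhedron of the target'' being preserved at every level). Since the cells $C^*$ tile $|\mathcal{F}|$, the partial covers union to the whole ground set. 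In short, the resolution is not to make the point groups fit the reverse cells $-D_k$, but to deform both the grouping regions and the coverage targets simultaneously by the same weights; your induction hypothesis, stated only for cones and their reverse cones, cannot be applied to the groups that any capacity-respecting partition actually produces.
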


We prove first two simple technical lemmas. 

\begin{lemma}\label{Lemma:conic}
A conic full-dimensional fan $\mathcal{F} \subset \R^d$ with $|  \mathcal{F} |  =K$ is regular if and only if $\mathcal{F}$ is the restriction to $K$ of a complete regular fan.
\end{lemma}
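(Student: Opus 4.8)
The plan is to prove the two implications separately, with the ``if'' direction being essentially immediate and the ``only if'' direction carrying the real content, namely the construction of an explicit regular completion. For the ``if'' direction I would invoke the fact, already used in the proof of \cref{thm:recursive}, that regularity is preserved under restriction to a polyhedron: intersecting all faces of a complete regular fan $\mathcal{G}$ with the cone $K$ yields a regular complex. Since by hypothesis this restriction is $\mathcal{F}$ and $K=|\mathcal{F}|$ is itself a (convex) polyhedron, $\mathcal{F}$ is regular.

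For the ``only if'' direction I would first translate regularity into the language of a convex, positively-homogeneous piecewise-linear function. By the fan version of the folding form (\cref{foldingform}), the regularity of $\mathcal{F}$ provides heights on the ray directions whose lower hull projects to $\mathcal{F}$; equivalently, there is a convex piecewise-linear function $f$ on $K$ that is linear on each cell. Because we are in the fan (vector) setting, the supporting hyperplanes of the lifted cone pass through the origin, so on each cell $C_i$ the function restricts to a genuine linear functional, $f|_{C_i}=\langle a_i,\cdot\rangle$. Convexity then lets me write $f=\max_i\langle a_i,\cdot\rangle$ on $K$, and each cell is recovered as the convex set $C_i=\{x\in K : f(x)=\langle a_i,x\rangle\}$; in particular the gradients $a_i$ are pairwise distinct, since otherwise two cells sharing a gradient would merge into a single convex linearity domain.

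The key step is then to extend $f$ by the very same formula: define $g\colon\R^d\to\R$ by $g(x)=\max_i\langle a_i,x\rangle$ for all $x$. This $g$ is finite everywhere, convex, piecewise linear, and positively homogeneous of degree one, so its domains of linearity form a complete fan $\mathcal{G}$ that is regular by construction (being induced by a convex positively-homogeneous piecewise-linear function, which is the functional reformulation of the folding form). Since $g$ agrees with $f$ on $K$, the region $\{x:g(x)=\langle a_i,x\rangle\}$ is a single cell of $\mathcal{G}$ whose intersection with $K$ is exactly $C_i$; hence $\mathcal{G}$ restricts to $\mathcal{F}$ on $K$, as required.

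The main obstacle I anticipate is making precise that the restriction $\mathcal{G}|_K$ recovers $\mathcal{F}$ exactly and is not a strict coarsening. This reduces to two checks, both consequences of the identity $f=\max_i\langle a_i,\cdot\rangle$: that distinct cells of $\mathcal{F}$ carry distinct gradients, so that no two of them are merged when passing to $g$, and that the global argmax regions of $g$ meet $K$ in the original cells. Everything else—completeness of $\mathcal{G}$ (from $g$ being finite on all of $\R^d$), positive homogeneity, and convexity—is routine, and the translation between ``regular fan'' and ``convex homogeneous piecewise-linear function'' is exactly the fan analogue of \cref{foldingform}.
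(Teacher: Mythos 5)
Your proposal is correct and is essentially the paper's own proof in functional rather than geometric language: the epigraph of your extended function $g=\max_i\langle a_i,\cdot\rangle$ is precisely the cone $\bigcap_{i\in I^+}\Pi_i^+$ that the paper forms by discarding, from the halfspace description of the lifted cone, those halfspaces that do not support lower faces, and your ``if'' direction (restriction to a polyhedron preserves regularity) matches the paper's truncation of the lifted cone by vertical halfspaces over the facets of $K$. If anything, you handle more explicitly the one point the paper asserts without argument---that the completed fan restricts to $\mathcal{F}$ exactly, which rests on distinct cells carrying distinct gradients so that no cells merge.
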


\begin{proof}
For the \emph{only if} direction, assume that $\mathcal{F}$ is regular and, hence, there is a cone $\widetilde{K}\subset \R^{d+1}$ whose lower convex hull projects on $\mathcal{F}$. 
This cone can be written as $$\widetilde{K}=\left(\bigcap_{i \in I^+} \Pi_i^+\right) \cap \left(\bigcap_{i \in I^-} \Pi_i^-\right),$$ where $\Pi^+$ refers to the closed halfspace above the hyperplane $\Pi$ and $\Pi^-$ refers to the closed halfspace below $\Pi$; and $I^+$ is the set of indices such that $\widetilde{K} \subset \Pi_i^+$ and $I^-$ is the set of indices such that $\widetilde{K} \subset \Pi_i^-$. 
By convention, $\widetilde{K}$ will be considered to lie below the vertical hyperplanes, and thus the indices of these hyperplanes are considered as part of $I^-$. 
Note that $\bigcap_{i \in I^+} \Pi_i^+$ is a cone whose faces project onto a complete fan $\mathcal{G}$, since the vertical direction is interior to it. 
Moreover, its restriction to $|  \mathcal{F} | $ is $\mathcal{F}$. 

To prove the \emph{if} direction, assume that $\widetilde{L} \subset \R^{d+1}$ is a cone whose lower hull projects onto a complete fan $\mathcal{G} \subset \R^d$ and let $K= \bigcap_{i \in I} \Pi_i^+ \subset \R^d$ be a polyhedral cone. 
For every $i \in I$, let $\widetilde{\Pi}_i$ be the vertical hyperplane in $\R^{d+1}$ containing $\Pi_i$.
Clearly the set $\widetilde{L} \cap (\bigcap_{i \in I} \widetilde{\Pi}_i^+)$ is a cone whose lower hull projects onto the restriction of $\mathcal{G}$ to~$K$.
\end{proof}

The following technical lemma will be useful to extend~\cref{thm:Gal} to the conic case and to recursively-regular fans. 
Given a complete fan $\mathcal{G}$ and a pointed cone $K$, the lemma relates the existence of a covering assignment for $\mathcal{G}$ to a covering property of the restriction $\mathcal{F}$ of $\mathcal{G}$ to $K$.	
More precisely, we show that the cells of $\mathcal{F}$ can cover a polyhedron $Q$ resulting from shifting the hyperplanes defining $K$ provided that the given point set lies in $Q^-$ and that there is a covering assignment for this point set and $\mathcal{G}$.  

%
%
%
%

\begin{lemma}\label{Lemma:reverse}
Let $Q=\bigcap_{i \in I} (\Pi_i^+ + t_i)$ be a full-dimensional polyhedron, where $t_i \in \R^d$ for all $i \in I$.
Let $\mathcal{G} \subset \R^d$ be a full-dimensional complete fan consisting of $n$ cells, whose restriction $\mathcal{F}$ to $K = \bigcap_{i \in I} \Pi_i^+$ consists of $n$ cells as well. 
If there is a covering assignment for $\mathcal{G} $ and a set $P \subset Q^-$  of $n$ points, then the cells of $\mathcal{F}$ translated by the corresponding assignment cover $Q$.
\end{lemma}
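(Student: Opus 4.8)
The plan is to reduce the covering of $Q$ by the translated cells of $\mathcal{F}$ to the given covering of $\R^d$ by the translated cells of $\mathcal{G}$, the whole argument hinging on the single geometric inclusion $Q - Q^- \subseteq K$.

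First I would fix a halfspace description. Since $K=\bigcap_{i\in I}\Pi_i^+$ is a cone with apex at the origin, I write $\Pi_i^+=\{\,y\in\R^d : \scalprod{u_i}{y}\ge 0\,\}$ with normals chosen so that $K\subseteq\Pi_i^+$. Translating, the defining halfspaces of $Q$ become
\[
Q=\bigcap_{i\in I}(\Pi_i^+ + t_i)=\{\,x : \scalprod{u_i}{x}\ge\scalprod{u_i}{t_i}\ \text{for all } i\in I\,\},
\]
and reversing each supporting halfspace of $Q$ gives
\[
Q^-=\{\,x : \scalprod{u_i}{x}\le\scalprod{u_i}{t_i}\ \text{for all } i\in I\,\}.
\]

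The key step is the claim that $x-p\in K$ whenever $x\in Q$ and $p\in Q^-$. This is immediate from the two displays: for every $i\in I$ we have $\scalprod{u_i}{x-p}=\scalprod{u_i}{x}-\scalprod{u_i}{p}\ge\scalprod{u_i}{t_i}-\scalprod{u_i}{t_i}=0$, so $x-p$ lies in every $\Pi_i^+$ and hence in $K$. Intuitively, $K$ is exactly the recession cone of $Q$, while $Q^-$ sits on the opposite side of each wall, so differences of their points always point into $K$. With this in hand the main argument is short. Recall that $\mathcal{F}$ is the restriction of $\mathcal{G}$ to $K$ and that both have $n$ cells, so each cell $C\in\cells(\mathcal{G})$ meets $K$ in a full-dimensional cell $C'=C\cap K$ of $\mathcal{F}$, to which the corresponding assignment gives the point $\sigma(C)$. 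Now take any $x\in Q$. Since $\mathcal{G}$ is complete and $\sigma$ is covering, $\bigcup_{C}(C+\sigma(C))=\R^d$, so there is a cell $C$ with $x-\sigma(C)\in C$. Because $\sigma(C)\in P\subseteq Q^-$ and $x\in Q$, the key claim yields $x-\sigma(C)\in K$, hence $x-\sigma(C)\in C\cap K=C'$ and therefore $x\in C'+\sigma(C)$. As $x\in Q$ was arbitrary, the cells of $\mathcal{F}$ translated by the corresponding assignment cover $Q$.

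The only point requiring care is the bookkeeping that turns ``covered by $\mathcal{G}$'' into ``covered by $\mathcal{F}$'': one must verify that the cell $C$ witnessing the covering of $x$ has $x-\sigma(C)$ lying inside $C\cap K$ and not merely inside $C$, which is precisely what the inclusion $Q-Q^-\subseteq K$ guarantees. I expect no genuine obstacle beyond isolating this inclusion; the hypotheses $P\subseteq Q^-$ and the apex-at-origin normalization of $K$ are exactly what make it hold, and the assumption that $\mathcal{F}$ has $n$ cells is what makes the correspondence $C\leftrightarrow C\cap K$ a bijection.
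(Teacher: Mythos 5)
Your proof is correct and follows essentially the same route as the paper: your key claim $Q - Q^- \subseteq K$ is exactly the paper's observation that $Q \subset \Pi_i^+ + p$ for every $p \in Q^-$ and $i \in I$, and both arguments then transfer the covering of $Q$ by the translated cells $\theta(C)+\sigma(C)$ of $\mathcal{G}$ to the translated cells $(\theta(C)\cap K)+\sigma(C)$ of $\mathcal{F}$. The only difference is cosmetic: you argue pointwise for $x \in Q$, while the paper phrases the same step as the set inclusion $(C+p)\cap Q \supset (\theta(C)+p)\cap Q$.
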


\begin{proof}
Let $\theta: \cells(\mathcal{F}) \to \cells(\mathcal{G})$ be the map such that $C = \theta(C) \cap K $ for all $C \in \cells(\mathcal{F})$, and let $\sigma: \cells(\mathcal{G}) \to P$ be a covering assignment. 
We want to show that 
\[
\bigcup \limits_{C \in \cells(\mathcal{F})} \left( C+{\sigma(C)} \right) \supset Q.
\]
By hypothesis,
\[
\bigcup \limits_{C \in \cells(\mathcal{F})} \left( \theta(C)+{\sigma(C)} \right)=\R^d \supset Q.
\]
We are done if we can prove that $ (C+p) \cap Q \supset (\theta(C)+p) \cap Q$ for all $p \in Q^-$ and for all~$C \in \cells(\mathcal{F})$.
Note that $Q \subset \Pi_i^+ +p$ for any $p \in  Q^-$ and for all $i \in I$, by definition of reverse polyhedron. 
Since \[C = \theta(C) \cap K =\theta(C) \cap \left( \bigcap_{i \in I} \Pi_i^+\right),\] it follows that 
\[ (C+p) \cap Q =  \left[ (\theta(C)+p) \cap \left( \bigcap_{i \in I} (\Pi_i^+ +p) \right)  \right] \cap Q \supset (\theta(C)+p) \cap Q .\]
Therefore, the cells of $\mathcal{F}$ translated according to $\sigma \circ \theta$ cover $Q$.
\end{proof}

The following proposition is now easy to prove.

\begin{proposition}\label{Theorem:Main}
Let $\mathcal{F} \subset \R^d$ be a full-dimensional conic regular fan with $| \mathcal{F}| =K$ consisting of $n$ cells and $P \subset K^-$ be a set of $n$ points. 
There is a covering assignment for $\mathcal{F}$ and $P$.
\end{proposition}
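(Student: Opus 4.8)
The plan is to reduce the conic case to the complete case handled by Galperin and Galperin in \cref{thm:Gal}, and then to transport the resulting covering of $\R^d$ back onto the cone $K$ using \cref{Lemma:reverse} with the choice $Q=K$; that is, taking every shift $t_i=0$, so that $Q=\bigcap_{i\in I}\Pi_i^+=K$, $Q^-=K^-$, and the hypothesis $P\subset K^-$ of the proposition is precisely the hypothesis $P\subset Q^-$ required by the lemma.

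First I would extend $\mathcal{F}$ to a complete regular fan. By \cref{Lemma:conic}, since $\mathcal{F}$ is conic and regular with $|\mathcal{F}|=K$, it is the restriction to $K$ of a complete regular fan. What \cref{Lemma:reverse} actually needs, however, is a complete regular fan $\mathcal{G}$ whose restriction to $K$ has the \emph{same} number $n$ of cells as $\mathcal{G}$ itself, so that the two fans are related by a bijection $\theta\colon\cells(\mathcal{F})\to\cells(\mathcal{G})$ with $C=\theta(C)\cap K$. I would obtain such a $\mathcal{G}$ from the lift certifying the regularity of $\mathcal{F}$: the $n$ cells of $\mathcal{F}$ are the projections of the $n$ lower facets of a lifted cone $\widetilde{K}\subset\R^{d+1}$; letting $\Pi_1,\dots,\Pi_n$ be the upward-facing hyperplanes supporting these facets, I set $\widetilde{L}=\bigcap_{j=1}^{n}\Pi_j^+$. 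Since $\widetilde{L}\supseteq\widetilde{K}$ and each $\widetilde{K}\cap\Pi_j$ is already $d$-dimensional, each $\widetilde{L}\cap\Pi_j$ is $d$-dimensional as well, so every $\Pi_j$ still supports a facet of $\widetilde{L}$; as $\widetilde{L}$ is cut out by exactly these $n$ halfspaces, it has exactly $n$ lower facets. The vertical direction is interior to $\widetilde{L}$, so projecting its lower hull yields a complete regular fan $\mathcal{G}$ with $n$ cells, and because the vertical side-constraints defining $K$ do not affect the lower hull over $K$, the restriction of $\mathcal{G}$ to $K$ is again $\mathcal{F}$.

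Finally I would apply \cref{thm:Gal} to $\mathcal{G}$ (complete, regular, $n$ cells) and $P$ (which has $n$ points) to obtain a covering assignment $\sigma\colon\cells(\mathcal{G})\to P$, and then invoke \cref{Lemma:reverse} with $Q=K$ to conclude that the cells of $\mathcal{F}$ translated by $\sigma\circ\theta$ cover $K=|\mathcal{F}|$. Because $|\cells(\mathcal{G})|=|\cells(\mathcal{F})|=|P|=n$, both $\sigma$ and $\theta$ are bijections, so $\sigma\circ\theta\colon\cells(\mathcal{F})\to P$ is one-to-one; together with the covering property this shows it is a covering assignment for $\mathcal{F}$ and $P$. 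The main obstacle I anticipate is exactly this cell-count bookkeeping: one must guarantee that the extending complete fan $\mathcal{G}$ has exactly $n$ cells, since otherwise $\theta$ fails to be a bijection and \cref{Lemma:reverse} does not apply. The lower-facet argument sketched above is what resolves it, and it is here that the regularity of $\mathcal{F}$—rather than the mere existence of some complete refinement—does the work.
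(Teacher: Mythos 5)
Your proof is correct and follows essentially the same route as the paper: extend $\mathcal{F}$ to a complete regular fan $\mathcal{G}$ with the same number of cells, apply \cref{thm:Gal} to $\mathcal{G}$ and $P$, and transfer the covering back to $K$ via \cref{Lemma:reverse} (the paper implicitly takes $Q=K$ there, as you do). The only difference is cosmetic: where the paper simply cites \cref{Lemma:conic} as providing a complete fan with the same number of cells, you unfold that lemma's construction (intersecting the upper halfspaces of the hyperplanes supporting the lower facets of the lifted cone) to justify the cell count explicitly, which is exactly how the paper's proof of \cref{Lemma:conic} obtains it.
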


\begin{proof} 
\cref{Lemma:conic} provides us with a fan $\mathcal{G}$ whose restriction to $K$ coincides with $\mathcal{F}$ and has the same number of cells. 
\cref{thm:Gal} applies to $\mathcal{G}$ and $P$. 
It only remains to invoke~\cref{Lemma:reverse} to show that any covering assignment for $\mathcal{G}$ and $P$ can trivially be translated into a covering assignment for $\mathcal{F}$ and $P$.
\end{proof}

We can now prove \cref{thm:recreg}, which generalizes the result (and the proof) in \cite{RoteCG}. 

\begin{proof}[Proof of \cref{thm:recreg}]
The proof proceeds recursively splitting the set of cells of $\mathcal{F}$, the space in $|\mathcal{F}|$ and the points of $P$ into smaller problems. 
Before detailing the recursion, we introduce some notation and include a proof of a lemma by Rote. 

Let $\mathcal{F}_0$ be the finest regular coarsening of $\mathcal{F}$, and $\kappa: \cells(\mathcal{F}) \to \cells(\mathcal{F}_0)$ be the associated coarsening function.
Let $$K= \underset{C \in \cells(\mathcal{F}_0)} \bigcap \Pi_C^+$$ be a $(d{+}1)$-dimensional cone projecting onto $\mathcal{F}_0$, where the hyperplane $\Pi_C$ supports the facet of $K$ that projects onto $C$, for all $C \in \cells(\mathcal{F}_0)$.  
Given a function $\omega:\cells(\mathcal{F}_0) \to \R$, let the power diagram $\varphi^*(\mathcal{F}_0,\omega)$ be the (projection of the) upper envelope of the hyperplane arrangement obtained by vertically shifting the hyperplane $\Pi_C$ by $\omega(C)$, for all $C \in \cells(\mathcal{F}_0)$. 
Similarly, let $\varphi_*(\mathcal{F}_0,\omega)$ denote the lower envelope of these hyperplanes. 
Both power diagrams have as many cells as $\mathcal{F}_0$ and all of them are unbounded. 
In addition, the cells in these diagrams can be paired in a natural way with the hyperplane they come from. 
For simplicity of notation, let $C^*$ denote the cell of $\varphi^*(\mathcal{F}_0,\omega)$ corresponding to $C$, and by $C_*$ the corresponding cell of $\varphi_*(\mathcal{F}_0,\omega)$. 
These pairs of cells satisfy the following property, which is illustrated in \cref{fig:powerdual}. 

\begin{lemma}[Rote \cite{RoteCG}]\label{lem:powerdual}
Every cell $C_* \in \varphi_*(\mathcal{F},\omega)$ is contained in the reverse polyhedron of $C^* \in \varphi(\mathcal{F},\omega)$.
\end{lemma}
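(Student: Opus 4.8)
The plan is to make both objects explicit as intersections of halfspaces bounded by the \emph{same} family of hyperplanes, and then to observe that the reverse-polyhedron operation and the passage from the upper to the lower envelope both amount to taking the opposite side of each of these hyperplanes. Concretely, I would fix coordinates $(x,z) \in \R^d \times \R$ and write each facet-supporting hyperplane $\Pi_C$ of the cone $K$ as the graph $z = \scalprod{a_C}{x}$ of a linear function. These are non-vertical, since each projects onto a full-dimensional cell $C$ of the complete fan $\mathcal{F}_0$, and they pass through the apex of $K$, which we place at the origin. Shifting $\Pi_C$ vertically by $\omega(C)$ replaces it by $z = \scalprod{a_C}{x} + \omega(C)$. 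Then $C^*$, the cell of the upper envelope $\varphi^*(\mathcal{F}_0,\omega)$, is exactly the set of $x$ at which $\scalprod{a_C}{x} + \omega(C)$ is maximal among all cells, and $C_*$, the cell of the lower envelope $\varphi_*(\mathcal{F}_0,\omega)$, is the set of $x$ at which the same quantity is minimal.

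Next I would introduce, for each pair $C \neq C'$, the bisector hyperplane
\[
H_{CC'} : \scalprod{a_C - a_{C'}}{x} = \omega(C') - \omega(C),
\]
and stress that it is literally the same hyperplane whether one is computing the upper or the lower envelope. Writing $H_{CC'}^{+}$ for the halfspace $\scalprod{a_C - a_{C'}}{x} \ge \omega(C') - \omega(C)$ and $H_{CC'}^{-}$ for its complementary halfspace, the maximality condition defining $C^*$ reads exactly $C^* = \bigcap_{C' \neq C} H_{CC'}^{+}$, while the minimality condition defining $C_*$ reads $C_* = \bigcap_{C' \neq C} H_{CC'}^{-}$. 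Thus the two envelope cells are cut out by the identical family of hyperplanes, with the opposite side chosen in every inequality.

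Finally I would invoke the definition of the reverse polyhedron. The facets of $C^*$ are supported by some sub-collection $\{H_{CC'} : C' \in F\}$ of the hyperplanes above, and by definition $(C^*)^- = \bigcap_{C' \in F} H_{CC'}^{-}$. Since $F$ is a subset of all indices $C' \neq C$, intersecting over the full family can only shrink the region, so
\[
C_* = \bigcap_{C' \neq C} H_{CC'}^{-} \subseteq \bigcap_{C' \in F} H_{CC'}^{-} = (C^*)^-,
\]
which is precisely the claim.

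I expect the only delicate point to be the bookkeeping that guarantees the two envelopes are carved out by the identical hyperplane family and that the reverse polyhedron of $C^*$ uses a subset of that family; once the halfspace descriptions are aligned, the inclusion follows immediately from comparing the two intersections. A minor technical check is that all $\Pi_C$ are non-vertical and that each $C^*$ is full-dimensional, so that the facet-supporting hyperplanes appearing in the definition of $P^-$ are well-defined; both facts follow from $\mathcal{F}_0$ being a complete fan whose number of cells is reproduced in each envelope.
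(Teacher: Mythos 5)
Your proof is correct and takes essentially the same approach as the paper's: both rest on the fact that $C^*$ and $C_*$ are cut out by the same family of bisector hyperplanes (the projections of the intersections $(\Pi_C+\omega(C))\cap(\Pi_{C'}+\omega(C'))$), with the opposite halfspace chosen in each case. If anything, your global halfspace formulation is slightly tighter than the paper's wall-by-wall argument: by writing $C_*$ as the intersection over \emph{all} bisectors and noting that the reverse polyhedron $(C^*)^-$ uses only the facet-supporting subfamily of them, you cleanly handle the point the paper glosses over, namely that the cells adjacent to $C_*$ need not be the same as those determining the facets of $C^*$.
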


\begin{proof}
Choose an arbitrary cell $C_*$ of $\varphi_*(\mathcal{F},\omega)$.
Let $D_*$ be an adjacent cell and $W_*=C_* \cap D_*$ be their common wall. 
Consider also the wall $W^*=C^* \cap D^*$. 
Note that both $W_*$ and $W^*$ are supported by the hyperplane $h$, which is the projection of $(\Pi_C+\omega(C)) \cap (\Pi_D+\omega(D))$. 
Clearly $\Pi_C+\omega(C)$ is above $\Pi_D+w_D$ in one side of $h$ while $\Pi_D+\omega(D)$ is above $\Pi_C+\omega(C)$ in the other side and, hence, $C_*$ is contained in one side of $h$ while $D_*$ is contained in the other.
Putting together the analogous observations for all other cells adjacent to $C_*$, we derive the desired statement for this (arbitrarily chosen) cell~$C$. 
\end{proof}

We continue the proof of~\cref{thm:recreg}. 
The main idea is to apply \cref{prop:partition}-(\ref{prop:dispow}) and find~$\omega$ such that $\varphi_*(\mathcal{F}_0,\omega)$ leaves in $C_*$ exactly $|\kappa^{-1}(C)|$ points of~$P$, for every cell $C\in \cells(\mathcal{F}_0)$. 
Then, we will cover each cell $C^*$ of $\varphi^*(\mathcal{F}_0,\omega)$ with the floodlights of $\mathcal{F}$ contained in~$C$ and the points of $P$ in $C_*$. 
If $\mathcal{F}_C=\kappa^{-1}(C) \cap \mathcal{F}$ is regular, we proceed as in the proof of~\cref{Theorem:Main}, using~\cref{Lemma:reverse} to construct an assignment that covers $C^*$ with the points in $P \cap C_*$ and the floodlights of $\mathcal{F} \cap C$. 
If $\mathcal{F}_C=\kappa^{-1}(C) \cap \mathcal{F}$ is not regular but recursively-regular, we repeat the process recursively. 
That is, we split the points of $P \cap C^*$ with a power diagram associated to the finest regular coarsening $\mathcal{G}_0$ of $\mathcal{F}_C$. 
For each cell $D$ of $\mathcal{G}_0$, we get points in $C_* \cap D_*$, which is contained in the reverse polyhedron of $C^* \cap D^*$. 
Hence, the recursion proceeds until the base case, where we can cover the target polyhedron with a regular fan from points in its reverse polyhedron using~\cref{Lemma:reverse}.  
\end{proof}

\begin{figure}
\begin{center}
\includegraphics[scale=0.7]{./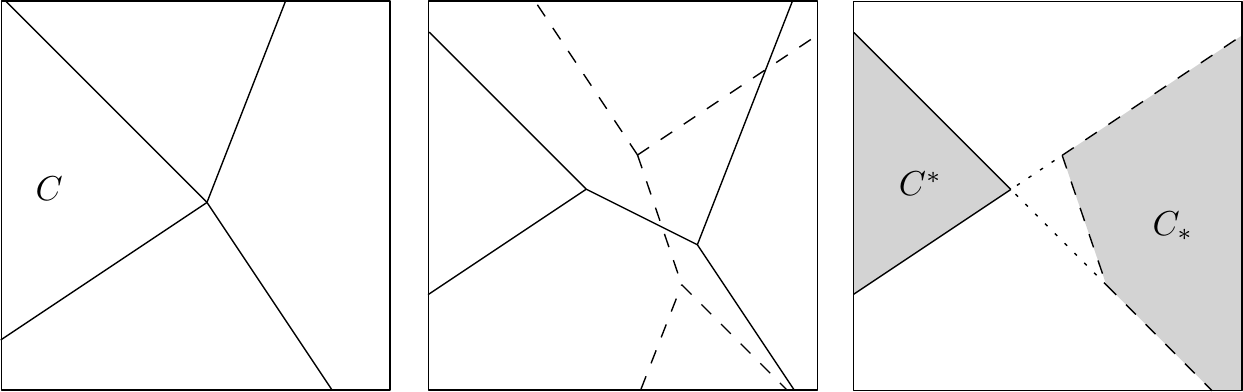}
\end{center}
\caption{A fan $\mathcal{F}$ (left). An instance of $\varphi^*(\mathcal{F},\omega)$ and $\varphi_*(\mathcal{F},\omega)$ (center). A cell $C_*$ contained in the reverse cone of $C^*$.}
\label{fig:powerdual}
\end{figure}


Let $\mathcal{F} \subset \R^d$ be a full-dimensional polyhedral fan with $n$ cells. 
We say that $\mathcal{F}$ is \emph{universally covering} if for any point set $P \subset \R^d$ of $n$ points there exists a covering assignment for $\mathcal{F}$ and $P$. 
After showing that all recursively regular fans are universally covering, one could imagine that all fans are so. 
We prove that this is not the case in dimension three and higher by showing that if a fan is cyclic in the sense described in the introduction, there is a point set for which there is no covering assignment. 
This statement will easily follow from \cref{line}. 
Before proving this theorem, we need to introduce a definition and state a technical lemma.

Let $\mathcal{F} \subset \R^d$ be a full-dimensional polyhedral fan, let $W \in \mathcal{F}$ be a facet incident with $ C , D \in \cells(\mathcal{F})$, and let $v$ be a vector normal to $W$ pointing from $C$ to $D$. 
We say that an assignment $ \sigma: \cells(\mathcal{F}) \to \R^d$ satisfies the \emph{overlapping condition} for $W$  if $ \scalprod{(\sigma(C) - \sigma(D))}{ v} \geq 0 $.
Note that the previous condition is satisfied for a facet and an assignment if and only if the copies of the two cells sharing the facet translated to the assigned points have non-empty intersection. 
We state now the following well-known facts, which are proved in the appendix.

\begin{restatable}{lemma}{lemdir}\label{lem:direction}
Let $K \subset \R^d$ be a full-dimensional polyhedral cone.
\begin{enumerate}[(i)]
\item Any line with direction interior to $K$ has unbounded intersection with $K$. \label{interior} 
\item Any line with direction not contained in $K$ has bounded intersection with $K$.\label{notin}
\end{enumerate}
\end{restatable}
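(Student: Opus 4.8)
The plan is to reduce everything to the halfspace description of $K$ and then read off $L\cap K$ as an intersection of finitely many affine half-lines. First I would translate so that the apex of $K$ sits at the origin; since a translation changes neither the boundedness of $L\cap K$ nor, by the definition of $\rho_v$, which directions $K$ contains, this is harmless. Fixing an irredundant facet description $K=\bigcap_{i\in I}\{x\in\R^d:\scalprod{a_i}{x}\ge 0\}$ with facet normals $a_i\neq 0$, I would parametrize the line as $x(t)=p+tv$ and observe that $x(t)\in K$ iff the affine functions $f_i(t)=\scalprod{a_i}{p}+t\,\scalprod{a_i}{v}$ are all nonnegative. Thus $J=\{t:x(t)\in K\}=\bigcap_{i\in I}\{t:f_i(t)\ge 0\}$ is an intersection of half-lines (or the whole line, or the empty set), hence a closed interval of $\R$; its (un)boundedness is governed entirely by the signs of the numbers $\scalprod{a_i}{v}$.

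For part (\ref{interior}), an interior direction $v$ satisfies $tv\in\operatorname{int}K$ for all $t>0$, and since $\operatorname{int}K=\{x:\scalprod{a_i}{x}>0\text{ for all }i\}$ for the facet normals, this forces $\scalprod{a_i}{v}>0$ for every $i$. Each $f_i$ is then strictly increasing, so $\{t:f_i(t)\ge 0\}$ is a right half-line and $J=[t_0,\infty)$ with $t_0=\max_i\bigl(-\scalprod{a_i}{p}/\scalprod{a_i}{v}\bigr)$, a nonempty unbounded interval. Hence $L\cap K$ contains the ray $\{p+tv:t\ge t_0\}$ and is unbounded.

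For part (\ref{notin}), the hypothesis that the direction of the line is not contained in $K$ I read as ``$K$ contains a ray of the line in neither orientation'', i.e. $v\notin K$ and $-v\notin K$; this is the only reading that makes the statement correct, since if, say, $-v\in K$, then a line meeting $K$ and running off in the $-v$ direction would already be unbounded. Unwinding the halfspace description, $v\notin K$ yields an index $j$ with $\scalprod{a_j}{v}<0$, so that $f_j$ is strictly decreasing and imposes the upper bound $t\le \scalprod{a_j}{p}/(-\scalprod{a_j}{v})$ on $J$; symmetrically, $-v\notin K$ yields an index $k$ with $\scalprod{a_k}{v}>0$, imposing a lower bound on $J$. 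Therefore $J$ lies in a bounded interval and $L\cap K$ is bounded.

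The content is elementary, so I expect no serious obstacle; the two points requiring care are, first, pinning down the meaning of ``direction not contained in $K$'' in part (\ref{notin}) (for a line there is no preferred orientation, so the natural and correct reading is that neither $v$ nor $-v$ lies in $K$, which dovetails with part (\ref{interior})), and second, using an irredundant, facet-defining description of $K$, so that $\operatorname{int}K$ is cut out precisely by the strict inequalities $\scalprod{a_i}{x}>0$ and an interior direction genuinely satisfies $\scalprod{a_i}{v}>0$ for all $i$.
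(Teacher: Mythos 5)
Your proof is correct, and it takes a genuinely different route from the paper's. The paper argues from the generator (V-)representation of the cone: for (i) it writes $K$ as the positive span of finitely many rays $v_i$, expresses the interior direction as $v=\sum_i\gamma_i v_i$ with all $\gamma_i>0$ and $p-q=\sum_i\delta_i v_i$, and concludes that $p+\lambda v\in K$ once $\lambda\ge\max_i(-\delta_i/\gamma_i)$; for (ii) it first triangulates $K$ into simplicial cones and then runs a contrapositive argument inside each simplicial cone, using uniqueness of the coordinates with respect to the simplicial generators to deduce that an unbounded intersection forces all coordinates of $v$ to be nonnegative, i.e.\ that a direction vector of the line lies in $K$. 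You instead work entirely with the halfspace (H-)representation and reduce both parts to the monotonicity of the affine functions $f_i(t)=\scalprod{a_i}{p+tv}$ along the line. Your approach buys a single uniform setup for both parts, no appeal to Minkowski--Weyl or to the existence of triangulations of cones, explicit interval endpoints for $\ell\cap K$, and indifference to whether $K$ is pointed (the paper's mention of ``extreme rays'' is, strictly speaking, delicate for non-pointed cones, which the lemma does not exclude); the paper's approach buys the reduction of (ii) to the simplicial case, where membership is read off from unique coordinates. Two small remarks. First, your interpretive point on part (ii) is well taken and matches what the paper's own proof implicitly does: its contrapositive only treats a ray in the $+v$ direction, so it too really proves that an unbounded intersection puts $v$ \emph{or} $-v$ in $K$, which is exactly your reading of ``direction not contained in $K$''. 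Second, irredundancy of the facet description is not actually needed: for \emph{any} finite halfspace description of $K$, the interior equals the set where all inequalities are strict (if $\scalprod{a_j}{x}=0$ at an interior point $x$, then $x-\epsilon a_j$ would lie in $K$ yet violate the $j$-th inequality), so your care on that point is harmless but superfluous.
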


The next lemma follows easily.

\begin{lemma}\label{lem:overlap}
Let $\mathcal{F} \subset \R^d$ be a full-dimensional polyhedral fan.
A covering assignment for $\mathcal{F}$ must satisfy the overlapping condition for every interior facet of the fan.
\end{lemma}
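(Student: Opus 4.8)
The plan is to prove \cref{lem:overlap} by contradiction, showing that if the overlapping condition fails for some interior facet, then a bounded region of space is left uncovered, contradicting the fact that the assignment covers all of $|\mathcal{F}|$ (which is full-dimensional, hence unbounded). First I would fix an interior facet $W$ incident with cells $C, D \in \cells(\mathcal{F})$, let $v$ be a normal to $W$ pointing from $C$ to $D$, and let $\sigma$ be a covering assignment. Suppose for contradiction that $\scalprod{(\sigma(C) - \sigma(D))}{v} < 0$, i.e. the overlapping condition is violated. By the observation immediately preceding the lemma, this means the translated copies $C + \sigma(C)$ and $D + \sigma(D)$ have \emph{disjoint} interiors; more precisely, they are separated by a hyperplane orthogonal to $v$, with a gap between them in the $v$-direction.

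The key geometric idea is to examine the line $\ell$ through the origin (or through an appropriate point of $W$) with direction $v$, and its translates. Since $W$ is a facet shared by $C$ and $D$, the two cones $C$ and $D$ lie on opposite sides of the hyperplane spanned by $W$, and every ray in direction $v$ crossing into the region covered by $C \cup D$ near $W$ must pass first through $C$ and then through $D$ (or vice versa). Here is where I would invoke \cref{lem:direction}: because $W$ is an \emph{interior} facet, the direction $v$ normal to $W$ should be arranged to be interior to the union $C \cup D$ locally, so that a line with this direction has unbounded intersection with the relevant cones by part~(\ref{interior}), while directions pointing out of a single cone give bounded intersections by part~(\ref{notin}). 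The failure of the overlapping condition shifts $C + \sigma(C)$ and $D + \sigma(D)$ apart along $v$, opening a wedge-shaped uncovered sliver emanating from the shifted image of $W$; I would show this sliver contains a full-dimensional unbounded region that no \emph{other} cell $C + \sigma(C)$ can cover, because the other cells of the fan do not reach across $W$ in the $v$-direction.

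The main obstacle I anticipate is ruling out coverage of the uncovered sliver by the remaining translated cells of $\mathcal{F}$. Near the facet $W$, only $C$ and $D$ are incident, so locally the sliver is genuinely uncovered; but a translated copy of a far-away cell could in principle drift over it. To handle this, I would push the analysis out to infinity along a ray in direction $v$: the translations $\sigma(E)$ are fixed finite vectors, so for points sufficiently far out along the ray, only cones actually containing the direction $v$ can contribute, and by the local structure of the fan at $W$ these are exactly $C$ and $D$. Thus far enough along the ray the sliver between $C + \sigma(C)$ and $D + \sigma(D)$ remains uncovered, giving the contradiction.

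Finally, I would clean up the boundary cases: the lemma asserts the overlapping condition with a weak inequality $\geq 0$, so the contradiction only needs to be derived from the strict violation $< 0$, and the degenerate case $\scalprod{(\sigma(C) - \sigma(D))}{v} = 0$ (copies touching exactly along a shared hyperplane) is permitted and requires no argument. Therefore any covering assignment must satisfy $\scalprod{(\sigma(C) - \sigma(D))}{v} \geq 0$ for every interior facet $W$, which is exactly the claim.
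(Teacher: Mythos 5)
Your overall strategy (assume the condition fails, exhibit an unbounded uncovered region, and use \cref{lem:direction} plus finiteness of the fan to get a contradiction) is the right one, but the escape direction you chose makes the argument collapse. A violation $\scalprod{\sigma(C)-\sigma(D)}{v}<0$ separates the translates by the open slab $\{x : \scalprod{\sigma(C)}{v} < \scalprod{x}{v} < \scalprod{\sigma(D)}{v}\}$, since the hyperplane spanned by $W$ supports $C$ and $D$ with normals $v$ and $-v$ respectively. This slab has \emph{bounded} extent in the direction $v$: any ray with direction $v$ crosses it in a segment of length proportional to $\scalprod{\sigma(D)-\sigma(C)}{v}$ and then leaves it for good. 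So ``pushing the analysis out to infinity along a ray in direction $v$'' takes you out of the sliver you want to show is uncovered, not deeper into it. Worse, your claim that the cells containing the direction $v$ ``are exactly $C$ and $D$'' is false: $v$ is normal to $W$ and need not lie in $C$ or $D$ at all (in the plane, if $C$ and $D$ are thin cones adjacent along $W$, the direction $v$ lies in the interior of some third cell $E$); then $E+\sigma(E)$ covers an unbounded tail of every ray with direction $v$, and no contradiction arises at infinity in that direction.

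The fix --- and this is what the paper does --- is to go to infinity in a direction interior to the facet $W$ itself (for instance, the barycenter of its rays), starting the ray at the midpoint $(\sigma(C)+\sigma(D))/2$. Such a direction lies in the hyperplane spanned by $W$, hence is orthogonal to $v$, so the ray stays inside the separating slab forever and in particular misses both $C+\sigma(C)$ and $D+\sigma(D)$ entirely. Moreover, a direction in the relative interior of $W$ belongs to no cell of $\mathcal{F}$ other than $C$ and $D$ (any cell containing it must contain $W$ as a face, and only two full-dimensional cells can share a facet), so by \cref{lem:direction}-(\ref{notin}) every other translated cell meets the ray in a bounded set. Finitely many cells and an unbounded ray then give the contradiction; for non-complete fans one adds the paper's remark that the ray eventually enters $|\mathcal{F}|$, because its direction is interior to an interior facet and hence interior to $|\mathcal{F}|$. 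Your closing observation that only the strict violation needs to be refuted, the case $\scalprod{\sigma(C)-\sigma(D)}{v}=0$ being allowed, is correct and matches the statement.
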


\begin{proof}
If the condition is not satisfied for the facet $H=C \cap D$, we consider a ray in a direction interior to $H$ (for instance, the barycenter of its rays) and placed at the point $(\sigma(C) + \sigma(D))/2$. 
In view of~\cref{lem:direction}-(\ref{notin}), no cell of $\mathcal{F}$, except for $C$ and $D$, can cover an unbounded part of this ray. 
In addition, none of these two cells intersect it. 
Therefore, since the ray is unbounded and we have finitely many cones, the ray cannot be completely covered. 
If the fan is complete, the proof is finished. 
Otherwise, we should note that the ray will eventually enter $|\mathcal{F}|$, since the direction of the ray is interior to an interior facet of $\mathcal{F}$ and, hence, interior to $|\mathcal{F}|$.
\end{proof}

The previous condition is not sufficient in general, not even in the plane. 
An exception is the case where all the points lie on a line, which is studied in the following lemma. 

\begin{lemma}\label{lem:overlapsuf}
Let $\sigma: \cells(\mathcal{F}) \rightarrow P$ be an assignment for a full-dimensional polyhedral fan $ \mathcal{F} \subset \R^d$ and a point set $P \subset \ell \cap |  \mathcal{F} | ^-$, where $\ell$ is a line. 
If $\sigma$ satisfies the overlapping condition, then it is a covering assignment. 
\end{lemma}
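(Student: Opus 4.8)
The plan is to reduce the statement to a one-dimensional monotonicity argument along the line $\ell$, so it suffices to show that the translated cells cover $|\mathcal{F}|$. Fix an arbitrary $x \in |\mathcal{F}|$; I want to exhibit a cell $C$ with $x \in C + \sigma(C)$. Writing $\ell = \{p_0 + t\,u : t \in \R\}$ with direction $u$, each site is $\sigma(C) = p_0 + t_C\,u$ for a scalar $t_C$, and I introduce the auxiliary line $n_x(t) = x - p_0 - t\,u$. By construction $x \in C + \sigma(C)$ is equivalent to $n_x(t_C) \in C$, i.e.\ to $t_C$ lying in the closed convex parameter set $I_C = \{t : n_x(t) \in C\}$. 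A first useful remark is that every mark $t_C$ lies in the range where $n_x$ actually meets the ground set: since $P \subset |\mathcal{F}|^-$ and $|\mathcal{F}|$ is a convex cone with apex at the origin (so $|\mathcal{F}|^- = -|\mathcal{F}|$ in the conic case, with no constraint in the complete case), we have $-\sigma(C) \in |\mathcal{F}|$, whence $x - \sigma(C) = x + (-\sigma(C)) \in |\mathcal{F}|$ is a point of $n_x$ in $|\mathcal{F}|$ at parameter $t_C$.

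First I would treat $x$ in general position, so that $n_x$ crosses the fan transversally through relative interiors of facets, meeting the cells $C_1, \dots, C_k$ in increasing order of $t$, with consecutive closed parameter intervals $I_{C_i} = [a_{i-1}, a_i]$ partitioning the covered range $[a_0, a_k]$; each $W_i = C_i \cap C_{i+1}$ is an interior facet crossed at $t = a_i$. The heart of the argument is to extract monotonicity of the marks from the overlapping condition. Let $v_i$ be the normal of $W_i$ pointing from $C_i$ to $C_{i+1}$; as $t$ increases through $a_i$, the point $n_x(t)$ passes from $C_i$ (where $\scalprod{z}{v_i} \le 0$) to $C_{i+1}$ (where $\scalprod{z}{v_i} \ge 0$), so $\scalprod{n_x(t)}{v_i} = \scalprod{x - p_0}{v_i} - t\,\scalprod{u}{v_i}$ is increasing in $t$, forcing $\scalprod{u}{v_i} < 0$. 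The overlapping condition for $W_i$ reads $\scalprod{\sigma(C_i) - \sigma(C_{i+1})}{v_i} = (t_{C_i} - t_{C_{i+1}})\,\scalprod{u}{v_i} \ge 0$, and dividing by the negative scalar $\scalprod{u}{v_i}$ yields $t_{C_i} \le t_{C_{i+1}}$. Thus $t_{C_1} \le \dots \le t_{C_k}$, all lying in $[a_0, a_k]$.

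What remains is a purely combinatorial step. Define $J(i)$ to be the index of the interval containing $t_{C_i}$; since both the marks and the intervals are arranged in non-decreasing order, $J$ is a non-decreasing self-map of $\{1, \dots, k\}$, and such a map always has a fixed point $i^\ast$ (take the largest $i$ with $J(i) \ge i$ and verify $J(i^\ast) = i^\ast$ using monotonicity). For this $i^\ast$ we get $t_{C_{i^\ast}} \in I_{C_{i^\ast}}$, i.e.\ $x \in C_{i^\ast} + \sigma(C_{i^\ast})$, so $x$ is covered. Finally the general-position assumption is removed for free: the bad points $x$ (where $n_x$ meets a face of codimension at least two) form a finite union of lower-dimensional sets, so good points are dense in $|\mathcal{F}|$, while $\bigcup_C (C + \sigma(C))$ is a finite union of closed sets, hence closed; a closed set containing a dense subset of the closed set $|\mathcal{F}|$ contains all of it. I expect the main obstacle to be the careful bookkeeping around orientation and degeneracy—fixing the sign $\scalprod{u}{v_i} < 0$ consistently, guaranteeing that consecutive cells along $n_x$ genuinely share a facet (which is exactly what general position provides), and checking the discrete fixed-point claim—rather than any deeper difficulty, since the overlapping condition has been set up precisely to deliver the monotonicity that drives the whole argument.
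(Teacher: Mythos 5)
Your proof is correct, but it follows a genuinely different route from the paper's. The paper argues by continuation along lines parallel to $\ell$ in the \emph{translated} complex: orienting a line $\ell'$ in the direction $v$ of $\ell$, it notes that the cell covering $\ell'$ at infinity still covers it after translation (the translation being parallel to $v$), then walks along $\ell'$ using the overlapping condition to show the line enters each next translated cell before leaving the current one, until reaching the cell containing $-v$; the complete and conic cases are handled separately, the conic one needing an extra argument about where the line crosses $\partial K$. You instead fix a target point $x$ and pull everything back to the \emph{untranslated} fan: the marks $t_C$ are shown to be non-decreasing along the cells crossed by $n_x$ (your sign computation $\scalprod{u}{v_i}<0$ is exactly where the overlapping condition enters, and it is valid because in general position the line passes through cell interiors, ruling out $\scalprod{u}{v_i}=0$), the inclusion $x-\sigma(C)\in|\mathcal{F}|$ places every mark in the covered parameter range (this is where $P\subset|\mathcal{F}|^-$ enters, and it treats the complete and conic cases uniformly since $|\mathcal{F}|^-=-|\mathcal{F}|$ for a cone with apex at the origin), and a monotone self-map of $\{1,\dots,k\}$ having a fixed point finishes the job. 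Both proofs share the same closing step (general position plus closedness of a finite union of translated cells). What your approach buys: a single uniform argument instead of two cases, and an explicit combinatorial certificate (the fixed-point index) of which cell covers $x$; the order-theoretic lemma replaces the paper's somewhat informal ``iterating this argument'' continuation. What the paper's approach buys: the walk-from-infinity picture connects directly to the acyclicity/topological-sort statement of \cref{line}, and it avoids the bookkeeping your route needs (tie-breaking in the definition of $J$ at interval endpoints, and writing the end intervals as unbounded in the complete case, where your ``covered range $[a_0,a_k]$'' should be read as a possibly unbounded interval---a purely cosmetic fix).
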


\begin{proof}
We prove first the complete case. 
Fix an orientation for $\ell$ and let $v$ be a direction vector for it. 
In addition, we can assume without loss of generality that $\ell$ goes through the apex of $\mathcal{F}$. 
Consider any oriented line $\ell'$ with direction $v$. 
At infinity, $\ell'$ is covered by some (untranslated) cell $C$ of $\mathcal{F}$. 
Hence, when $C$ is translated to its assigned point of $P$, it still covers $\ell'$ at infinity because the translation is only in the direction $v$. 
Let $q \in \ell'$ be the point where $\ell'$ leaves $C$.  
If $q$ is in the relative interior of (the translation of) a facet $W=C \cap D$, where $C,D \in \cells(\mathcal{F})$, the overlapping condition for $W$ (and the special position of $P$) ensures that $\ell'$ enters $D$ before leaving $C$. 
Iterating this argument, we eventually reach a cell containing the direction $-v$ that covers the unbounded remainder of~$\ell'$. 
Thus, any line $\ell'$ with direction $v$ and that intersects only $d$- and $(d-1)$-dimensional faces of the translated cells is completely covered. 
The union $\mathcal{U}$ of the remaining lines with direction $v$ (that is, the lines intersecting some $(d-2)$-dimensional face of some translated cell) is a nowhere-dense set and thus is covered as well. 
Indeed, for every line $ \hat{\ell} \in \mathcal{U}$ we can find a line not in $\mathcal{U}$ with direction $v$ (and, hence, covered) arbitrarily close to $\hat{\ell}$. 
Since the cells are closed sets, the limit of a sequence of covered lines must be covered as well, and thus $\mathcal{U}$ is covered. 
Since any line with direction $v$ is covered, $\R^d$ is completely covered.

Assume now that $\mathcal{F}$ is a conic fan with $K=|  \mathcal{F} |$.
Consider a line $\ell'$ with direction $v$ that enters $K$ through a facet. 
Let $C \in \cells(\mathcal{F})$ be the cell containing this facet. 
Since $P \subset \ell \cap  K^- $, the line $\ell'$ should enter the cell $C$ translated to the corresponding point before entering $K$. 
The arguments for the complete case carry over until the line crosses (the translation of) a facet $W$ of a cell $D$ such that $W \subset \partial K$. 
Then, again the fact that $P \subset \ell \cap  K^- $ implies that the $\ell$ had left $K$ before. 
Therefore, if $\ell'$ is a line with direction $v$ that avoids $(d-2)$-dimensional faces of the translated cells (and of $K$), then $\ell' \cap K$ is covered. 
A limit argument as in the complete case ensures that then all the lines with direction $v$ has the portion intersecting $K$ covered, and thus $K$ is covered.
%
%
%
\end{proof}

It can be proved that if the overlapping conditions are satisfied for an assignment of a $2$-dimensional fan, then the uncovered region is a convex polygon.
In addition, it can be tested whether this polygon is empty (and, thus, if the assignment is covering) in time linear in the number of cells of the fan (if the adjacency information is in the input). 

We are now in a position to construct examples consisting of a fan and a point set for which there is no covering assignment.

\begin{theorem}
\label{line}
Given a full-dimensional polyhedral fan $ \mathcal{F} \subset \R^d$ with $n$ cells and set of $n$ points $P \subset \ell \cap |  \mathcal{F} | ^-$, where $ \ell $ is a line, there is a covering assignment for $\mathcal{F}$ and $P$ if and only if $\mathcal{F}$ is acyclic in the  direction of $\ell$. 
\end{theorem}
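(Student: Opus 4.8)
The plan is to translate the geometric problem into a question about a directed graph on the cells and then invoke the two overlapping-condition lemmas. Fix a direction vector $v$ of $\ell$ and write each assigned position as $\sigma(C) = p_0 + s_C\, v$, so that a one-to-one assignment to $P$ corresponds to a bijection between $\cells(\mathcal{F})$ and the $n$ distinct scalars obtained by projecting the points of $P$ onto $\ell$ (they are distinct because $P$ is a set of distinct points on a line). For an interior facet $W = C \cap D$ with normal $v_W$ pointing from $C$ to $D$, the overlapping condition reads $(s_C - s_D)\,\scalprod{v}{v_W} \ge 0$, since $\sigma(C) - \sigma(D) = (s_C - s_D)\,v$. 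I would record this as a directed graph $G$ on $\cells(\mathcal{F})$: whenever $\scalprod{v}{v_W} \ne 0$, orient an arc from the cell encountered first to the one encountered second along $+v$; facets parallel to $v$, where $\scalprod{v}{v_W}=0$, impose no constraint and are ignored. By the sign computation above, every arc $C \to D$ of $G$ is exactly the requirement $s_C \ge s_D$, and moreover a line in direction $v$ crossing $W$ transversally meets the tail before the head, so each arc of $G$ is an instance of the in-front relation $\prec$ in direction $\ell$.

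With this in place the theorem splits cleanly. For the forward direction, assume a covering assignment $\sigma$ exists. By \cref{lem:overlap} it satisfies the overlapping condition on every interior facet, so the scalars $s_C$ satisfy $s_C \ge s_D$ along every arc of $G$. A directed cycle of $G$ would force a chain of equalities among the corresponding $s_C$, contradicting that these scalars are distinct; hence $G$ is acyclic. For the reverse direction, assume $G$ is acyclic. Then a topological order of $G$ lets me assign the $n$ distinct scalars of $P$ to the cells in decreasing order, which guarantees $s_C > s_D$ along every arc and hence the overlapping condition on every interior facet. The resulting bijection is a one-to-one assignment $\sigma:\cells(\mathcal{F})\to P$, and since $P \subset \ell \cap |\mathcal{F}|^-$, \cref{lem:overlapsuf} promotes it directly to a covering assignment (the lemma already handles both the complete and the conic case).

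It remains to connect acyclicity of $G$ with acyclicity of $\mathcal{F}$ in the direction of $\ell$, which is the crux. One implication is immediate: each arc of $G$ realizes the in-front relation, so a directed cycle in $G$ yields a cycle in $\prec$. For the converse, suppose $\prec$ has a cycle $C_1 \prec C_2 \prec \cdots \prec C_k \prec C_1$. Each relation $C_i \prec C_{i+1}$ is witnessed by a line in direction $v$ meeting the open cell $C_i^{\circ}$ before $C_{i+1}^{\circ}$; I would choose this line generic (transversal to all facets and avoiding faces of codimension at least two, exactly as in the limit argument in the proof of \cref{lem:overlapsuf}) and read off the sequence of cells it traverses between the two sample points, each facet crossing producing an arc of $G$ from the earlier to the later cell. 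This exhibits a directed walk in $G$ from $C_i$ to $C_{i+1}$, and concatenating over $i$ yields a closed nontrivial walk, hence a directed cycle in $G$. The main obstacle is making this traversal rigorous in the conic case: the witnessing segment joining the two sample points must remain inside $|\mathcal{F}|$ so that every facet it crosses is an interior facet, to which both $G$ and the overlapping condition apply. Here I would use that $|\mathcal{F}|$ is a convex cone, so the segment between two of its points stays inside it, together with \cref{lem:direction} to control how such lines meet the cone; genericity of the line then disposes of the remaining degenerate (codimension $\ge 2$) crossings. Finally, since reversing a cycle is again a cycle, acyclicity does not depend on the orientation chosen for $\ell$, so the statement is well-posed for the line $\ell$.
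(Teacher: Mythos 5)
Your proposal is correct and takes essentially the same route as the paper's proof: both directions are reduced to the overlapping conditions via \cref{lem:overlap} and \cref{lem:overlapsuf}, using the same digraph on cells induced by the facet normals and a topological sort to produce the assignment in the sufficiency direction. The only difference is one of rigor, not of method: you spell out (via the generic-line walk concatenation and the convexity of $|\mathcal{F}|$ in the conic case) the equivalence between acyclicity of this digraph and acyclicity of the fan's in-front relation, a step the paper's proof asserts only implicitly when it claims a visibility cycle forces a cyclic order on the assigned points.
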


\begin{proof}
Provided that $ \mathcal{F} $ is acyclic in the direction $ v $ of $ \ell $, we can construct a directed acyclic graph having the cells of $ \mathcal{F} $ as vertices and an edge from $D$ to $C$ if the vector $u$ normal to $W =C \cap D $ pointing from $C$ to $D$ satisfies $\scalprod{u}{ v} \geq 0 $. 
If the order as the points $ \sigma(C) $ appear on $ \ell $ (for all $C \in \cells(\mathcal{F})$) respects the partial order represented by such a directed graph, then the overlapping condition holds for $ \sigma $. 
\cref{lem:overlapsuf} ensures that this condition is sufficient for the assignment to be covering.

We prove the other direction by contrapositive. 
If there is a visibility cycle $ \tau = (C_1 \dots C_k)$ in the direction $ v$ (that is, $C_{i}$ is in front of $C_{i+1}$, for all $i \in [k-1]$, and $C_k$ is in front of $C_1$), there is a cycle in the order the points $ \sigma(C_1), \dots, \sigma(C_k) $ should appear in the line, preventing the overlapping condition to be satisfied for all the facets of the fan. 
This has been proven to be necessary for the assignment to be covering. 
\end{proof}



If a covering assignment exists for a given point set in a line and a given fan, it can be computed in $ O(n^2) $ time by performing a topological sort on the graph described in the proof of \cref{line}.  
Since the number of facets is bounded by $n^2$, the algorithm runs in the claimed time. 
Afterwards, it only remains to sort the points, which can be done in $O(n \log n)$ time.  
Moreover, the topological sort algorithm would detect if the graph has a cycle and, therefore, there is no covering assignment. 

\begin{figure}
\begin{center}
\includegraphics[scale=0.7,page=3]{./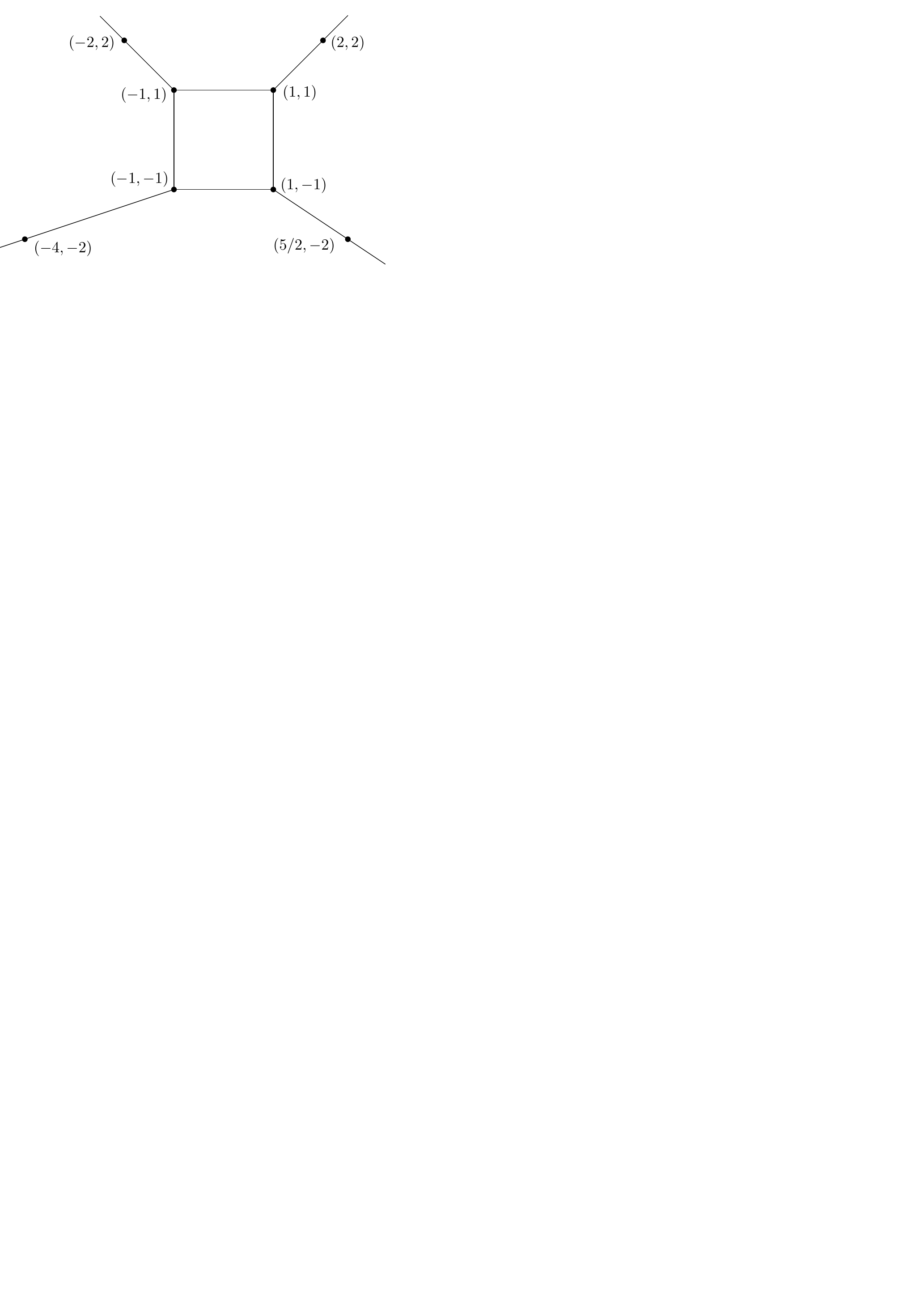}
\end{center}
\caption{Acyclic, not recursively-regular subdivision (left) and the corresponding fan (right).}
\label{fig:counter}
\end{figure}

After understanding the previous theorem, one might be tempted to conjecture that being acyclic is equivalent to being universally covering. 
We exhibit next an example to show that this is not the case.

\begin{proposition}
\label{prop:counter}
There exists full-dimensional polyhedral fan $\mathcal{F} \subset \R^3$ consisting of $n$ cells, and a set of $n$ points $P \subset |\mathcal{F}|^-$ for which there is no assignment satisfying the overlapping conditions. 
The fan has no cycle in any direction.
\end{proposition}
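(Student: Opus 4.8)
The plan is to produce the example as a \emph{cone} over the two-dimensional instance already in hand. Concretely, I would take the acyclic, non-recursively-regular triangulation of \cref{Fig:anrr}, place its underlying point set in the affine plane $\{x_3=1\}\subset\R^3$, and let $\mathcal{F}$ be the fan obtained by coning all of its faces from the origin, as indicated in \cref{fig:counter}. Then every $2$-cell of the triangulation becomes a $3$-dimensional cone of $\mathcal{F}$, every interior edge becomes an interior facet $W=C\cap D$, and the ground set $K=|\mathcal{F}|$ is the pointed cone over the convex hull of the planar point set, so that the admissible positions form $K^-=-K$. Since all cones of $\mathcal{F}$ share the apex at the origin, an in-front cycle from any point or direction should project radially onto an in-front cycle of the planar triangulation, which is acyclic; I would verify this inheritance of acyclicity explicitly (together with the coordinates) in the manner of the companion appendix, thereby establishing that $\mathcal{F}$ has no cycle in any direction.

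Next I would reduce the covering question to the overlapping conditions. By \cref{lem:overlap}, any covering assignment satisfies $\scalprod{\sigma(C)-\sigma(D)}{v_W}\ge 0$ for every interior facet $W$, where $v_W$ is the facet normal pointing from $C$ to $D$; hence it suffices to exhibit a point set $P\subset K^-$ for which this homogeneous linear system admits no one-to-one assignment $\sigma$. It is instructive to record first why $P$ cannot be collinear: if $P\subset\ell$, then \cref{line} together with the acyclicity just established would \emph{guarantee} an overlapping (indeed covering) assignment. The obstruction must therefore be genuinely three-dimensional, exploiting that the normals $v_W$ point in different directions, so that the single-direction ordering argument of \cref{line} no longer linearizes the constraints into a mere partial order.

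The guiding idea for the infeasibility is that the overlapping system is homogeneous and linear in the positions: any nonnegative combination of its inequalities, reindexed by cells, collapses into a translation-invariant form $\sum_{C}\scalprod{\sigma(C)}{w_C}\ge 0$ with $\sum_C w_C=0$ (each facet contributes $+\lambda_W v_W$ on its $C$-side and $-\lambda_W v_W$ on its $D$-side). Because the triangulation of \cref{Fig:anrr} is not recursively regular, its regularity tree has a completely non-regular leaf, so the corresponding block of the regularity system carries a \emph{contradiction cycle}, i.e.\ nonnegative multipliers $\lambda_W$, not all zero, with $\sum_W\lambda_W s_W=0$ (\cref{Gordan}). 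These weights identify the jointly frustrated facets and guide the choice of $P$. I caution, however, that no single global weighting can force negativity for every matching; the certificate must vary with the assignment. Accordingly, I would, for each candidate one-to-one assignment $\sigma$, produce a violated facet, organizing the finitely many assignments by the combinatorial type of the order in which the points of $P$ appear along the relevant normals.

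The main obstacle is exactly this last step, and it is precisely the gap exposed by \cref{line}: for collinear points acyclicity suffices for feasibility, so the counterexample is forced to use non-collinear $P$, and the infeasibility certificate must then depend on the assignment rather than being a single Farkas multiplier. I expect the delicate part to be pinning down explicit coordinates for both $\mathcal{F}$ (equivalently the planar triangulation) and for $P\subset K^-$ for which \emph{every} one-to-one assignment violates some overlapping condition, while separately certifying that the fan has no directional cycle. I would discharge this as a finite, explicit case-and-Farkas verification, mirroring the certificates already used for \cref{Fig:anrr} in the appendix, which simultaneously yields the claimed fan in \cref{fig:counter}, a point set $P\subset|\mathcal{F}|^-$ admitting no assignment satisfying the overlapping conditions, and the absence of any cycle in any direction.
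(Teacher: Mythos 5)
Your strategy is the paper's own: cone a planar, acyclic, completely non-regular subdivision from the origin at a fixed height, reduce covering to the overlapping conditions via \cref{lem:overlap}, note via \cref{line} that the bad point set must be non-collinear, and finish with a finite verification. But there is a genuine gap at the only non-routine step: you never exhibit the point set $P$, nor give any argument that one exists. That existence claim \emph{is} \cref{prop:counter}. Your idea of letting a contradiction cycle $\sum_W \lambda_W s_W = 0$ of the planar regularity system ``guide the choice of $P$'' cannot be turned into a proof as stated: the vectors $s_W$ are rows of the regularity system and live in the space $\R^A$ of height functions on the vertices, whereas the overlapping conditions pair the facet normals $v_W \in \R^3$ with the unknown positions; and, as you yourself concede, the infeasibility certificate must vary with the bijection, so no single dual vector settles all $n!$ cases at once. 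Nor can you appeal to a general principle: whether every non-recursively-regular fan fails to be universally covering is precisely the open problem stated in the paper's concluding section. The paper closes this gap by brute force: an explicit, computer-found set of five points, together with a table in \cref{App:calculations} that lists, for each of the $5!=120$ bijections, a facet whose overlapping condition is violated, with the positive defect computed. Without such a witness (or a replacement existence argument), your write-up is a plan for a proof, not a proof.

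Two secondary discrepancies. First, you cone the full triangulation of \cref{Fig:anrr} (ten triangles, thirteen interior walls), whereas the paper cones the five-cell coarsening shown on the left of \cref{fig:counter}; your choice would require finding ten points and checking $10!$ assignments rather than five points and $120$ assignments, again with no a priori guarantee that a bad point set exists for that finer fan. Second, your acyclicity argument runs in the opposite direction from the paper's: you would deduce the fan's acyclicity from planar acyclicity by radial projection, while the paper proves the fan acyclic directly, exhibiting for every simple cycle of the dual graph (\cref{fig:anrrgraph}) a nonnegative vanishing combination of the oriented facet normals, i.e.\ a certificate in the sense of \cref{Gordan} that the corresponding open halfspaces have empty intersection, and only afterwards, in \cref{app:notrecreg}, transfers acyclicity \emph{down} to the plane. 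Your direction of transfer is plausible (central projection from the apex preserves order along rays avoiding the apex plane, and horizontal directions correspond to planar points at infinity), but it carries a circularity hazard: within the paper, the acyclicity of the \cref{Fig:anrr} triangulation is itself derived from the fan computation, so your proof would additionally need an independent planar acyclicity certificate, which you also leave unexecuted.
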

\begin{proof}
We will provide a three-dimensional fan $\mathcal{F}$ with five cells and a point set $P \subset \R^3$ for which there is no covering assignment. 
More precisely, it can be shown that for each of the $5!$ possible assignments, one of the eight overlapping conditions is violated. To construct $\mathcal{F}$, take the subdivision sketched in \cref{fig:counter} (left) and embed it in the plane $\{(x,y,z) \in \R^3:z=-1/8 \}$. Take then the cones from the origin to each of the cells of this subdivision forming the fan displayed in~\cref{fig:counter} (right). 

Let $P$ be the point set consisting of the points
\begin{align*} 
 p_1=(29,95,89), \; p_2=&(55,19,92), \; p_3=(54,10,82)\\
 p_4=(78,2,68),& \; p_5=(15,40,92).
\end{align*}
There is no assignment for this point set fulfilling all the overlapping conditions, as proved in~\cref{App:calculations}. 
The last statement together with \cref{lem:overlapsuf} allow us to derive that there is no covering assignment for the given fan and the given point set. 
That there is no direction in which $\mathcal{F}$ is cyclic is also proven in~\cref{App:calculations}. 

\end{proof}

The point set $P$ in the previous proof was found with the help of a computer. 
We generated many pseudo-random samples of five points in $\R^3$ trying different precisions for the coordinate generator and several parameters for the distribution.

This last example motivates the conjecture that a fan is covering if and only if it is recursively regular. 
Note that a fan that is not recursively-regular must have a completely non-regular convex region, and this fact could perhaps be used to construct a point set for which no covering assignment exists. 

\paragraph{Illuminating a stage.}
The problem of illuminating a pointed cone using floodlights is closely related to the problem of illuminating a stage considered in~\cite{BGAMSU,Czyzowicz,Dietel,Hiro}. 
Informally, the problem in the plane asks whether given $n$ angles and $n$ points, floodlights having the required angles can be placed on the points in a way that a given segment (the stage) is completely illuminated. 
The problem can be generalized to higher dimensions where our results on covering a cone by a conic fan have new implications (see~\cite{mythesis}).

\section[Other applications and related problems]{Other applications and related problems}
\label{aps}

In this section we describe applications of the theoretical results introduced before. 

\subsection{Redundancy in spider webs}  
We present now a problem in tensegrity theory related to the finest regular coarsening of subdivisions in~$\R^2$. 
We first review the main results we will need.

\paragraph{The Maxwell-Cremona correspondence.}
\label{MaxCrem}

Tensegrity theory studies the rigidity properties of frameworks made of bars, cables and struts from a formal point of view. 
An \emph{abstract framework} $G=(V;B,C,S)$ is a graph on the vertex set $V=\{v_1,\dots,v_n\}$ whose edge set $E$ is partitioned into sets $B$, $C$ and $S$. 
The edges in $B$ are called \emph{bars}, the ones in $C$ are called \emph{cables} and the ones in $S$ are called \emph{struts}. 
They represent links supporting any stress, non-negative stresses and non-positive stresses, respectively. 
A \emph{(tensegrity) framework} (in $\R^2$) is an abstract framework together with an embedding of the vertices $p:V \rightarrow \R^2$ where we put $p(v_i)=p_i$, for $i \in [n]$. 
The framework will be denoted by $G(p)$ and $p$ will be thought of as a point $(p_1,\dots,p_n) \in \R^{2n}$. 
We can consider the configuration space of $G(p)$ to be 
\begin{align}
X(p)=\{(x_1,\dots,x_n) \in \R^{2n} : \| x_i-x_j\| & =\| p_i-p_j \|  \text{, for all } v_iv_j \in B;  \nonumber  \\
 \| x_i-x_j\| & \leq \| p_i-p_j \|  \text{, for all } v_iv_j \in C;  \nonumber \\	
 \| x_i-x_j\| & \geq \| p_i-p_j \|  \text{, for all } v_iv_j \in S\}.\label{rigid}
\end{align}
That is, $X(p)$ is the set of embeddings of $G$ preserving the length of the bars, making the lengths of the cables no longer and the lengths of the struts no shorter than their lengths induced by $p$. 

A tensegrity framework $G(p)$ is \emph{rigid} in $\R^d$ if there exists an open neighborhood $ U \subset \R^{2n}$ of $p$ such that $X(p) \cap U=M(p) \cap U$, where $$M(p) = \{ (x_1,\dots,x_n) \in \R^{2n} : \| x_i-x_j\| =\| p_i-p_j \| \text{, for all }  i,j \in [n] \}$$ is the manifold of rigid motions associated to $p$. 
%
In other words, a framework is rigid if its only motions respecting the constraints (\ref{rigid}) are the motions that rigidly move the whole framework. 
The study of the quadratic constraints in the definition of $X(p)$ can be complicated. 
Because of this, the notion of infinitesimal rigidity was introduced, which captures the rigidity constraints up to the first order. 
Consider the system of linear equations and inequalities obtained by differentiating the constraints in (\ref{rigid}). 
If the solutions of the system correspond only to differentials of motions in the Euclidean group, the framework is \emph{infinitesimally rigid}. 
It is known that infinitesimal rigidity implies rigidity and that the converse is in general not true. 

Given a framework $G(p)$, we say that $\omega: E \to \R$ is a \emph{proper (equilibrium) stress} for $G(p)$ if the following conditions hold:
\begin{enumerate}[(1)]
\item $\omega(v_iv_j)=0$ if $v_iv_j \not \in E$.
\item $\omega(v_iv_j) \geq 0$ if $v_iv_j \in C$.
\item $\omega(v_iv_j) \leq 0$ if $v_iv_j \in S$.
\item Every $v_i \in V$ is \emph{in equilibrium}. That is, $ \underset{v_j \in V} \sum \omega(v_iv_j) ( p_j-p_i )=0 $.
\end{enumerate}
We say that $\omega$ is \emph{strictly proper} if the stresses on all cables and struts are non-zero.
%
Intuitively, $\omega$ is a proper equilibrium stress for $G(p)$ if the forces exerted by the edges (represented by $\omega$) on the vertices add up to zero, taking into account that cables can support only non-negative stresses and struts can support only non-positive ones. 
Clearly, the stress assigning zero to all the edges is proper. 
This stress is called the \emph{trivial stress}.


We state now a the Maxwell-Cremona correspondence, referring to~\cite{CrapoWhiteley} for more details.

\begin{theorem}[Maxwell-Cremona correspondence]
\label{MCcorresp}
Let $G$ be an abstract framework and $G(p)$ be a planar straight-line realization of $G$. 
There is a bijection between proper stresses for $G(p)$ and polyhedral terrains (with one arbitrarily chosen but fixed face at height zero) projecting on~$G(p)$, where positive stress values correspond to valleys, negative stress values correspond to mountains and zero stress values correspond to flat edges in the lifting. 
\end{theorem}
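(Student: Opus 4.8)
The plan is to set up the standard gradient-difference dictionary between piecewise-linear liftings and equilibrium stresses, and to isolate the single nontrivial point---well-definedness of the reconstruction---in the vertex equilibrium condition. Write $R$ for the rotation by $90^\circ$ in the plane, so that for an edge $v_iv_j$ the vector $R(p_j-p_i)$ is normal to the segment $[p_i,p_j]$. The realization $G(p)$ partitions the plane into open faces (including one unbounded face), and a polyhedral terrain projecting on $G(p)$ is, by definition, a continuous function $F$ that restricts to an affine map $A_f(x)=\scalprod{\nabla_f}{x}+c_f$ on each face $f$ and whose creases are edges of $G(p)$.

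First I would treat the direction terrain $\to$ stress. Fix an interior edge $e=v_iv_j$ separating faces $f$ and $f'$. Continuity of $F$ along the whole segment $[p_i,p_j]$ forces $A_f$ and $A_{f'}$ to agree on the line through $p_i,p_j$, whence $\nabla_f-\nabla_{f'}$ is orthogonal to $p_j-p_i$ and can be written uniquely as $\omega(e)\,R(p_j-p_i)$ for a scalar $\omega(e)$; I set $\omega(e)=0$ on edges carrying no crease. The decisive step is the vertex equilibrium: listing the faces $f_1,\dots,f_k$ around an interior vertex $v_i$ in cyclic order and telescoping the gradient jumps across the incident edges gives $\sum_m(\nabla_{f_{m+1}}-\nabla_{f_m})=0$, i.e. $R\bigl(\sum_j \omega(v_iv_j)(p_j-p_i)\bigr)=0$; since $R$ is invertible this is exactly the equilibrium condition~(4). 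The sign dictionary (positive $=$ valley, negative $=$ mountain) follows by reading off, with the chosen orientation of $R$, on which side of the crease the lifting bends upward, and this matches the cable/strut sign constraints~(2)--(3), so a proper stress results.

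For the reverse direction I would reconstruct $F$ from a proper stress $\omega$ by integration on the dual graph. I declare the reference face at height zero, i.e. $A_{f_0}\equiv 0$, and define $\nabla_f$ by summing the prescribed jumps $\omega(e)R(p_j-p_i)$ along any dual path from $f_0$ to $f$. Path-independence is the crux: the cycle space of the dual graph of a planar subdivision is generated by the stars of the interior vertices (this is where planarity, equivalently the simple connectivity of the sphere, enters), and around each such vertex the jumps sum to zero precisely because of~(4); hence $\nabla_f$ is well defined. Having fixed the gradients, I would propagate the constants $c_f$ by requiring $A_f=A_{f'}$ along each shared edge---consistent because $\nabla_{f'}-\nabla_f$ is orthogonal to the edge, so the two affine maps already agree up to an additive constant along it---and well-definedness of the $c_f$ follows since consecutive faces around $v_i$ share affine values at $p_i$ itself, forcing closure once the gradient has closed up. This yields a continuous piecewise-linear $F$ whose gradient jump across $e$ reproduces $\omega(e)$, and normalizing $f_0$ at height zero removes the residual affine freedom.

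Finally I would check that the two assignments are mutually inverse: by construction the stress read off a reconstructed terrain returns $\omega$, and the terrain reconstructed from a read-off stress agrees with the original on $f_0$ and has the same gradient jumps everywhere, hence coincides with it; the sign conventions align the valley/mountain/flat trichotomy with the sign of $\omega$. The only genuinely delicate point is the path-independence argument, where both the equilibrium equations and the planar topology of the subdivision are needed; handling the unbounded face and the boundary vertices cleanly---by treating the exterior as a single face carrying its own affine piece---is the place where I would be most careful.
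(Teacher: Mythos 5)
The paper itself offers no proof of this theorem: it is quoted as a known result with a pointer to Crapo--Whiteley for details, so there is no in-paper argument to compare yours against; your proposal has to stand on its own, and it does. What you give is the classical proof of the Maxwell--Cremona correspondence: gradient jumps across creases are forced to be normal to the projected edges, equilibrium at a vertex is exactly the telescoping of those jumps around its star, and the converse direction is an integration over the dual graph whose path-independence reduces, via the structure of the dual cycle space, to precisely those vertex relations; the propagation and closure of the constants $c_f$ and the mutual-inverse check are also handled correctly. One claim should be stated more carefully: the cycle space of the dual graph is generated by the stars of \emph{all} vertices (with the unbounded face counted as a dual vertex), not only the strictly interior ones --- for a bare triangle there are no interior vertices, yet the dual cycle space is already two-dimensional. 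This imprecision is harmless here for two reasons that your own closing caveat essentially supplies: condition (4) of the paper's definition demands equilibrium at \emph{every} vertex of $V$, so all vertex stars are available as relations in the stress-to-terrain direction; and, dually, treating the exterior as a genuine face of the terrain is what makes the telescoping argument at hull vertices both meaningful and necessary in the terrain-to-stress direction. With that reading, the only remaining bookkeeping is fixing the orientation of $R$ consistently with the choice of which face lies to the left of a directed edge, which pins down both the well-definedness of $\omega(e)$ and the valley/mountain sign dictionary.
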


A \emph{spider web} is a framework (in $\R^2$) whose graph is connected, consisting only of cables, and with the vertices in the convex hull pinned down (that is, in equilibrium by definition). 
The two following results relate equilibrium stresses of a framework with its rigidity and infinitesimal rigidity. 

\begin{lemma}[Connelly \cite{Con82}]
\label{Conn}
If a spider web has a strictly proper stress, then it is rigid.
\end{lemma}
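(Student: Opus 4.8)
The plan is to exhibit a strictly convex ``energy'' whose unique minimizer is the given configuration $p$, and then to show that every configuration in the constrained space $X(p)$ also minimizes this energy, so that it must coincide with $p$. Concretely, given a strictly proper stress $\omega$ (so $\omega(v_iv_j)>0$ on every cable, a spider web having only cables), I would define
\[
E(x)=\sum_{v_iv_j\in C}\omega(v_iv_j)\,\|x_i-x_j\|^2,
\]
viewed as a function of the positions of the \emph{interior} vertices only, the convex-hull vertices being pinned at their values in $p$. Computing $\nabla_{x_i}E=2\sum_j\omega(v_iv_j)(x_i-x_j)$ for each interior vertex $v_i$, the equilibrium condition~(4) in the definition of a proper stress says precisely that this gradient vanishes at $x=p$; hence $p$ is a critical point of $E$.

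The first key step is to verify that $E$ is \emph{strictly} convex in the free coordinates. Its Hessian is (twice) the weighted graph Laplacian of the cable graph, grounded at the pinned convex-hull vertices; because $\omega$ is strictly proper all edge weights are positive, and because the spider web is connected with at least three non-collinear pinned vertices, the grounded Laplacian is positive definite. The quadratic form decouples into identical contributions from the two coordinate axes, each a grounded Laplacian form, so it suffices to treat one axis. Consequently $E$ has $p$ as its unique global minimizer, giving $E(x)\ge E(p)$ for all admissible $x$, with equality only at $x=p$.

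The second key step is the constraint inequality. For any $x\in X(p)$ the cable constraints give $\|x_i-x_j\|\le\|p_i-p_j\|$ for every $v_iv_j\in C$, and since all weights are positive this yields $E(x)\le E(p)$ term by term. Combining this with the strict-convexity inequality forces $E(x)=E(p)$ and hence $x=p$. Thus $X(p)$ contains $p$ as an isolated point; since pinning the convex hull collapses the space $M(p)$ of rigid motions to a single point locally, we obtain $X(p)\cap U=M(p)\cap U$ for a suitable neighborhood $U$, which is exactly the definition of rigidity.

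The main obstacle is the strict-convexity claim: one must check that positive definiteness of the grounded Laplacian genuinely uses both the positivity of \emph{all} cable stresses (guaranteed by strict properness, which is why a merely proper stress would not suffice) and the fact that the pinned boundary anchors the connected cable graph. Once this is in place, the two inequalities pinch $E$ and the conclusion is immediate; a secondary, routine point is to confirm that pinning the convex-hull vertices indeed makes $M(p)$ locally trivial, so that ``rigid'' reduces to ``locally unique.''
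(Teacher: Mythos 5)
The paper does not actually prove this lemma --- it is quoted from Connelly's paper \cite{Con82} as a known result --- and your argument is precisely the classical energy proof from that reference, carried out correctly: the stress-energy $E(x)=\sum_{v_iv_j\in C}\omega(v_iv_j)\|x_i-x_j\|^2$ has $p$ as a critical point by the equilibrium condition, is strictly convex because the grounded weighted Laplacian of a connected graph with positive weights and a nonempty pinned set is positive definite, and the cable inequalities pinch $E(x)$ between $E(p)$ and $E(p)$, forcing $x=p$. One small clarification: positive definiteness of the grounded Laplacian needs only connectivity and at least one pinned vertex (not three non-collinear ones); the non-collinearity of the pinned hull vertices is what you need in the final step, to conclude that the rigid motions fixing them reduce $M(p)$ to the single point $p$ locally.
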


\begin{lemma}[Roth and Whiteley \cite{RW}]
\label{RothWhi}
If a tensegrity framework is infinitesimally rigid, then it has a strictly proper stress.
\end{lemma}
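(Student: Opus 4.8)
The plan is to recast both hypothesis and conclusion as statements about the \emph{rigidity matrix} $R$ of $G(p)$ and then to invoke a theorem of the alternative of exactly the flavour used in \cref{Gordan} and \cref{lem:GordanExt}. Recall that $R$ has one row $r_e$ per edge $e=v_iv_j$, namely the covector acting on a displacement $u=(u_1,\dots,u_n)\in\R^{dn}$ by $\scalprod{r_e}{u}=\scalprod{p_i-p_j}{u_i-u_j}$. In these terms an infinitesimal flex is a $u$ with $\scalprod{r_b}{u}=0$ for bars, $\scalprod{r_c}{u}\le 0$ for cables and $\scalprod{r_s}{u}\ge 0$ for struts; the framework is infinitesimally rigid precisely when every such $u$ is trivial, i.e.\ lies in the subspace $T$ of differentials of Euclidean motions, on which every row vanishes. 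Dually, a proper stress is exactly a vector $\omega\in\R^E$ in the left kernel of $R$ (the equilibrium condition $\sum_{j}\omega(v_iv_j)(p_j-p_i)=0$ reads $R^\top\omega=0$) that is non-negative on cables and non-positive on struts, and it is strictly proper when these sign conditions hold strictly.

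First I would normalise away the struts: replacing $r_s$ by $-r_s$ and $\omega_s$ by $-\omega_s$ turns every strut into a cable, so it suffices to treat frameworks with bars and cables only, where the goal is a stress $\omega$ with $R^\top\omega=0$, $\omega_b$ free and $\omega_c>0$ for all cables. Next I would quotient by $T\subseteq\ker R$, passing to the induced rows $\bar r_e$ on $V=\R^{dn}/T$; this is legitimate because each $r_e$ annihilates $T$, and it turns ``trivial'' into ``zero'' and infinitesimal rigidity into the clean statement that the cone $\{u\in V:\scalprod{\bar r_b}{u}=0,\ \scalprod{\bar r_c}{u}\le 0\}$ reduces to $\{0\}$. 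The conclusion we want---a strictly positive combination of the cable rows lying in the span of the bar rows---is precisely the request for a point in the relative interior of the cone of proper stresses, which is the very object analysed in \cref{MRS}.

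With this setup the result follows from the alternative dual to \cref{Gordan}: either there is $\omega$ with $\sum_c\omega_c\bar r_c+\sum_b\omega_b\bar r_b=0$ and $\omega_c>0$, or there is a $y\in V$ with $\scalprod{\bar r_b}{y}=0$, $\scalprod{\bar r_c}{y}\ge 0$ for all cables and $\scalprod{\bar r_c}{y}>0$ for some cable. In the second case $-y$ satisfies $\scalprod{\bar r_b}{-y}=0$ and $\scalprod{\bar r_c}{-y}\le 0$, so it is an infinitesimal flex, and it is non-trivial because $y\neq 0$ in $V$ (some coordinate is strictly positive). This contradicts infinitesimal rigidity, so the first alternative holds and yields the desired stress; undoing the strut normalisation produces a strictly proper stress. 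If one prefers to use only \cref{lem:GordanExt} as stated, applying it once per non-bar edge $e$ produces a proper stress strict on $e$, and summing these finitely many stresses---equivalently, taking a relative-interior point of the proper-stress cone as in \cref{minimalrelaxation}---gives one that is strict on all cables and struts at once.

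The main obstacle is bookkeeping rather than depth: one must handle the trivial motions correctly (the quotient by $T$) so that the dual certificate produced by the alternative is genuinely a \emph{non-trivial} flex and not merely a rigid motion, and one must keep the three sign regimes (bars, cables, struts) consistent through the normalisation. The one genuinely non-formal point is upgrading ``some cable carries non-zero stress'' to ``every cable carries non-zero stress,'' which is exactly the passage to the relative interior of the stress cone and is where infinitesimal rigidity is used for all edges simultaneously.
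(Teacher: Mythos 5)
The paper itself does not prove \cref{RothWhi}: it is quoted from Roth and Whiteley~\cite{RW} and used as a black box, so there is no internal proof to compare against. Your argument is, in substance, the standard proof of this direction of the Roth--Whiteley theorem, and its main line is sound: the rigidity-matrix formulation, the sign normalisation turning struts into cables, the quotient by the space $T$ of differentials of rigid motions (legitimate, since every row $r_e$ annihilates $T$, so the rows descend and ``non-trivial flex'' becomes ``non-zero flex''), and the final contradiction extracted from the dual certificate are all correct. One caveat on the main route: the theorem of the alternative you invoke --- either a combination $\sum_b\omega_b\bar r_b+\sum_c\omega_c\bar r_c=0$ with \emph{all} $\omega_c>0$, or a vector $y$ with $\scalprod{\bar r_b}{y}=0$, $\scalprod{\bar r_c}{y}\ge 0$ and \emph{some} inequality strict --- is a Stiemke-type alternative, not the statement of \cref{lem:GordanExt}. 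In the paper's lemma the strictness quantifiers sit the other way around (all-strict on the primal side, some-strict on the dual side); yours is the transposed member of the classical Gordan--Stiemke dual pair. It is true and follows from LP duality (which the paper remarks would also prove \cref{lem:GordanExt}), but it needs its own justification or citation; it is not literally the lemma you point to.

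The genuine flaw is in your fallback: ``applying \cref{lem:GordanExt} once per non-bar edge $e$ produces a proper stress strict on $e$'' does not work as stated. Taking $E$ to be the set of bars and refuting the primal $S(M,E)$ via infinitesimal rigidity, the lemma hands you a proper stress that is strict on \emph{some} unspecified cable; Gordan-type duality gives no control over which dual coordinate is positive, and the natural variants (e.g.\ moving the other cables into $E$) produce certificates whose coefficients on those cables are unrestricted in sign, hence not proper stresses. Forcing strictness at a prescribed edge $e_0$ requires the mixed weak/strict alternative: either there is a proper stress with $\omega_{e_0}>0$, or there is a flex $y$ with $\scalprod{\bar r_b}{y}=0$, $\scalprod{\bar r_c}{y}\le 0$ for all cables and $\scalprod{\bar r_{e_0}}{y}<0$ --- again Farkas/LP duality rather than \cref{lem:GordanExt} as stated. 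Once those per-edge stresses exist, summing them, or taking a relative-interior point of the proper-stress cone in the spirit of \cref{minimalrelaxation}, is fine. So keep your main argument, state and prove (or cite) the Stiemke-type alternative explicitly, and drop the claim that the per-edge version is an application of \cref{lem:GordanExt}.
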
 

Let $G$ be an abstract spider web on the vertex set $V$, and let $p:V \to \R^2$ be an embedding corresponding to a non-crossing straight-line realization of $G$. 
Assume that the vertices lying on the convex hull of $p(V)$ are fixed (they are, therefore, in equilibrium by definition). 
Note that the straight-line realization of $G(p)$ can be thought of as a polyhedral subdivision of the convex hull of $p(V)$ in the plane. 
Throughout this section, this subdivision will be denoted by~$\mathcal{S}=\mathcal{S}(G(p))$ and called the subdivision \emph{associated to} the spider web.

The Maxwell-Cremona correspondence states that $G(p)$ has a strictly-proper stress if and only if $\mathcal{S}$ is regular. 
From this fact, it is easy to derive the following proposition.

\begin{proposition}
Let $\mathcal{S}$ be the subdivision associated to a planar spider web $G(p)$. 
\begin{enumerate}[(i)]
\item Only the cables of $G$ corresponding to edges of the finest regular coarsening of $\mathcal{S}$ support a positive stress in any equilibrium stress of $G(p)$.
\item If $\mathcal{S}$ is recursively regular, then $G(p)$ is rigid.
\end{enumerate}
\end{proposition}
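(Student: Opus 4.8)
The plan is to use the Maxwell--Cremona correspondence (\cref{MCcorresp}) as the bridge between equilibrium stresses and liftings, together with the universal property of the finest regular coarsening (\cref{thm:coarsening}) and Connelly's rigidity criterion (\cref{Conn}).

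For part~(i), I would start from an arbitrary proper equilibrium stress $\omega$ of $G(p)$. Since $G$ consists only of cables, $\omega \ge 0$ on every edge. By \cref{MCcorresp}, $\omega$ corresponds to a polyhedral terrain lifting $G(p)$ in which positive stresses are valleys, negative stresses are mountains and zero stresses are flat edges. As there are no struts, there are no mountain edges, so the terrain is convex and is therefore the lower hull of a lifting; its projection $\mathcal{S}_\omega$ is a \emph{regular} subdivision. The edges carrying positive stress are exactly the non-flat edges of this lifting, that is, the edges of $\mathcal{S}_\omega$, and merging the cells of $\mathcal{S}$ across the flat (zero-stress) edges shows that $\mathcal{S}_\omega$ is a coarsening of $\mathcal{S}$. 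Being a regular coarsening, $\mathcal{S}_\omega$ is by \cref{thm:coarsening} a coarsening of the finest regular coarsening $\mathcal{S}_0$, so $\mathcal{S}_0$ refines $\mathcal{S}_\omega$ and every edge of $\mathcal{S}_\omega$ is an edge of $\mathcal{S}_0$. Hence every positively stressed cable lies among the edges of $\mathcal{S}_0$, which is the claim.

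For part~(ii), I would argue by induction on the number of cells of $\mathcal{S}$. If $\mathcal{S}$ is regular, then $G(p)$ admits a strictly proper stress (by the stated consequence of \cref{MCcorresp}) and \cref{Conn} gives rigidity; this is the base case. If $\mathcal{S}$ is recursively regular but not regular, let $\mathcal{S}_0$ be its finest regular coarsening, which is proper, non-trivial and regular. Consider the sub-framework $G_0(p)$ of $G(p)$ formed by all edges of $\mathcal{S}$ lying on the $1$-skeleton of $\mathcal{S}_0$: since $\mathcal{S}_0$ is regular and subdividing its walls by collinear vertices preserves regularity, $G_0(p)$ has a strictly proper stress and is thus rigid by \cref{Conn}. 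Consequently, restricting any cable-respecting motion of $G(p)$ to $G_0(p)$ yields a motion of a rigid pinned web, so every vertex on the $1$-skeleton of $\mathcal{S}_0$ stays fixed. Now for each cell $C \in \cells(\mathcal{S}_0)$, the restriction $\kappa^{-1}(C)$ is a spider web on the convex polygon $C$ that is recursively regular (by definition of recursive regularity) and has strictly fewer cells; its boundary vertices are fixed by the previous step, so by the induction hypothesis---applied with the whole boundary of $C$ pinned, which only strengthens rigidity---all interior vertices of $C$ are fixed as well. Since every vertex of $\mathcal{S}$ lies either on the $1$-skeleton of $\mathcal{S}_0$ or in the interior of one of these cells, the only admissible motion is trivial, and $G(p)$ is rigid.

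The main obstacle I anticipate is the bookkeeping in part~(ii): making the decomposition rigorous requires that restricting an admissible motion of $G(p)$ to the skeleton $G_0(p)$ genuinely yields an admissible motion of $G_0(p)$, that Connelly's \emph{local} rigidity can be combined across the finitely many cells by intersecting the relevant neighborhoods, and that the recursion on $\kappa^{-1}(C)$ remains legitimate even when the boundary of $C$ carries vertices of $\mathcal{S}$ that are not corners of $C$---handled by observing that pinning extra boundary vertices only adds constraints and hence preserves rigidity. Care is also needed to confirm that the collinearly refined skeleton is still regular, so that \cref{MCcorresp} and \cref{Conn} genuinely apply to $G_0(p)$.
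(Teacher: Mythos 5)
Your proposal is correct and takes essentially the same route as the paper: part (i) is the Maxwell--Cremona correspondence combined with \cref{thm:coarsening} (the paper says the edges omitted from the finest regular coarsening are flat in any convex lifting, you say the crease subdivision of the lifting induced by any proper stress is a regular coarsening and hence coarsens $\mathcal{S}_0$---the same argument), and part (ii) is the paper's recursion: a strictly proper stress on the skeleton of the finest regular coarsening, Connelly's \cref{Conn}, then recursing into the cells with their boundaries held fixed. The only nit is that recursive regularity of the restrictions $\kappa^{-1}(C)$ for cells $C$ of $\mathcal{S}_0$ is not literally ``by definition'' (the definition supplies \emph{some} proper regular coarsening, not necessarily $\mathcal{S}_0$); it follows from \cref{thm:recursive}, which the paper proves and you can cite, so this is a citation gap rather than a mathematical one.
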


\begin{proof}
\hfill
\begin{enumerate}[(i)]
\item Since we showed that the edges omitted in the finest regular coarsening are lifted into a plane by any convex lifting (\cref{thm:coarsening}), the Maxwell-Cremona correspondence indicates that the corresponding cables will receive no stress in any proper equilibrium.
\item The finest regular coarsening of the subdivision corresponds to a set of cables such that there is an equilibrium stress assigning positive values to all of them. 
Therefore, the spider web defined by this set of cables is rigid by \cref{Conn}. 
For each of the subsubdivisions defined by the finest regular coarsening, we can assume that the vertices in the corresponding convex hull are now fixed and apply the previous argument recursively. 
\qedhere
\end{enumerate}
\end{proof}

\cref{Fig:spider} illustrates the previous result. 
The spider web represented in it is constructed from a triangulation appearing in~\cite{TriangInterNice}.
The edges omitted in the picture to the right, which do not belong to the finest regular coarsening of the  associated subdivision, support no stress in any equilibrium. 
Therefore, they can be considered redundant.

\begin{figure}
\begin{center}
 \includegraphics[scale=0.7]{./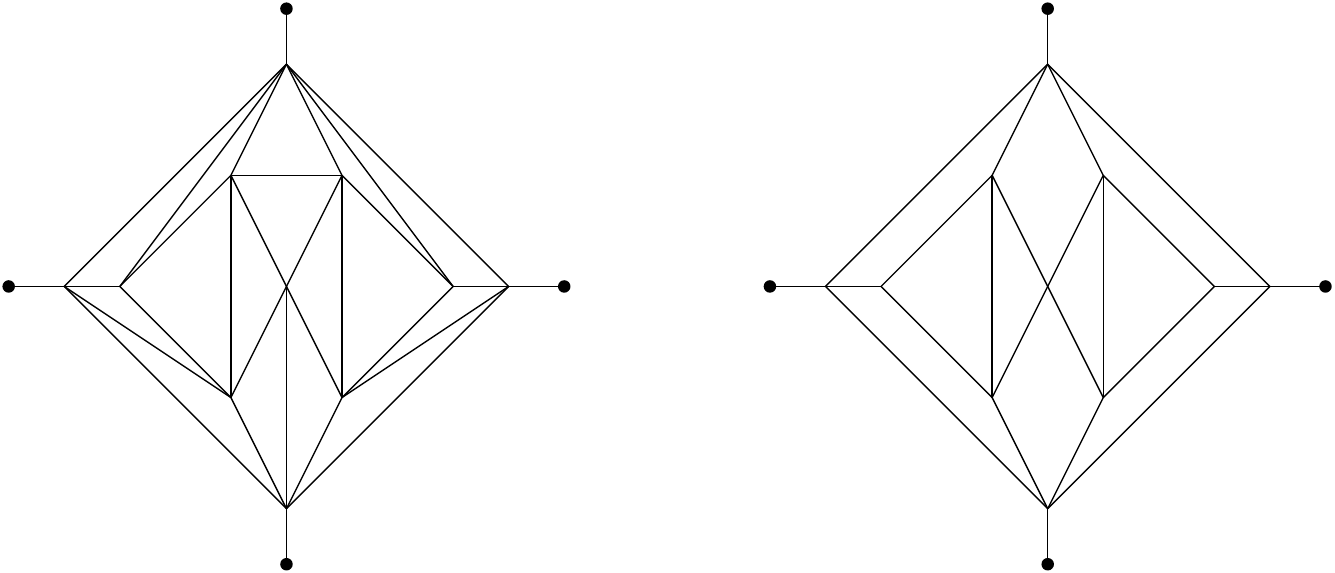}
\end{center}
\caption{A spider web (left) and the result of removing redundant cables (right).}
 \label{Fig:spider}
\end{figure}

Note that even thought recursively-regular subdivisions are associated to rigid spider webs, these might be far from infinitesimally rigid.
For instance, if a regular subdivision is refined by adding an edge whose endpoints are interior to previous edges, the result is recursively regular but obviously not infinitesimally rigid. 
We next translate a well-known fact of infinitesimal rigidity to the language of finest regular coarsenings.

\begin{corollary}
The subdivision associated to a infinitesimally rigid spider web is its own finest regular coarsening (hence, it is regular).
\end{corollary}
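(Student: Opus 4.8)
The plan is to chain the Roth--Whiteley lemma with the Maxwell--Cremona correspondence, and then invoke the fact that a regular subdivision is its own finest regular coarsening. First I would apply \cref{RothWhi}: since the spider web $G(p)$ is infinitesimally rigid, it admits a strictly proper stress $\omega$. Because a spider web consists solely of cables, the adjective ``strictly proper'' forces $\omega$ to take a strictly positive value on every edge of $G$.

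Next I would feed this into the Maxwell--Cremona correspondence in the form recorded just above the proposition, namely that $G(p)$ carries a strictly proper stress precisely when its associated subdivision $\mathcal{S}$ is regular. The existence of the positive stress $\omega$ therefore certifies directly that $\mathcal{S}$ is regular.

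Finally I would argue that a regular subdivision coincides with its finest regular coarsening. By \cref{thm:coarsening}, the finest regular coarsening $\mathcal{S}_0$ of $\mathcal{S}$ is a regular coarsening of $\mathcal{S}$ such that every regular coarsening of $\mathcal{S}$ is in turn a coarsening of $\mathcal{S}_0$; equivalently, $\mathcal{S}_0$ is at least as fine as every regular coarsening of $\mathcal{S}$. Since $\mathcal{S}$ is itself a (trivial) regular coarsening of $\mathcal{S}$, this forces $\mathcal{S}_0$ to be at least as fine as $\mathcal{S}$, while $\mathcal{S}_0$ being a coarsening of $\mathcal{S}$ makes it at most as fine; hence $\mathcal{S}_0=\mathcal{S}$, which is the claim.

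There is no serious obstacle here, as the corollary is essentially a translation of a known rigidity fact into the language of coarsenings. The only point that demands a little care is the bookkeeping around the word ``strict'': one must verify that infinitesimal rigidity yields a stress that is nonzero (and hence positive) on \emph{every} cable, so that it matches the strict hypothesis under which Maxwell--Cremona delivers genuine regularity rather than merely a convex lifting with possibly flat edges.
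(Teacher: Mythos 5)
Your proof is correct and takes essentially the same route as the paper: both chain \cref{RothWhi} with the Maxwell--Cremona correspondence and the properties of the finest regular coarsening from \cref{thm:coarsening}. The only (immaterial) difference is the direction of the argument: the paper first shows that no edge can be omitted from the finest regular coarsening (since omitted edges carry no stress, by part (i) of the preceding proposition) and deduces regularity as a consequence, whereas you first deduce regularity from the strictly proper stress and then conclude via antisymmetry of the refinement poset that a regular subdivision equals its own finest regular coarsening.
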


\begin{proof}
As \cref{RothWhi} states, if a framework is infinitesimally rigid, it has a strictly-proper stress. 
The edges omitted in the finest regular coarsening of the associated subdivision cannot participate in such stress. 
Therefore, none of the edges are omitted in the finest regular coarsening of the subdivision. 
\end{proof}

\subsection{Embeddings of directional graphs}

As shown in \cref{floodlights}, for the existence of a covering assignment it is necessary that there is an assignment satisfying the overlapping condition for every interior facet of the fan. 
Moreover, the examples we have found so far of polyhedral fans and point sets for which there is no covering assignment fail to fulfill the second condition. 
Hence, it could be that this condition is also sufficient. 
In any case, we think that it is of independent interest to study this condition alone, which is connected to a problem on graph embedding. 

Note first that the overlapping condition for a facet can be expressed as a requirement on the order in which the two involved points are swept by a hyperplane parallel to the facet. 
That is, we want to know which of two points ``appears first'' in a specific direction. 
The problem we study here asks whether, given set of relations of this type (stated on labels) and a point set, we can find a one-to-one labeling of the point set such that every relation is satisfied. 
We next describe the problem formally.

A \emph{directional graph} is a tuple $\overrightarrow{G}=(V,h)$, where $V$ is a set and ${ h:V \times V \rightarrow \R^d}$ is a function such that $h(v,u)=-h(u,v)$, for all $v,u \in V $. 
The elements of $V$ are called \emph{vertices}. 
We say that $u,v \in V$ are connected by an \emph{edge} if $h(v,u) \neq 0$. 
The \emph{dimension} of $\overrightarrow{G}$ is $d$. 
%
We may regard this structure as a directed graph with a non-zero direction associated to every edge. 
Such a graph will be called the \emph{underlying graph} of the directional graph. 
Note that the condition in the definition already implies that $h(v,v)=0$, for all $v \in V$. 

\begin{figure}
\begin{center}
\includegraphics[page=2,scale=0.7]{./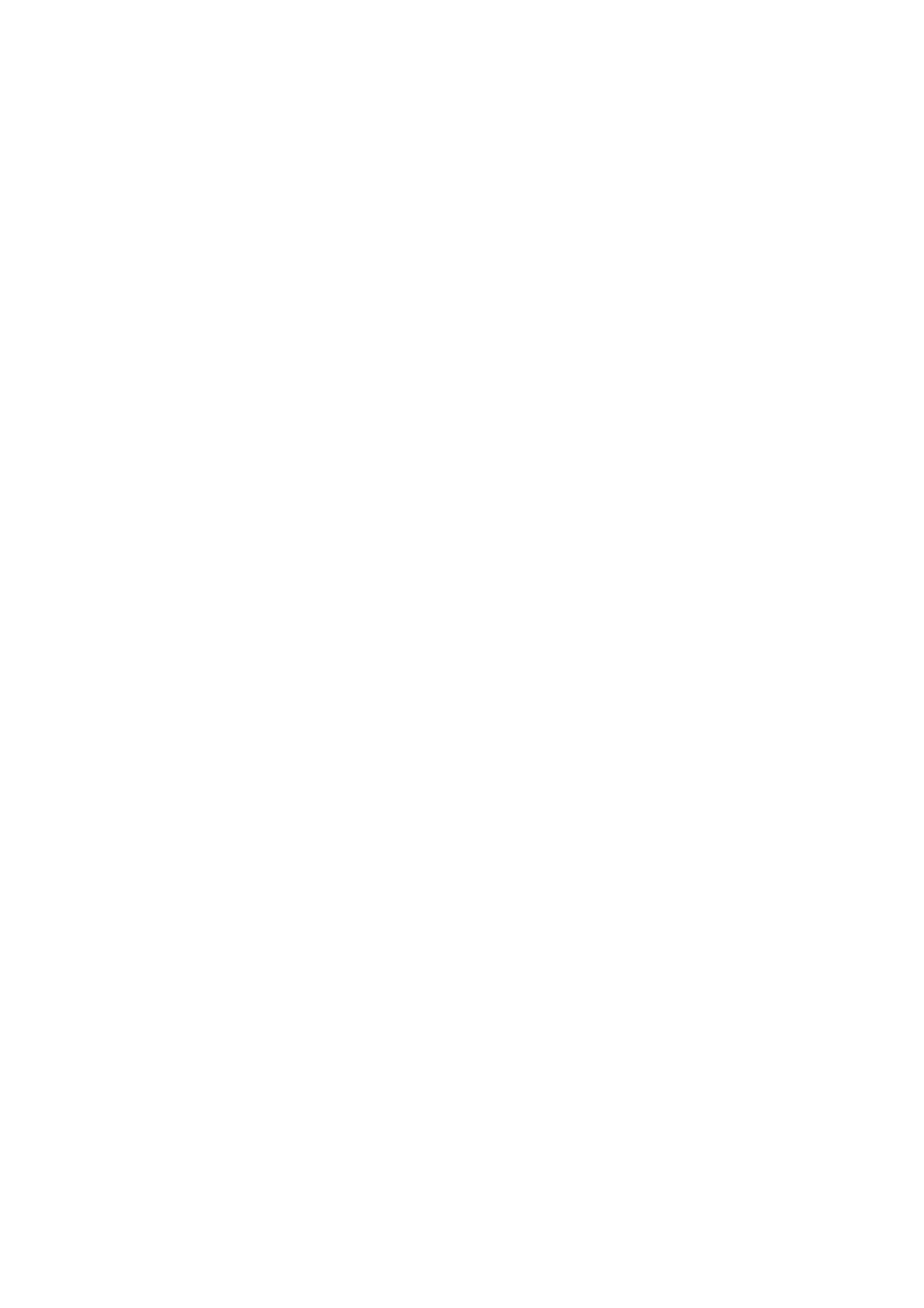}
\end{center}
\caption{A directional graph (left), a drawing (center), and an embedding (right).}
\label{fig:embedd}
\end{figure}

An \emph{embedding} of a $d$-dimensional directional graph  $\overrightarrow{G}=(V,h)$ on a point set $P \subset \R^d$ is a one-to-one assignment $\sigma:V \rightarrow P$ such that $$ \scalprod{h(v,u)}{ \sigma(v)-\sigma(u)} \geq 0 \text{, for all } v,u \in V.$$ 
If such an embedding exists, we say that $\overrightarrow{G}$ is \emph{embeddable} in $P$. 
A \emph{drawing} of a directional graph $\overrightarrow{G}=(V,h)$ is a bijection $\pi:V \rightarrow S \subset \R^d$ such that for all $u,v \in V$ with $h(u,v) \neq 0$ we have that $\pi(v)-\pi(u) = \lambda _{uv} \cdot h(v,u)$ for some $\lambda _{uv}>0$.
The \emph{projection} of a $d$-dimensional directional graph $\overrightarrow{G}$ into a $k$-dimensional linear subspace $ L \subset \R^d$ is the $k$-dimensional directional graph obtained by projecting the vector $h(u,v) \in \R^d$ onto $L \cong \R^k $, for all $v,u \in V$.

A directional graph is illustrated in \cref{fig:embedd}, together with a drawing and an embedding. 
The arrows near the edges indicate the directions associated with them. 
Observe that the embedding condition for an edge restricts its direction to a halfspace, while the drawing condition fixes its direction completely.  
Note also that the lengths of the vectors assigned by $h$ are irrelevant for the existence of an embedding or a drawing of a directional graph. 
Therefore, we will consider two directional graphs $(V,h)$ and $(V,h')$ \emph{equivalent} if $h(u,v)$ is a positive scalar multiple of $h'(u,v)$ for all $u,v \in V$.

A $d$-dimensional directional graph $\overrightarrow{G}=(V,h)$ is \emph{universally embeddable} if it is embeddable on any point set $P \subset \R^d$ with $|P|= |V|$ . 
It is \emph{drawable} if it has a drawing.
%
%

The \emph{directional graph of a polytope} is the set of its vertices, together with the function $h(u,v)=v-u$ if $u$ and $v$ are endpoints of an edge of the polytope, and $h(u,v)=0$ otherwise. 
The \emph{normal graph} of a polyhedral fan is set of its cells with the function $h(C,D)$ being a vector normal to the facet common to $C$ and $D$ and pointing ``from $C$ to $D$ '' if they share a facet, and $h(u,v)=0$ otherwise.
%
Note that the directional graph of a polytope and the graph of its normal fan are embedding-equivalent. 
This is a consequence of the duality between a polytope at its normal fan. 
%
%
The following proposition shows that there is a surprisingly large family of universally embeddable directional graphs.

\begin{proposition}
\label{prop:draw}
If a directional graph is drawable, then it is universally embeddable. 
In particular, a directional graph $\overrightarrow{G}=(V,h)$ with underlying graph being a tree is universally embeddable regardless of $h$. 
The directional graph of a polytope is universally embeddable. 
\end{proposition}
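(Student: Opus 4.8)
The plan is to establish the first and most general assertion, that drawability implies universal embeddability, by an exchange argument applied to an optimal assignment, and then to exhibit explicit drawings for trees and for polytopes so that the remaining two assertions follow. First I would fix a drawing $\pi:V\to\R^d$ of $\overrightarrow{G}$, so that for every edge (every pair $u,v$ with $h(u,v)\neq 0$) the vector $\pi(v)-\pi(u)$ is a \emph{positive} scalar multiple of $h(v,u)$. Given an arbitrary point set $P$ with $|P|=|V|$, I would consider the linear assignment problem of maximizing
\[
\Phi(\sigma)=\sum_{v\in V}\scalprod{\pi(v)}{\sigma(v)}
\]
over the finite set of bijections $\sigma:V\to P$, so that the maximum is attained; this is equivalent to minimizing $\sum_{v}\|\pi(v)-\sigma(v)\|^2$ (the terms $\sum_v\|\pi(v)\|^2$ and $\sum_{p\in P}\|p\|^2$ being constant), which connects the argument to the least-squares theme used elsewhere in the paper. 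Let $\sigma^*$ be a maximizer.

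The heart of the proof is to show that $\sigma^*$ is an embedding. Suppose some edge $\{u,v\}$ violated the embedding condition, that is $\scalprod{h(v,u)}{\sigma^*(v)-\sigma^*(u)}<0$. Transposing the images of $u$ and $v$ yields a bijection $\sigma'$ differing from $\sigma^*$ only at $u$ and $v$, and a direct computation gives
\[
\Phi(\sigma')-\Phi(\sigma^*)=-\scalprod{\pi(v)-\pi(u)}{\sigma^*(v)-\sigma^*(u)}.
\]
Since $\pi(v)-\pi(u)$ is a positive multiple of $h(v,u)$, the right-hand side is strictly positive, contradicting the maximality of $\sigma^*$. Hence every edge satisfies $\scalprod{h(v,u)}{\sigma^*(v)-\sigma^*(u)}\ge 0$ (non-edges being trivially satisfied because $h=0$ there), so $\sigma^*$ is an embedding and $\overrightarrow{G}$ is universally embeddable.

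It then remains to produce drawings for the two special families. For the directional graph of a polytope, the identity placement $\pi(v)=v$ is a drawing, since each edge satisfies $\pi(v)-\pi(u)=v-u=h(v,u)$ with scalar $1$. For a tree I would root it and place the vertices in breadth-first order: after fixing the root arbitrarily, each child $v$ of an already-placed parent $u$ is set to $\pi(v)=\pi(u)+\lambda_{uv}\,h(v,u)$ with $\lambda_{uv}>0$ chosen to avoid the finitely many previously placed points, which is possible because the ray $\{\pi(u)+\lambda\,h(v,u):\lambda>0\}$ is infinite and $h(v,u)\neq 0$. This produces an injective $\pi$ realizing every prescribed edge direction, so a tree is drawable regardless of $h$.

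The one genuinely delicate point is the injectivity demanded of a drawing in the tree case, which is why the ray-avoidance choice of $\lambda_{uv}$ is needed; note that the exchange argument itself never uses injectivity, since adjacent vertices automatically receive distinct drawing positions (as $h\neq 0$ on edges). I therefore expect the main conceptual step to be the choice of the objective $\Phi$: the entire argument hinges on the observation that, under a drawing, a violated embedding inequality is exactly a profitable transposition, so that a global maximizer of $\Phi$ cannot violate any inequality.
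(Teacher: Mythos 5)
Your proof is correct and takes essentially the same route as the paper: the paper also fixes a drawing $\pi$, takes the least-squares optimal matching between $\pi(V)$ and $P$ (your maximization of $\Phi$ is the same optimization, as you note), and derives a contradiction from the fact that a violated edge yields an improving transposition, before observing that trees are trivially drawable and that polytopes have their $1$-skeleton as a drawing. Your only additions are elaborations the paper leaves implicit --- the explicit equivalence between $\Phi$-maximization and least-squares minimization, and the ray-avoidance choice of $\lambda_{uv}$ guaranteeing injectivity of the tree drawing.
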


\begin{proof}
Given a drawable directional graph $\overrightarrow{G}=(V,h)$ and an arbitrary point set $P $ with $|  P | = |  V | $, consider a drawing $\pi$ of $\overrightarrow{G}$. 
Let $\mu$ be the least-squares optimal matching between $\pi(V)$ and $P$. 
We will show that $\mu \circ \pi$ is an embedding of $\overrightarrow{G}$. 
Assume that it is not the case. 
Then, there must be a pair $u,v \in V$ such that $ \scalprod{h(v,u)}{\mu(\pi(v))-\mu(\pi(u))} < 0$.
Since $\pi(u)-\pi(v)=\lambda_{uv} \cdot h(v,u)$, for some $\lambda_{uv} \in \R^+$, we have that $\scalprod{\pi(u)-\pi(v)}{\mu(\pi(v))-\mu(\pi(u))} <0$, which contradicts the optimality of $\mu$ because swapping the images of $\pi(u)$ and $\pi(v)$ would improve the matching. 
Directional graphs having a tree as underlying graph are trivially drawable and directional graphs of polytopes have the $1$-skeleton of the polytope as a drawing. 
\end{proof}

It is not hard to see that if there is a sequence of vertices $v_1,\dots,v_l,v_{l+1}=v_1$ in $V$ and a vector $\delta \in \R^d$ such that $\scalprod{h(v_i,v_{i+1})}{ \delta }>0$, for all $i \in [l]$, then the graph is not drawable. Such a cycle is called a \emph{($\delta$-)forcing cycle}. However, the converse is not true in general: for instance, the normal graph of the subdivision in \cref{fig:counter} has no forcing cycle but it is also non-drawable.  


The following proposition summarizes some relations of recursive regularity to drawability and embeddability of directional graphs.

\begin{proposition}~\hfill
\begin{enumerate}[(i)]
\item The projection of a universally embeddable directional graph is universally embeddable.
\item Normal graphs of recursively-regular fans are universally embeddable.\label{cond3}
\item Universally embeddable graphs are not necessarily drawable.
\item Graphs with forcing cycles are not universally embeddable. 
\item There are graphs with no forcing cycles that are not universally-embeddable.
\end{enumerate}
\end{proposition}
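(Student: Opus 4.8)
The plan is to prove the five parts largely independently, leaning on the machinery already built for the floodlight problem. Parts (i) and (iv) are self-contained, part (ii) is essentially a restatement of \cref{thm:recreg}, and parts (iii) and (v) are separation statements that I would settle by exhibiting normal graphs of suitable fans. For (i), let $\overrightarrow{G}=(V,h)$ be universally embeddable in $\R^d$ and let $\overrightarrow{G}'=(V,h')$ be its projection onto a subspace $L\cong\R^k$, so that $h'(u,v)$ is the orthogonal projection of $h(u,v)$ onto $L$. Given any $P'\subset L$ with $|P'|=|V|$, I would regard $P'$ as a point set of $\R^d$ lying inside $L$ and invoke universal embeddability of $\overrightarrow{G}$ to obtain a one-to-one $\sigma\colon V\to P'$ with $\scalprod{h(v,u)}{\sigma(v)-\sigma(u)}\ge0$ for all $u,v$. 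Since $\sigma(v)-\sigma(u)\in L$, the component of $h(v,u)$ orthogonal to $L$ contributes nothing to the inner product, whence $\scalprod{h'(v,u)}{\sigma(v)-\sigma(u)}=\scalprod{h(v,u)}{\sigma(v)-\sigma(u)}\ge0$, and $\sigma$ is already an embedding of $\overrightarrow{G}'$ on $P'$.

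For (iv) I would exploit worst-case point sets on a line, in the spirit of \cref{lem:overlapsuf} and \cref{line}. Suppose $\overrightarrow{G}$ has a $\delta$-forcing cycle $v_1,\dots,v_l,v_{l+1}=v_1$, and place the $|V|$ points at distinct parameters $s_j$ along $\R\delta$. For any embedding $\sigma$, writing $\sigma(v_i)=\tau_i\delta$, the edge condition on $(v_i,v_{i+1})$ becomes $(\tau_i-\tau_{i+1})\scalprod{h(v_i,v_{i+1})}{\delta}\ge0$; since $\scalprod{h(v_i,v_{i+1})}{\delta}>0$ this forces $\tau_i\ge\tau_{i+1}$, and traversing the cycle yields $\tau_1\ge\cdots\ge\tau_l\ge\tau_1$, so all $\tau_i$ coincide. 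But consecutive vertices are distinct (as $h(v_i,v_{i+1})\ne0$ implies $v_i\ne v_{i+1}$), so they receive distinct points and hence distinct parameters, a contradiction. Thus no embedding exists on this $P$, proving that the graph is not universally embeddable.

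Part (ii) I would deduce directly from the floodlight results: the embedding condition of the normal graph on an assignment $\sigma\colon\cells(\mathcal{F})\to P$ is exactly the overlapping condition at each interior facet (non-adjacent cells carry $h=0$ and impose nothing, and boundary facets are incident to a single cell). For a complete recursively-regular fan $|\mathcal{F}|^-=\R^d$, so \cref{thm:recreg} supplies a covering assignment for every $P$ of the right cardinality, and \cref{lem:overlap} guarantees that any covering assignment satisfies the overlapping condition at every interior facet. Hence the normal graph admits an embedding on every such $P$, i.e.\ it is universally embeddable.

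Finally, (iii) and (v) are the two separation statements. For (v) I would reuse the fan of \cref{fig:counter}: by \cref{prop:counter} it has no cycle in any direction, so its normal graph has no forcing cycle, yet there is a point set admitting no assignment satisfying the overlapping conditions, which is precisely no embedding of the normal graph; hence that normal graph is not universally embeddable. For (iii) I would take the normal graph of a recursively-regular but \emph{non}-regular complete fan, for instance the cone over the triangulation of \cref{Fig:doublegen} whose finest regular coarsening \cref{coarsdouble} is regular: it is universally embeddable by (ii), and I would argue it is not drawable. This last non-drawability claim is where I expect the real work to lie: a drawing assigns points $\pi(C)$ with $\pi(D)-\pi(C)$ a positive multiple of the facet normal for adjacent cells, which is exactly a reciprocal configuration, so by the duality between regular fans and polytopes (Maxwell--Cremona in the plane) such a drawing would reconstruct a polytope whose normal fan is $\mathcal{F}$, forcing $\mathcal{F}$ to be regular and contradicting the choice. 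Making this ``drawable iff regular'' step rigorous, rather than the other four parts, is the main obstacle.
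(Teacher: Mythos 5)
Your proposal is correct and follows essentially the same route as the paper in every part: the same projection argument restricted to point sets inside $L$ for (i), \cref{thm:recreg} combined with the equivalence between the overlapping conditions and the embedding conditions of the normal graph for (ii), collinear point sets in the direction $\delta$ of the forcing cycle for (iv), and the acyclic but non-universally-covering fan of \cref{prop:counter} for (v). The only divergence is in (iii), where you sketch a Maxwell--Cremona-style proof of the fact that the normal graph of a fan is drawable if and only if the fan is regular and flag it as the main remaining work, whereas the paper disposes of exactly that fact by citation, so your identification of the crux is accurate and the step is standard.
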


\begin{proof}
\hfill
\begin{enumerate}[(i)]
\item Let $\overrightarrow{G}=(V,h)$ be a $d$-dimensional universally-embeddable directional graph, and let $L$ be a $k$-dimensional linear subspace of $\R^d$ with a basis $\{l_1,\ldots,l_k\}$. 
Let $\bar{G}=(V,\bar{h})$ be the projection of $\overrightarrow{G}$ onto $L$, which is identified with $\R^k$ through the bijection  
\begin{align*}
i: \R^k & \to L \subset \R^d \\
(x_1,\ldots, x_k) & \longmapsto \sum_{j \in[k] } x_j l_j. 
\end{align*}
Consider any set of $|V|$ points $\bar{P} \subset \R^k$, and the associated point set $P=i(\bar{P}) \subset \R^d$. 
If $\sigma: V \to P$ is an embedding of $\overrightarrow{G}$ on $P$, then $\bar{\sigma}= i^{-1} \circ \sigma $ is an embedding of $\bar{G}$ in $\bar{P}$, where~$i^{-1}$ denotes the inverse of $i$ on $L$. 
Indeed, $\scalprod{h(v,u)}{ \sigma(v)-\sigma(u)}=\scalprod{\bar{h}(v,u)}{ \bar{\sigma}(v)-\bar{\sigma}(u)}$ for all $u,v \in V$, because $\sigma(u)-\sigma(v) \in L$ and thus only the projection of $h(u,v)$ onto $L$ contributes to the scalar product. 
\item Let $\mathcal{F} \subset \R^d$ be a full-dimensional polyhedral fan consisting of $n$ cells.
\cref{thm:recreg} ensures that there is a covering assignment for $\mathcal{F}$ and any set $P$ of $n$ points. 
This assignment must satisfy the overlapping condition for each facet of the fan, which is equivalent to the embedding condition for the corresponding edge. 
\label{normue}
\item The normal graph of a fan is drawable if and only if the fan is regular (see, for instance, \cite{AurenDual}). 
Thus, the normal graph of a recursively-regular non-regular fan is not drawable and it is, however, universally embeddable, as shown in~(\ref{normue}). 
\item Consider a $\delta$-forcing cycle $v_1,\dots,v_l, v_{l+1}=v_1$. Take a set of different points in a line having direction vector $\delta$ and label them increasingly with respect to their scalar products with $\delta$. For any embedding $\sigma$, $\sigma(v_{i+1})$ must have a label larger than $\sigma(v_{i})$, for all $i \in [l]$, which is obviously impossible.
\item The normal graph of the fan obtained by taking cones from the subdivision in \cref{Fig:anrr} has no forcing cycle, since it is acyclic (in the visibility sense). 
However, we have given a set of points for which all the assignments violate an overlapping condition. Hence, there is no embedding of its normal graph into this point set. \qedhere
\end{enumerate}
\end{proof}

\section[Concluding remarks and open problems]{Concluding remarks and open problems}

We have shown that the finest regular coarsening of a subdivision, which can be seen as the regular subdivision that is closest to it, can be used to define a structure called the regularity tree. 
The leaves of this tree define a partition of the subdivision in sub-subdivisions that are either regular or completely non-regular. 
The regularity tree reflects thus some of the structure of non-regular subdivisions and measures, in a sense, the degree of regularity. 
As a consequence, the class of recursively-regular subdivisions arises in a natural way. 
We have shown that this class goes beyond regular subdivisions while excluding cyclic ones. 
However, we have proven that they are in general not connected by flips.

In addition, we have studied a collection of related applications, and
we expect to find even more, since any theorem or algorithm based on the regularity of a subdivision and admitting a recursive scheme can probably be extended to apply for the larger set of recursively-regular subdivisions.

In particular, we have focused on the problem of illuminating the space by floodlights. 
It was known that regular fans are universal and our aim was to answer the question for the other fans. 
We have proved that not only regular fans are universal and that not only cyclic ones are non-universal. 
It makes then sense to ask what is the complexity class of the general problem of deciding whether the space can be covered by a given fan from a given point set (in dimensions bigger than two). 
It remains open as well to precise the limits of universality, that is, to characterize the polyhedral fans that can cover the space from any point set. 
A reasonable candidate is recursive-regularity. 
Indeed, the fact that a non-recursively-regular subdivision has a convex sub-subdivision which is completely non-regular could be the first step towards a proof for this fact. 
Our results on covering the space by floodlights have implications for a three-dimensional version of the \emph{stage illumination problem}.
In data visualization, recursive partitions using regular subdivisions (Voronoi Treemaps~\cite{BD}) have been used to visualize hierarchical structures. 
Although these partitions are not polyhedral subdivisions, they can be constructed from a recursively-regular subdivision applying a weighting scheme as in the proof of~\cref{thm:recreg}~\cite{mythesis}. 

The problem of embedding directional graphs is left in a similar situation. 
A natural and easy to state open question is whether deciding if a directional graph can be embedded in a given point set is NP-hard. 

Concerning algorithmic issues, we have proven that the finest regular coarsening and the regularity tree of a subdivision can be computed in polynomial time. 
We have used these facts to prove that recursive regularity of a subdivision can be decided in polynomial time as well, which is relevant for the algorithmic version of the aforementioned problems. 


{\small
\bibliographystyle{abbrv}
\bibliography{BT3}

\begin{thebibliography}{10}

\bibitem{TriangInterNice}
O.~Aichholzer, F.~Aurenhammer, S.-W. Cheng, N.~Katoh, G.~Rote, M.~Taschwer, and
  Y.-F. Xu.
\newblock {Triangulations intersect nicely}.
\newblock {\em Discrete \& Computational Geometry}, 16(4):339--359, 1996.

\bibitem{Alberts}
D.~Alberts.
\newblock {A {G}eneralization of the {H}ungarian {M}ethod for the
  $c$-{A}ssignment {P}roblem.}
\newblock Diplomarbeit, Freie Universit{\"a}t Berlin, 1992.

\bibitem{AurenDual}
F.~Aurenhammer.
\newblock {A criterion for the affine equivalence of cell complexes in
  $\mathbb{R}^d$ and convex polyhedra in $\mathbb{R}^{d+1}$}.
\newblock {\em Discrete \& Computational Geometry}, 2(1):49--64, 1987.

\bibitem{AurenhammerPower}
F.~Aurenhammer.
\newblock {Power diagrams: properties, algorithms and applications}.
\newblock {\em SIAM Journal on Computing}, 16(1):78--96, Feb. 1987.

\bibitem{AHA}
F.~Aurenhammer, F.~Hoffmann, and B.~Aronov.
\newblock {Minkowski-type theorems and least-squares clustering}.
\newblock {\em Algorithmica}, 20:61--76, 1998.

\bibitem{Fukuda}
D.~Avis, K.~Fukuda, and S.~Picozi.
\newblock {On canonical representations of convex polyhedra}.
\newblock In A.~M. Cohen, X.-S. Gao, and N.~Takayama, editors, {\em
  {Mathematical Software, Proceedings of the First International Congress of
  Mathematical Software}}, pages 350--360. World Scientific Publishing, 2002.

\bibitem{BD}
M.~Balzer and O.~Deussen.
\newblock {Voronoi treemaps}.
\newblock In {\em {IEEE Symposium on Information Visualization }}, page~7,
  2005.

\bibitem{BS}
L.~J. Billera, P.~Filliman, and B.~Sturmfels.
\newblock {Constructions and complexity of secondary polytopes}.
\newblock {\em Advances in Mathematics}, 83(2):155--179, 1990.

\bibitem{BGAMSU}
P.~Bose, L.~J. Guibas, A.~Lubiw, M.~H. Overmars, D.~L. Souvaine, and
  J.~Urrutia.
\newblock {The floodlight problem}.
\newblock {\em International Journal of Computational Geometry and
  Applications}, 7(1/2):153--163, 1997.

\bibitem{CRSV}
M.~Cary, A.~Rudra, A.~Sabharwal, and E.~Vee.
\newblock {Floodlight illumination of infinite wedges}.
\newblock {\em Computational Geometry: Theory and Applications}, 43(1):23--34,
  Jan. 2010.

\bibitem{Con82}
R.~Connelly.
\newblock {Rigidity and energy}.
\newblock {\em Inventiones Mathematicae}, pages 11--33, 1982.

\bibitem{CrapoWhiteley}
H.~Crapo and W.~Whiteley.
\newblock {Spaces of stresses, projections and parallel drawings for spherical
  polyhedra}.
\newblock {\em Contributions to Algebra and Geometry}, 35(2):259--281, 1994.

\bibitem{Czyzowicz}
J.~Czyzowicz, E.~Rivera-Campo, and J.~Urrutia.
\newblock {Optimal floodlight illumination of stages.}
\newblock In {\em {Canadian Conference on Computational Geometry}}, pages
  393--398. University of Waterloo, 1993.

\bibitem{LRS}
J.~{De Loera}, J.~Rambau, and F.~Santos.
\newblock {\em {Triangulations}}.
\newblock Springer, 2010.

\bibitem{Dietel}
J.~Dietel, H.-D. Hecker, and A.~Spillner.
\newblock {A note on optimal floodlight illumination of stages}.
\newblock {\em Information Processing Letters}, 105(4):121--123, 2008.

\bibitem{Edelsbrunner}
H.~Edelsbrunner.
\newblock {\em {Algorithms in Combinatorial Geometry}}, volume~10 of {\em
  {EATCS Monographs on Theoretical Computer Science}}.
\newblock Springer-Verlag, Berlin, Heidelberg, 1987.

\bibitem{Edelsbrunner89}
H.~Edelsbrunner.
\newblock {An acyclicity theorem for cell complexes in $d$ dimension}.
\newblock {\em Combinatorica}, 10(3):251--260, 1990.

\bibitem{Toth}
L.~{Fejes T{\'o}th}.
\newblock {Illumination of convex discs}.
\newblock {\em Acta Mathematica Academiae Scientiarum Hungaricae},
  29(3-4):355--360, 1977.

\bibitem{Galperin}
V.~Galperin and G.~Galperin.
\newblock {Osveshchenije ploskosti prozhektorami (in Russian)}.
\newblock {\em Kvant}, 11:28--30, 1981.

\bibitem{GHNP}
J.~Galtier, F.~Hurtado, M.~Noy, S.~Perennes, and J.~Urrutia.
\newblock {Simultaneous edge flipping in triangulations}.
\newblock {\em International Journal of Computational Geometry and
  Applications}, 13(2), Apr. 2003.

\bibitem{GKZ}
I.~Gelfand, M.~Kapranov, and A.~Zelevinsky.
\newblock {\em {Discriminants, Resultants, and Multidimensional Determinants}}.
\newblock {Modern Birkh{\"a}user Classics}. Birkh{\"a}user Boston, 2008.

\bibitem{Hiro}
H.~Ito, H.~Uehara, and M.~Yokoyama.
\newblock {NP-Completeness of stage illumination problems}.
\newblock In J.~Akiyama, M.~Kano, and M.~Urabe, editors, {\em {Discrete and
  Computational Geometry}}, volume 1763 of {\em {Lecture Notes in Computer
  Science}}, pages 158--165. Springer Berlin Heidelberg, 2000.

\bibitem{mythesis}
R.~Jaume.
\newblock PhD thesis, Arbeitsgruppe Theoretische Informatik, Freie Universität
  Berlin, Takustraße 9, D-14195 Berlin, Germany.

\bibitem{Max-cremona}
J.~C. Maxwell.
\newblock {On reciprocal figures and diagrams of forces}.
\newblock {\em The London, Edinburgh, and Dublin Philosophical Magazine and
  Journal of Science}, 27:250--261, 1864.

\bibitem{RoteCG}
G.~Rote.
\newblock {Two applications of point matching}.
\newblock In {\em {Proc.~25th European Workshop on Computational Geometry}},
  Brussels, Belgium, March 2009.

\bibitem{RW}
B.~Roth and W.~Whiteley.
\newblock {Tensegrity frameworks}.
\newblock {\em Transactions of the American Mathematical Society},
  265(2):419--446, 1981.

\bibitem{Santos2005}
F.~Santos.
\newblock {Non-connected toric Hilbert schemes}.
\newblock {\em Mathematische Annalen}, 332:645--665, 2005.

\bibitem{Urrutia}
J.~Urrutia.
\newblock {Art Gallery and Illumination Problems}.
\newblock In J.-R. Sack and J.~Urrutia, editors, {\em {Handbook of
  Computational Geometry}}, chapter~22, pages 973--1027. North-Holland,
  Amsterdam, 2000.

\end{thebibliography}
}

\appendix

\section{Proof of~\cref{lem:GordanExt}}\label{proofGordan}


\gordan*

\begin{proof}
If $E=\emptyset$, the statement of the theorem is Gordan's theorem. 
Hence, assume without loss of generality that $E=[j]$ for some $j \in [m]$. 
We will reduce this case to Gordan's theorem. 
Assume further that the span $L$ of the first~$j$ row vectors is $k$-dimensional and that the first $k$ rows span $L$. 
Let $M'$ be the matrix resulting of excluding from $M$ the rows indexed by $[j] \setminus [k]$. 
The set of solutions of $S(M',[k])$ is exactly the same as the set of solutions of  $S(M,[j])$. 
On the other hand, we show next that the system $S^*(M,[j])$ has a solution if and only if the system $S^*(M',[k])$ has. 
Although the dimension of the domain of the linear map associated to $(M')^\top$ is smaller than the one of $M^\top$, a solution to $S^*(M',[k])$ can be extended to a solution of $S^*(M,[j])$ by setting the remaining coordinates to zero. 
That is, the dimension of the kernels of the linear maps associated to both matrices are the same. 

Let now $y_0$ be a solution to $S^*(M,[j])$. 
We can obtain another solution having the coordinates indexed by $[j] \setminus [k]$ equal to zero, by expressing the columns of $M$ indexed by $[j] \setminus [k]$ as linear combinations of the columns indexed by $[k]$, and modifying the coefficients in~$[k]$ accordingly. 
In this way, ignoring the coordinates indexed by $[j] \setminus [k]$, we obtain a solution to $S^*(M',[k])$. 

Henceforth, we will assume then that $E=[k]$ and that the first $k$ rows of $M$ are linearly independent. 
The set of equations in $S(M,[k])$ restricts then the variables to an $(n-k)$-dimensional linear subspace of~$\R^n$, for $0 \leq k < n $. 
We can then find an invertible $n \times n$ matrix $T$ such that the space defined by the equations of $S(MT,[k])$ is the one having the first $k$ coordinates equal to zero. 
Since $T$ is invertible, $S(M,[k])$ has a solution if and only if the system $S(MT,[k])$ has.

On the other hand, $S^*(M,[k])$ and $S^*(MT,[k])$ have the same set of solutions because $T$ is invertible and, thus, $M^\top y=0$ if and only if $T^\top M^\top y=0$ for all $y \in \R^m$. 


Assume now that 
\[
MT=
\left(
\begin{array}{cc}
\text{Id}_{k \times k} & 0_{k \times (n-k)}\\ 
R & N
\end{array}\right), 
\]
with $R \in \R^{k' \times k}$, $N \in \R^{k'\times (n-k)}$, and  $k'=m- k $.
The systems $S(MT,[k])$ and
\[
S(N):
\begin{cases}
\begin{array}{l}
N y >0 \\
y \in \R^{n-k}
\end{array}
\end{cases}
\]
have both a solution or none of them have. 
This is because a solution to the first must have the first $k$ coordinates equal to zero and, thus, the last $n-k$ coordinates must be a solution of the second. 
Conversely, a solution to the second system can be extended to a solution of the first by just adding $k$ zero coordinates. 

Similarly, the system $S^*(MT,[k])$ and the system
\[
S^*(N):
\begin{cases}
\begin{array}{l}
N^\top y =0 \\ 
y \in \R^{k'} \\ 
y \geq 0, \; y \neq 0
\end{array}
\end{cases}
\]
have both a solution or none of them has. 
Indeed, the restriction of a solution of $S^*(MT,[k])$ to the last $k'$ coordinates must be a solution of $S^*(N)$.
For the other direction, let  
\[ z_0 = \left( 
\begin{array}{c}
-R^\top y_0 \\
y_0
\end{array} \right),
\]
where $y_0$ is a solution of $S^*(N)$. 
Since
\[
(MT)^\top z_0 = 
\left(
\begin{array}{cc}
\text{Id}_{k \times k} & R^\top\\ 
0_{(n-k) \times k} & N^\top
\end{array}
\right) \left( 
\begin{array}{c}
-R^\top y_0 \\
y_0
\end{array} \right) = 	\left( 
\begin{array}{c}
-R^\top y_0+ R^\top y_0 \\
0+N^\top y_0
\end{array} \right)	=0, 
\]
the vector $z_0$ is a solution of $S^*(MT,[k])$.
%
%
We finish the proof by applying Gordan's theorem to $S(N)$ and $S^*(N)$.
\end{proof}

\section{Proof of~\cref{lem:direction}}\label{prooflemdirection}


\lemdir*

\begin{proof} \hfill 
\begin{enumerate}[(i)]
\item Let $\ell= \{ p + \lambda v: \lambda \in \R \}$ be a line passing through $p\in\R^d$ and with direction $v \in \R^d$ interior to $K=\{ q + \sum_{i \in I} \alpha_i v_i: \alpha_i \ge 0 \text{ for all } i \in I  \}$, where $v_i$ for $i\in I$ are the extreme rays of $K$. 
Since $v$ is interior to $K$, we can express $v= \sum_{i \in I} \gamma_i v_i$ with $\gamma_i >0$ for all $i \in I$. 
The set of vectors $ \{v_i: i \in I \}$ spans $\R^d$ because its positive span $K$ is not contained in any proper affine subspace. 
Thus, we can express $p-q= \sum_{i \in I} \delta_i v_i $, where $\delta_i \in \R$ for all $i \in I$. 
Note now that 
\[ p +\lambda v = q + (p-q) + \lambda \sum_{i \in I} \gamma_i v_i = q+ \sum_{i \in I} ( \delta_i + \lambda \gamma_i ) v_i,\]
which lies in $K$ for all $\lambda \ge  \max_{i \in I} \frac{ -\delta_i}{\gamma_i}$.
\item We triangulate $K$ into a finite number of simplicial cones and show that the intersection of $\ell$ with each cone is bounded. 
For a fixed simplicial cone $K'$ with extreme rays $v_1,\ldots,v_d$, the direction $v$ of the line can be expressed in a unique way as $v=\sum_{i \in [d]} \gamma_i v_i$ with $\gamma_i \in \R$ for all $i \in [d]$. 

We will prove the contrapositive. 
Assume that there exists $\lambda_0 \in \R$ such that $p+\lambda v \in K'$, for all $\lambda \ge \lambda_0$. 
Since 
\[ p+\lambda v=p+\lambda \sum_{i \in [d]} \gamma_i v_i= q+ \sum_{i \in [d]} (\lambda \gamma_i-\delta_i) v_i,\]
we have that $\lambda \gamma_i-\delta_i \ge 0$ for all $i \in [d]$ and for all $\lambda \ge \lambda_0$. 
Thus, $\gamma_i \ge 0$ for all $i \in [d]$,
which implies that the direction of $v$ is contained in $K'$. \qedhere
\end{enumerate}
\end{proof}

\section{A non-universal acyclic polyhedral fan}\label{App:calculations}

We present here the calculations for the proof of \cref{prop:counter}. 
Recall that the point set from the counterexample is 
\begin{align*} 
 p_1=&(29,95,89) \\
 p_2=&(55,19,92)\\
 p_3=&(54,10,82)\\
 p_4=&(78,2,68)\\
 p_5=&(15,40,92).
\end{align*}

The fan involved is drawn in \cref{fig:counterLabeled}, where the facets and cells of the fan have been labeled and it has been truncated. From the coordinates of the points in this figure and assuming that we embed it into the plane $\{z=-1/8 \}$ and we cone the cells with the origin, normal vectors to the facets of the fan can be computed:
\begin{align*} 
v_{12}&= (   4         ,  0 ,       -32)    \\
v_{13}&= (2,2,0)\\
v_{15}&=  (   1     ,   -3    ,    16)\\
v_{23}&=   (     0    ,    4   ,     32)\\
v_{24}&=    (    4   ,        0  ,      32)\\
v_{25}&=     (     0    ,    -4   ,     32)\\
v_{34}&=     (  2    ,     -2    ,       0)\\
v_{45}&=      (  -2     ,    -3     ,   8).
\end{align*}

The first column of the table visits the $5!$ permutations representing all possible assignments from cells to points. The notation used for the permutations is simply the concatenation of the labels of the points assigned to $C_1,C_2,C_3,C_4$ and $C_5$ in this order. 
The second row indicates one facet for which the corresponding assignment does not satisfy the overlapping condition. 
The two following columns just extract the points involved in the violation. 
The last column computes the ``gap'' between the two translated floodlights. The positivity of this last value certifies that the overlapping condition is not fulfilled. 

\begin{figure}[ht]
\begin{center}
\includegraphics[scale=0.7,page=2]{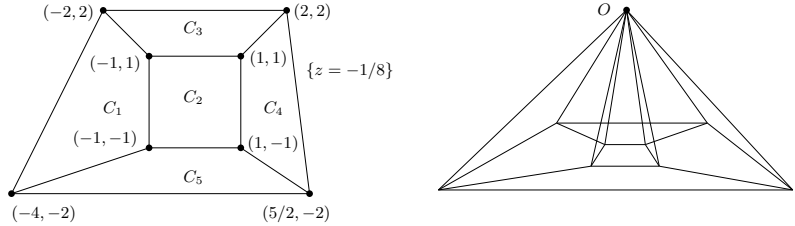}
\end{center}
\caption{The fan of the counterexample and a section of it.}
\label{fig:counterLabeled}
\end{figure}

\newpage
\begin{longtable}{c c c c l}
$\sigma$ & $f_{ij}$ & $\sigma(C_i)$ & $\sigma(C_j)$ & $(\sigma(C_i)-\sigma(C_j)) \cdot v_{ij} $\\
\hline

\endhead

\hline & & & & Continued on next page \\
\endfoot

\hline \hline
\endlastfoot

$5  4  3  2  1 $ & $f_{12}$  & $p_{5}$ & $p_{4}$ & $  ( p_{4} - p_{5} )   \cdot  v_{12}  = (63,-38,-24) \cdot (4,0,-32) = 1020$ \\ 
 $5  4  3  1  2 $ & $f_{12}$  & $p_{5}$ & $p_{4}$ & $  ( p_{4} - p_{5} )   \cdot  v_{12}  = (63,-38,-24) \cdot (4,0,-32) = 1020$ \\ 
 $5  4  2  3  1 $ & $f_{12}$  & $p_{5}$ & $p_{4}$ & $  ( p_{4} - p_{5} )   \cdot  v_{12}  = (63,-38,-24) \cdot (4,0,-32) = 1020$ \\ 
 $5  4  2  1  3 $ & $f_{12}$  & $p_{5}$ & $p_{4}$ & $  ( p_{4} - p_{5} )   \cdot  v_{12}  = (63,-38,-24) \cdot (4,0,-32) = 1020$ \\ 
 $5  4  1  2  3 $ & $f_{12}$  & $p_{5}$ & $p_{4}$ & $  ( p_{4} - p_{5} )   \cdot  v_{12}  = (63,-38,-24) \cdot (4,0,-32) = 1020$ \\ 
 $5  4  1  3  2 $ & $f_{12}$  & $p_{5}$ & $p_{4}$ & $  ( p_{4} - p_{5} )   \cdot  v_{12}  = (63,-38,-24) \cdot (4,0,-32) = 1020$ \\ 
 $5  3  4  2  1 $ & $f_{12}$  & $p_{5}$ & $p_{3}$ & $  ( p_{3} - p_{5} )   \cdot  v_{12}  = (39,-30,-10) \cdot (4,0,-32) = 476$ \\ 
 $5  3  4  1  2 $ & $f_{12}$  & $p_{5}$ & $p_{3}$ & $  ( p_{3} - p_{5} )   \cdot  v_{12}  = (39,-30,-10) \cdot (4,0,-32) = 476$ \\ 
 $5  3  2  4  1 $ & $f_{12}$  & $p_{5}$ & $p_{3}$ & $  ( p_{3} - p_{5} )   \cdot  v_{12}  = (39,-30,-10) \cdot (4,0,-32) = 476$ \\ 
 $5  3  2  1  4 $ & $f_{12}$  & $p_{5}$ & $p_{3}$ & $  ( p_{3} - p_{5} )   \cdot  v_{12}  = (39,-30,-10) \cdot (4,0,-32) = 476$ \\ 
 $5  3  1  2  4 $ & $f_{12}$  & $p_{5}$ & $p_{3}$ & $  ( p_{3} - p_{5} )   \cdot  v_{12}  = (39,-30,-10) \cdot (4,0,-32) = 476$ \\ 
 $5  3  1  4  2 $ & $f_{12}$  & $p_{5}$ & $p_{3}$ & $  ( p_{3} - p_{5} )   \cdot  v_{12}  = (39,-30,-10) \cdot (4,0,-32) = 476$ \\ 
 $5  2  3  4  1 $ & $f_{12}$  & $p_{5}$ & $p_{2}$ & $  ( p_{2} - p_{5} )   \cdot  v_{12}  = (40,-21,0) \cdot (4,0,-32) = 160$ \\ 
 $5  2  3  1  4 $ & $f_{12}$  & $p_{5}$ & $p_{2}$ & $  ( p_{2} - p_{5} )   \cdot  v_{12}  = (40,-21,0) \cdot (4,0,-32) = 160$ \\ 
 $5  2  4  3  1 $ & $f_{12}$  & $p_{5}$ & $p_{2}$ & $  ( p_{2} - p_{5} )   \cdot  v_{12}  = (40,-21,0) \cdot (4,0,-32) = 160$ \\ 
 $5  2  4  1  3 $ & $f_{12}$  & $p_{5}$ & $p_{2}$ & $  ( p_{2} - p_{5} )   \cdot  v_{12}  = (40,-21,0) \cdot (4,0,-32) = 160$ \\ 
 $5  2  1  4  3 $ & $f_{12}$  & $p_{5}$ & $p_{2}$ & $  ( p_{2} - p_{5} )   \cdot  v_{12}  = (40,-21,0) \cdot (4,0,-32) = 160$ \\ 
 $5  2  1  3  4 $ & $f_{12}$  & $p_{5}$ & $p_{2}$ & $  ( p_{2} - p_{5} )   \cdot  v_{12}  = (40,-21,0) \cdot (4,0,-32) = 160$ \\ 
 $5  1  3  2  4 $ & $f_{12}$  & $p_{5}$ & $p_{1}$ & $  ( p_{1} - p_{5} )   \cdot  v_{12}  = (14,55,-3) \cdot (4,0,-32) = 152$ \\ 
 $5  1  3  4  2 $ & $f_{12}$  & $p_{5}$ & $p_{1}$ & $  ( p_{1} - p_{5} )   \cdot  v_{12}  = (14,55,-3) \cdot (4,0,-32) = 152$ \\ 
 $5  1  2  3  4 $ & $f_{12}$  & $p_{5}$ & $p_{1}$ & $  ( p_{1} - p_{5} )   \cdot  v_{12}  = (14,55,-3) \cdot (4,0,-32) = 152$ \\ 
 $5  1  2  4  3 $ & $f_{12}$  & $p_{5}$ & $p_{1}$ & $  ( p_{1} - p_{5} )   \cdot  v_{12}  = (14,55,-3) \cdot (4,0,-32) = 152$ \\ 
 $5  1  4  2  3 $ & $f_{12}$  & $p_{5}$ & $p_{1}$ & $  ( p_{1} - p_{5} )   \cdot  v_{12}  = (14,55,-3) \cdot (4,0,-32) = 152$ \\ 
 $5  1  4  3  2 $ & $f_{12}$  & $p_{5}$ & $p_{1}$ & $  ( p_{1} - p_{5} )   \cdot  v_{12}  = (14,55,-3) \cdot (4,0,-32) = 152$ \\ 
 $4  5  3  2  1 $ & $f_{24}$  & $p_{5}$ & $p_{2}$ & $  ( p_{2} - p_{5} )   \cdot  v_{24}  = (40,-21,0) \cdot (4,0,32) = 160$ \\ 
 $4  5  3  1  2 $ & $f_{25}$  & $p_{5}$ & $p_{2}$ & $  ( p_{2} - p_{5} )   \cdot  v_{25}  = (40,-21,0) \cdot (0,-4,32) = 84$ \\ 
 $4  5  2  3  1 $ & $f_{15}$  & $p_{4}$ & $p_{1}$ & $  ( p_{1} - p_{4} )   \cdot  v_{15}  = (-49,93,21) \cdot (1,-3,16) = 8$ \\ 
 $4  5  2  1  3 $ & $f_{15}$  & $p_{4}$ & $p_{3}$ & $  ( p_{3} - p_{4} )   \cdot  v_{15}  = (-24,8,14) \cdot (1,-3,16) = 176$ \\ 
 $4  5  1  2  3 $ & $f_{23}$  & $p_{5}$ & $p_{1}$ & $  ( p_{1} - p_{5} )   \cdot  v_{23}  = (14,55,-3) \cdot (0,4,32) = 124$ \\ 
 $4  5  1  3  2 $ & $f_{23}$  & $p_{5}$ & $p_{1}$ & $  ( p_{1} - p_{5} )   \cdot  v_{23}  = (14,55,-3) \cdot (0,4,32) = 124$ \\ 
 $4  3  5  2  1 $ & $f_{23}$  & $p_{3}$ & $p_{5}$ & $  ( p_{5} - p_{3} )   \cdot  v_{23}  = (-39,30,10) \cdot (0,4,32) = 440$ \\ 
 $4  3  5  1  2 $ & $f_{23}$  & $p_{3}$ & $p_{5}$ & $  ( p_{5} - p_{3} )   \cdot  v_{23}  = (-39,30,10) \cdot (0,4,32) = 440$ \\ 
 $4  3  2  5  1 $ & $f_{23}$  & $p_{3}$ & $p_{2}$ & $  ( p_{2} - p_{3} )   \cdot  v_{23}  = (1,9,10) \cdot (0,4,32) = 356$ \\ 
 $4  3  2  1  5 $ & $f_{23}$  & $p_{3}$ & $p_{2}$ & $  ( p_{2} - p_{3} )   \cdot  v_{23}  = (1,9,10) \cdot (0,4,32) = 356$ \\ 
 $4  3  1  2  5 $ & $f_{23}$  & $p_{3}$ & $p_{1}$ & $  ( p_{1} - p_{3} )   \cdot  v_{23}  = (-25,85,7) \cdot (0,4,32) = 564$ \\ 
 $4  3  1  5  2 $ & $f_{23}$  & $p_{3}$ & $p_{1}$ & $  ( p_{1} - p_{3} )   \cdot  v_{23}  = (-25,85,7) \cdot (0,4,32) = 564$ \\ 
 $4  2  3  5  1 $ & $f_{15}$  & $p_{4}$ & $p_{1}$ & $  ( p_{1} - p_{4} )   \cdot  v_{15}  = (-49,93,21) \cdot (1,-3,16) = 8$ \\ 
 $4  2  3  1  5 $ & $f_{15}$  & $p_{4}$ & $p_{5}$ & $  ( p_{5} - p_{4} )   \cdot  v_{15}  = (-63,38,24) \cdot (1,-3,16) = 207$ \\ 
 $4  2  5  3  1 $ & $f_{23}$  & $p_{2}$ & $p_{5}$ & $  ( p_{5} - p_{2} )   \cdot  v_{23}  = (-40,21,0) \cdot (0,4,32) = 84$ \\ 
 $4  2  5  1  3 $ & $f_{23}$  & $p_{2}$ & $p_{5}$ & $  ( p_{5} - p_{2} )   \cdot  v_{23}  = (-40,21,0) \cdot (0,4,32) = 84$ \\ 
 $4  2  1  5  3 $ & $f_{23}$  & $p_{2}$ & $p_{1}$ & $  ( p_{1} - p_{2} )   \cdot  v_{23}  = (-26,76,-3) \cdot (0,4,32) = 208$ \\ 
 $4  2  1  3  5 $ & $f_{23}$  & $p_{2}$ & $p_{1}$ & $  ( p_{1} - p_{2} )   \cdot  v_{23}  = (-26,76,-3) \cdot (0,4,32) = 208$ \\ 
 $4  1  3  2  5 $ & $f_{24}$  & $p_{1}$ & $p_{2}$ & $  ( p_{2} - p_{1} )   \cdot  v_{24}  = (26,-76,3) \cdot (4,0,32) = 200$ \\ 
 $4  1  3  5  2 $ & $f_{24}$  & $p_{1}$ & $p_{5}$ & $  ( p_{5} - p_{1} )   \cdot  v_{24}  = (-14,-55,3) \cdot (4,0,32) = 40$ \\ 
 $4  1  2  3  5 $ & $f_{25}$  & $p_{1}$ & $p_{5}$ & $  ( p_{5} - p_{1} )   \cdot  v_{25}  = (-14,-55,3) \cdot (0,-4,32) = 316$ \\ 
 $4  1  2  5  3 $ & $f_{24}$  & $p_{1}$ & $p_{5}$ & $  ( p_{5} - p_{1} )   \cdot  v_{24}  = (-14,-55,3) \cdot (4,0,32) = 40$ \\ 
 $4  1  5  2  3 $ & $f_{24}$  & $p_{1}$ & $p_{2}$ & $  ( p_{2} - p_{1} )   \cdot  v_{24}  = (26,-76,3) \cdot (4,0,32) = 200$ \\ 
 $4  1  5  3  2 $ & $f_{25}$  & $p_{1}$ & $p_{2}$ & $  ( p_{2} - p_{1} )   \cdot  v_{25}  = (26,-76,3) \cdot (0,-4,32) = 400$ \\ 
 $3  4  5  2  1 $ & $f_{12}$  & $p_{3}$ & $p_{4}$ & $  ( p_{4} - p_{3} )   \cdot  v_{12}  = (24,-8,-14) \cdot (4,0,-32) = 544$ \\ 
 $3  4  5  1  2 $ & $f_{12}$  & $p_{3}$ & $p_{4}$ & $  ( p_{4} - p_{3} )   \cdot  v_{12}  = (24,-8,-14) \cdot (4,0,-32) = 544$ \\ 
 $3  4  2  5  1 $ & $f_{12}$  & $p_{3}$ & $p_{4}$ & $  ( p_{4} - p_{3} )   \cdot  v_{12}  = (24,-8,-14) \cdot (4,0,-32) = 544$ \\ 
 $3  4  2  1  5 $ & $f_{12}$  & $p_{3}$ & $p_{4}$ & $  ( p_{4} - p_{3} )   \cdot  v_{12}  = (24,-8,-14) \cdot (4,0,-32) = 544$ \\ 
 $3  4  1  2  5 $ & $f_{12}$  & $p_{3}$ & $p_{4}$ & $  ( p_{4} - p_{3} )   \cdot  v_{12}  = (24,-8,-14) \cdot (4,0,-32) = 544$ \\ 
 $3  4  1  5  2 $ & $f_{12}$  & $p_{3}$ & $p_{4}$ & $  ( p_{4} - p_{3} )   \cdot  v_{12}  = (24,-8,-14) \cdot (4,0,-32) = 544$ \\ 
 $3  5  4  2  1 $ & $f_{24}$  & $p_{5}$ & $p_{2}$ & $  ( p_{2} - p_{5} )   \cdot  v_{24}  = (40,-21,0) \cdot (4,0,32) = 160$ \\ 
 $3  5  4  1  2 $ & $f_{25}$  & $p_{5}$ & $p_{2}$ & $  ( p_{2} - p_{5} )   \cdot  v_{25}  = (40,-21,0) \cdot (0,-4,32) = 84$ \\ 
 $3  5  2  4  1 $ & $f_{13}$  & $p_{3}$ & $p_{2}$ & $  ( p_{2} - p_{3} )   \cdot  v_{13}  = (1,9,10) \cdot (2,2,0) = 20$ \\ 
 $3  5  2  1  4 $ & $f_{13}$  & $p_{3}$ & $p_{2}$ & $  ( p_{2} - p_{3} )   \cdot  v_{13}  = (1,9,10) \cdot (2,2,0) = 20$ \\ 
 $3  5  1  2  4 $ & $f_{23}$  & $p_{5}$ & $p_{1}$ & $  ( p_{1} - p_{5} )   \cdot  v_{23}  = (14,55,-3) \cdot (0,4,32) = 124$ \\ 
 $3  5  1  4  2 $ & $f_{23}$  & $p_{5}$ & $p_{1}$ & $  ( p_{1} - p_{5} )   \cdot  v_{23}  = (14,55,-3) \cdot (0,4,32) = 124$ \\ 
 $3  2  5  4  1 $ & $f_{23}$  & $p_{2}$ & $p_{5}$ & $  ( p_{5} - p_{2} )   \cdot  v_{23}  = (-40,21,0) \cdot (0,4,32) = 84$ \\ 
 $3  2  5  1  4 $ & $f_{23}$  & $p_{2}$ & $p_{5}$ & $  ( p_{5} - p_{2} )   \cdot  v_{23}  = (-40,21,0) \cdot (0,4,32) = 84$ \\ 
 $3  2  4  5  1 $ & $f_{13}$  & $p_{3}$ & $p_{4}$ & $  ( p_{4} - p_{3} )   \cdot  v_{13}  = (24,-8,-14) \cdot (2,2,0) = 32$ \\ 
 $3  2  4  1  5 $ & $f_{15}$  & $p_{3}$ & $p_{5}$ & $  ( p_{5} - p_{3} )   \cdot  v_{15}  = (-39,30,10) \cdot (1,-3,16) = 31$ \\ 
 $3  2  1  4  5 $ & $f_{23}$  & $p_{2}$ & $p_{1}$ & $  ( p_{1} - p_{2} )   \cdot  v_{23}  = (-26,76,-3) \cdot (0,4,32) = 208$ \\ 
 $3  2  1  5  4 $ & $f_{23}$  & $p_{2}$ & $p_{1}$ & $  ( p_{1} - p_{2} )   \cdot  v_{23}  = (-26,76,-3) \cdot (0,4,32) = 208$ \\ 
 $3  1  5  2  4 $ & $f_{24}$  & $p_{1}$ & $p_{2}$ & $  ( p_{2} - p_{1} )   \cdot  v_{24}  = (26,-76,3) \cdot (4,0,32) = 200$ \\ 
 $3  1  5  4  2 $ & $f_{25}$  & $p_{1}$ & $p_{2}$ & $  ( p_{2} - p_{1} )   \cdot  v_{25}  = (26,-76,3) \cdot (0,-4,32) = 400$ \\ 
 $3  1  2  5  4 $ & $f_{24}$  & $p_{1}$ & $p_{5}$ & $  ( p_{5} - p_{1} )   \cdot  v_{24}  = (-14,-55,3) \cdot (4,0,32) = 40$ \\ 
 $3  1  2  4  5 $ & $f_{25}$  & $p_{1}$ & $p_{5}$ & $  ( p_{5} - p_{1} )   \cdot  v_{25}  = (-14,-55,3) \cdot (0,-4,32) = 316$ \\ 
 $3  1  4  2  5 $ & $f_{24}$  & $p_{1}$ & $p_{2}$ & $  ( p_{2} - p_{1} )   \cdot  v_{24}  = (26,-76,3) \cdot (4,0,32) = 200$ \\ 
 $3  1  4  5  2 $ & $f_{24}$  & $p_{1}$ & $p_{5}$ & $  ( p_{5} - p_{1} )   \cdot  v_{24}  = (-14,-55,3) \cdot (4,0,32) = 40$ \\ 
 $2  4  3  5  1 $ & $f_{12}$  & $p_{2}$ & $p_{4}$ & $  ( p_{4} - p_{2} )   \cdot  v_{12}  = (23,-17,-24) \cdot (4,0,-32) = 860$ \\ 
 $2  4  3  1  5 $ & $f_{12}$  & $p_{2}$ & $p_{4}$ & $  ( p_{4} - p_{2} )   \cdot  v_{12}  = (23,-17,-24) \cdot (4,0,-32) = 860$ \\ 
 $2  4  5  3  1 $ & $f_{12}$  & $p_{2}$ & $p_{4}$ & $  ( p_{4} - p_{2} )   \cdot  v_{12}  = (23,-17,-24) \cdot (4,0,-32) = 860$ \\ 
 $2  4  5  1  3 $ & $f_{12}$  & $p_{2}$ & $p_{4}$ & $  ( p_{4} - p_{2} )   \cdot  v_{12}  = (23,-17,-24) \cdot (4,0,-32) = 860$ \\ 
 $2  4  1  5  3 $ & $f_{12}$  & $p_{2}$ & $p_{4}$ & $  ( p_{4} - p_{2} )   \cdot  v_{12}  = (23,-17,-24) \cdot (4,0,-32) = 860$ \\ 
 $2  4  1  3  5 $ & $f_{12}$  & $p_{2}$ & $p_{4}$ & $  ( p_{4} - p_{2} )   \cdot  v_{12}  = (23,-17,-24) \cdot (4,0,-32) = 860$ \\ 
 $2  3  4  5  1 $ & $f_{12}$  & $p_{2}$ & $p_{3}$ & $  ( p_{3} - p_{2} )   \cdot  v_{12}  = (-1,-9,-10) \cdot (4,0,-32) = 316$ \\ 
 $2  3  4  1  5 $ & $f_{12}$  & $p_{2}$ & $p_{3}$ & $  ( p_{3} - p_{2} )   \cdot  v_{12}  = (-1,-9,-10) \cdot (4,0,-32) = 316$ \\ 
 $2  3  5  4  1 $ & $f_{12}$  & $p_{2}$ & $p_{3}$ & $  ( p_{3} - p_{2} )   \cdot  v_{12}  = (-1,-9,-10) \cdot (4,0,-32) = 316$ \\ 
 $2  3  5  1  4 $ & $f_{12}$  & $p_{2}$ & $p_{3}$ & $  ( p_{3} - p_{2} )   \cdot  v_{12}  = (-1,-9,-10) \cdot (4,0,-32) = 316$ \\ 
 $2  3  1  5  4 $ & $f_{12}$  & $p_{2}$ & $p_{3}$ & $  ( p_{3} - p_{2} )   \cdot  v_{12}  = (-1,-9,-10) \cdot (4,0,-32) = 316$ \\ 
 $2  3  1  4  5 $ & $f_{12}$  & $p_{2}$ & $p_{3}$ & $  ( p_{3} - p_{2} )   \cdot  v_{12}  = (-1,-9,-10) \cdot (4,0,-32) = 316$ \\ 
 $2  5  3  4  1 $ & $f_{34}$  & $p_{3}$ & $p_{4}$ & $  ( p_{4} - p_{3} )   \cdot  v_{34}  = (24,-8,-14) \cdot (2,-2,0) = 64$ \\ 
 $2  5  3  1  4 $ & $f_{45}$  & $p_{1}$ & $p_{4}$ & $  ( p_{4} - p_{1} )   \cdot  v_{45}  = (49,-93,-21) \cdot (-2,-3,8) = 13$ \\ 
 $2  5  4  3  1 $ & $f_{13}$  & $p_{2}$ & $p_{4}$ & $  ( p_{4} - p_{2} )   \cdot  v_{13}  = (23,-17,-24) \cdot (2,2,0) = 12$ \\ 
 $2  5  4  1  3 $ & $f_{13}$  & $p_{2}$ & $p_{4}$ & $  ( p_{4} - p_{2} )   \cdot  v_{13}  = (23,-17,-24) \cdot (2,2,0) = 12$ \\ 
 $2  5  1  4  3 $ & $f_{23}$  & $p_{5}$ & $p_{1}$ & $  ( p_{1} - p_{5} )   \cdot  v_{23}  = (14,55,-3) \cdot (0,4,32) = 124$ \\ 
 $2  5  1  3  4 $ & $f_{23}$  & $p_{5}$ & $p_{1}$ & $  ( p_{1} - p_{5} )   \cdot  v_{23}  = (14,55,-3) \cdot (0,4,32) = 124$ \\ 
 $2  1  3  5  4 $ & $f_{24}$  & $p_{1}$ & $p_{5}$ & $  ( p_{5} - p_{1} )   \cdot  v_{24}  = (-14,-55,3) \cdot (4,0,32) = 40$ \\ 
 $2  1  3  4  5 $ & $f_{25}$  & $p_{1}$ & $p_{5}$ & $  ( p_{5} - p_{1} )   \cdot  v_{25}  = (-14,-55,3) \cdot (0,-4,32) = 316$ \\ 
 $2  1  5  3  4 $ & $f_{34}$  & $p_{5}$ & $p_{3}$ & $  ( p_{3} - p_{5} )   \cdot  v_{34}  = (39,-30,-10) \cdot (2,-2,0) = 138$ \\ 
 $2  1  5  4  3 $ & $f_{25}$  & $p_{1}$ & $p_{3}$ & $  ( p_{3} - p_{1} )   \cdot  v_{25}  = (25,-85,-7) \cdot (0,-4,32) = 116$ \\ 
 $2  1  4  5  3 $ & $f_{24}$  & $p_{1}$ & $p_{5}$ & $  ( p_{5} - p_{1} )   \cdot  v_{24}  = (-14,-55,3) \cdot (4,0,32) = 40$ \\ 
 $2  1  4  3  5 $ & $f_{25}$  & $p_{1}$ & $p_{5}$ & $  ( p_{5} - p_{1} )   \cdot  v_{25}  = (-14,-55,3) \cdot (0,-4,32) = 316$ \\ 
 $1  4  3  2  5 $ & $f_{12}$  & $p_{1}$ & $p_{4}$ & $  ( p_{4} - p_{1} )   \cdot  v_{12}  = (49,-93,-21) \cdot (4,0,-32) = 868$ \\ 
 $1  4  3  5  2 $ & $f_{12}$  & $p_{1}$ & $p_{4}$ & $  ( p_{4} - p_{1} )   \cdot  v_{12}  = (49,-93,-21) \cdot (4,0,-32) = 868$ \\ 
 $1  4  2  3  5 $ & $f_{12}$  & $p_{1}$ & $p_{4}$ & $  ( p_{4} - p_{1} )   \cdot  v_{12}  = (49,-93,-21) \cdot (4,0,-32) = 868$ \\ 
 $1  4  2  5  3 $ & $f_{12}$  & $p_{1}$ & $p_{4}$ & $  ( p_{4} - p_{1} )   \cdot  v_{12}  = (49,-93,-21) \cdot (4,0,-32) = 868$ \\ 
 $1  4  5  2  3 $ & $f_{12}$  & $p_{1}$ & $p_{4}$ & $  ( p_{4} - p_{1} )   \cdot  v_{12}  = (49,-93,-21) \cdot (4,0,-32) = 868$ \\ 
 $1  4  5  3  2 $ & $f_{12}$  & $p_{1}$ & $p_{4}$ & $  ( p_{4} - p_{1} )   \cdot  v_{12}  = (49,-93,-21) \cdot (4,0,-32) = 868$ \\ 
 $1  3  4  2  5 $ & $f_{12}$  & $p_{1}$ & $p_{3}$ & $  ( p_{3} - p_{1} )   \cdot  v_{12}  = (25,-85,-7) \cdot (4,0,-32) = 324$ \\ 
 $1  3  4  5  2 $ & $f_{12}$  & $p_{1}$ & $p_{3}$ & $  ( p_{3} - p_{1} )   \cdot  v_{12}  = (25,-85,-7) \cdot (4,0,-32) = 324$ \\ 
 $1  3  2  4  5 $ & $f_{12}$  & $p_{1}$ & $p_{3}$ & $  ( p_{3} - p_{1} )   \cdot  v_{12}  = (25,-85,-7) \cdot (4,0,-32) = 324$ \\ 
 $1  3  2  5  4 $ & $f_{12}$  & $p_{1}$ & $p_{3}$ & $  ( p_{3} - p_{1} )   \cdot  v_{12}  = (25,-85,-7) \cdot (4,0,-32) = 324$ \\ 
 $1  3  5  2  4 $ & $f_{12}$  & $p_{1}$ & $p_{3}$ & $  ( p_{3} - p_{1} )   \cdot  v_{12}  = (25,-85,-7) \cdot (4,0,-32) = 324$ \\ 
 $1  3  5  4  2 $ & $f_{12}$  & $p_{1}$ & $p_{3}$ & $  ( p_{3} - p_{1} )   \cdot  v_{12}  = (25,-85,-7) \cdot (4,0,-32) = 324$ \\ 
 $1  2  3  4  5 $ & $f_{12}$  & $p_{1}$ & $p_{2}$ & $  ( p_{2} - p_{1} )   \cdot  v_{12}  = (26,-76,3) \cdot (4,0,-32) = 8$ \\ 
 $1  2  3  5  4 $ & $f_{12}$  & $p_{1}$ & $p_{2}$ & $  ( p_{2} - p_{1} )   \cdot  v_{12}  = (26,-76,3) \cdot (4,0,-32) = 8$ \\ 
 $1  2  4  3  5 $ & $f_{12}$  & $p_{1}$ & $p_{2}$ & $  ( p_{2} - p_{1} )   \cdot  v_{12}  = (26,-76,3) \cdot (4,0,-32) = 8$ \\ 
 $1  2  4  5  3 $ & $f_{12}$  & $p_{1}$ & $p_{2}$ & $  ( p_{2} - p_{1} )   \cdot  v_{12}  = (26,-76,3) \cdot (4,0,-32) = 8$ \\ 
 $1  2  5  4  3 $ & $f_{12}$  & $p_{1}$ & $p_{2}$ & $  ( p_{2} - p_{1} )   \cdot  v_{12}  = (26,-76,3) \cdot (4,0,-32) = 8$ \\ 
 $1  2  5  3  4 $ & $f_{12}$  & $p_{1}$ & $p_{2}$ & $  ( p_{2} - p_{1} )   \cdot  v_{12}  = (26,-76,3) \cdot (4,0,-32) = 8$ \\ 
 $1  5  3  2  4 $ & $f_{24}$  & $p_{5}$ & $p_{2}$ & $  ( p_{2} - p_{5} )   \cdot  v_{24}  = (40,-21,0) \cdot (4,0,32) = 160$ \\ 
 $1  5  3  4  2 $ & $f_{25}$  & $p_{5}$ & $p_{2}$ & $  ( p_{2} - p_{5} )   \cdot  v_{25}  = (40,-21,0) \cdot (0,-4,32) = 84$ \\ 
 $1  5  2  3  4 $ & $f_{34}$  & $p_{2}$ & $p_{3}$ & $  ( p_{3} - p_{2} )   \cdot  v_{34}  = (-1,-9,-10) \cdot (2,-2,0) = 16$ \\ 
 $1  5  2  4  3 $ & $f_{15}$  & $p_{1}$ & $p_{3}$ & $  ( p_{3} - p_{1} )   \cdot  v_{15}  = (25,-85,-7) \cdot (1,-3,16) = 168$ \\ 
 $1  5  4  2  3 $ & $f_{24}$  & $p_{5}$ & $p_{2}$ & $  ( p_{2} - p_{5} )   \cdot  v_{24}  = (40,-21,0) \cdot (4,0,32) = 160$ \\ 
 $1  5  4  3  2 $ & $f_{25}$  & $p_{5}$ & $p_{2}$ & $  ( p_{2} - p_{5} )   \cdot  v_{25}  = (40,-21,0) \cdot (0,-4,32) = 84$ \\ 
\end{longtable}

We show now that the polyhedral fan does not contain a cycle in any direction. 
%
%
%
To this end, we check all simple cycles in the dual graph, which has one vertex per cell and one edge per facet. 
If the halfspaces defined by a cycle (with the corresponding orientation) have empty intersection, then there is no cycle in visibility involving the facets in the cycle. 
The graph associated to $\mathcal{F}$ is depicted in~\cref{fig:anrrgraph}. 

\begin{figure}[htbp]
\centering
\includegraphics[page=3,scale=1]{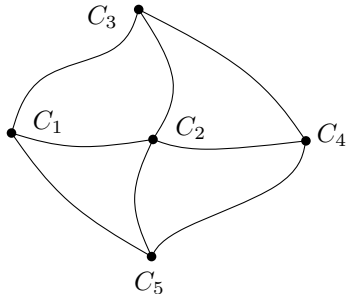}
\caption{Dual graph of $\mathcal{F}$.}
\label{fig:anrrgraph}
\end{figure}

Note first that the halfspaces associated to cycles around a ray have empty intersection. 
This excludes the cycles $ \langle  C_1,C_2,C_3 \rangle$,$ \langle  C_2,C_3,C_4 \rangle$, $ \langle  C_2,C_4,C_5 \rangle$ and $ \langle  C_1,C_2,C_5 \rangle$.

We give next certificates for the remaining cycles.
To show that the halfspaces have empty intersection we provide a solution to the dual problem (in the sense of~\cref{Gordan}). 

\begin{align*}
\langle  C_1,C_3,C_4,C_5 \rangle &: v_{13}+\frac{1}{4} v_{34}+v_{45}-\frac{1}{2} v_{15}=0\\
\langle  C_2,C_5,C_1,C_3 \rangle &: \frac{3}{4}v_{25}-\frac{1}{4}v_{15}+\frac{1}{2}v_{13}-\frac{1}{4} v_{23}=0\\
\langle  C_2,C_1,C_3,C_4 \rangle &:  -\frac{1}{2}v_{12}+v_{13}+v_{34}-\frac{1}{2} v_{24}=0\\
\langle  C_2,C_3,C_4,C_5 \rangle &: \frac{1}{2}v_{23} +v_{34}+v_{45}-\frac{3}{4} v_{25}=0\\
\langle  C_2,C_4,C_5,C_1 \rangle &: \frac{1}{2}v_{24}+v_{45}-v_{15}+\frac{1}{4} v_{12}=0 \\
\langle  C_2,C_1,C_3,C_4,C_5 \rangle &: -\frac{1}{2}v_{12}+v_{13}+v_{34}+v_{45}-\frac{3}{4} v_{25}=0\\
\langle  C_2,C_3,C_4,C_5,C_1 \rangle &: \frac{1}{2}v_{23}+v_{34}+v_{45}-v_{15}+\frac{1}{4} v_{12}=0\\
\langle  C_2,C_4,C_5,C_1,C_3 \rangle &: \frac{1}{2}v_{24}+v_{45}-v_{15}+\frac{1}{2} v_{13}-\frac{1}{4} v_{23}=0\\
\langle  C_2,C_5,C_1,C_3,C_4 \rangle &: \frac{3}{4}v_{25}-\frac{1}{4}v_{15}+\frac{1}{2}v_{13}+\frac{1}{4} v_{34}-\frac{1}{4} v_{24}=0
\end{align*}

\section[A non-recursively-regular acyclic triangulation]{A non-recursively-regular acyclic triangulation}\label{app:notrecreg}

We prove here that the triangulation $\mathcal{S}$ pictured in~\cref{Fig:doublegen} is not recursively-regular. 
To this end, we prove that its finest regular coarsening has only one cell providing a solution to the dual of its regularity system in the sense of~\cref{Gordan}. 

\begin{figure}[htbp]
\centering
\includegraphics[page=5,scale=1]{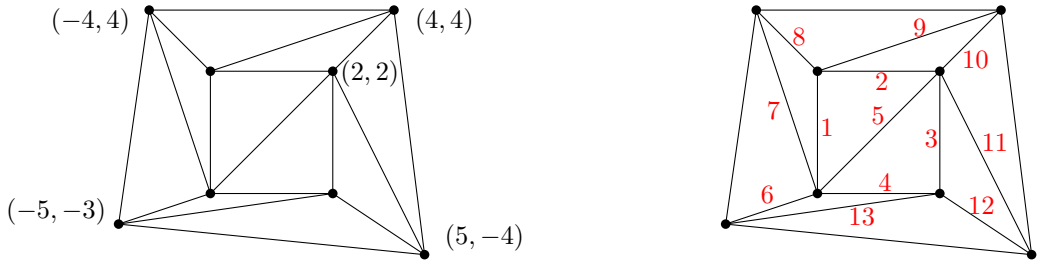}
\caption{A non-recursively-regular triangulation.}
\label{fig:nonrecreeg}
\end{figure}

The rows of the matrix of the regularity system of $\mathcal{S}$ associated to the edges labeled in~\cref{fig:nonrecreeg} are
\begin{align*}
s_1 & =(8,-32,8,0,16,0,0,0) \\
s_2 & =(8,0,-24,0,0,16,0,0) \\
s_3 & =(12,0,8,-36,0,0,16,0) \\
s_4 & =(-28,0,4,8,0,0,0,16) \\
s_5 & =(-16,16,-16,16,0,0,0,0) \\
s_6 & =(-48,0,0,20,4,0,0,24) \\
s_7 & =(-16,20,0,0,-12,0,0,8) \\
s_8 & =(16,-48,0,0,24,8,0,0) \\
s_9 & =(0,-16,16,0,8,-8,0,0) \\
s_{10} & =(0,18,-50,0,0,24,8,0) \\
s_{11} & =(0,0,-22,18,0,12,-8,0) \\
s_{12} & =(0,0,17,-57,0,0,28,12) \\
s_{13} & =(17,0,0,-13,0,0,4,-8).
\end{align*}

The following positive values for the dual variables are a solution to the dual system 
\[
\sum_{i=1}^{13} y_i s_i=0\]
\[ y_i \ge 0, \text{ for } i=1,\ldots,13\]



%

\[y_1=207,y_2=24,y_3=20,y_4=24,y_5=288,y_6=24,y_7=1308\]
\[y_8=24,y_9=1464,y_{10}=24, y_{11}=198,y_{12}=24,y_{13}=1464.\]

\bigskip

We prove now that $\mathcal{S}$ has no cycle in visibility. 
We show first that the coarsening $\mathcal{S}'$ of $\mathcal{S}$ depicted in~\cref{fig:nonrecreegcor} is acyclic. 
Assume $x \in \R^2$ to be a point interior to a cell $C_x$ of $\mathcal{S}'$ from which $\mathcal{S}'$ might be cyclic.
Consider the fan $\mathcal{F} \subset \R^3$ obtained by taking the cone with the origin for all the cells of $\mathcal{S}'$, when $\mathcal{S}'$ is embedded in the horizontal plane of intercept $-1/8$ with the vertical axis.
The construction is similar to the fan studied in~\cref{App:calculations}. 
In particular, both subdivisions have the same dual graph (illustrated in \cref{fig:anrrgraph}). 

We will prove that the in-front relation from $x$ in the plane for $\mathcal{S}'$ is equivalent to the in-front relation in the direction $(-x, 1/8) \in \R^3$ for $\mathcal{F}$, which is shown to be acyclic in~\cref{App:calculations}.
Therefore, it will be proven that $\mathcal{S}'$ is acyclic as well.
Take a line $\ell$ in the plane through $x$ and consider points $y,z \in \ell$ such that $y$ is interior to $C_y$ and $z$ is interior to $C_z$ with $C_y,C_z \in \cells(\mathcal{S}')$ sharing the wall $W=C_y \cap C_z$ of $\mathcal{S}'$. 
Assume further that $y$ is before $z$ in the visibility relation from $x$. 
Let $\Pi \subset \R^3$ be a plane through the origin and containing $W$ in the embedded copy of $\mathcal{S}'$ (and, hence, supporting the wall of $\mathcal{F}$ associated to $W$). 
Then, the plane~$\Pi$ separates $\lambda y$ from $\mu z$ for any $\lambda, \mu \in \R^+$.
In particular, it separates $\lambda(\mu) y$ from $\mu z$ if for every $\mu \in \R^+$ we take $\lambda(\mu)$ such that $x$, $\lambda(\mu) y$ and $\mu z$ are coplanar. 
If $v$ is a vector normal to $\Pi$ and pointing ``towards'' $x$, we have then that $\scalprod{v}{\lambda(\mu)y-\mu z}>0$ for all $\mu \in \R^+$. 
Since the scalar product is continuous, we have that 
\[ \lim_{ \mu \to 0} \scalprod{v}{\lambda(\mu)y-\mu z} = \scalprod{v}{ \alpha x} \ge 0 ,\]
where $\alpha \in \R^+$ is a constant. 
Since $\scalprod{v}{x} \ne 0$ because $x$ is interior to $C_x$, we have that $\scalprod{v}{x} > 0$.

It only remains to observe that all the sub-subdivisions refining each of the cells of $\mathcal{S}'$ into a pair of cells of $\mathcal{S}$ are obviously acyclic. 
By a reasoning similar to the one in the proof of~\cref{recareacyclic}, the combination of the previous observation with the fact that $\mathcal{S}'$ is acyclic already implies that $\mathcal{S}$ is acyclic.

\begin{figure}[htbp]
\centering
\includegraphics[page=6,scale=0.7]{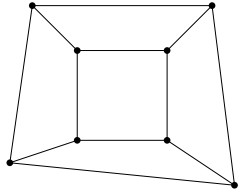}
\caption{An acyclic coarsening of $\mathcal{S}$.}
\label{fig:nonrecreegcor}
\end{figure}

\section[A non-regular recursively-regular subdivision]{A non-regular recursively-regular subdivision}\label{Ap:nonreg}

We prove here that the subdivision $\mathcal{S}$ pictured in~\cref{Fig:doublegen} is not regular. 
To this end, we provide a solution to the dual of its regularity system in the sense of~\cref{Gordan}. 
The vertices of the subdivision lie symmetrically with respect to the coordinate axes. 

\begin{figure}[htbp]
\centering
\includegraphics[scale=0.7]{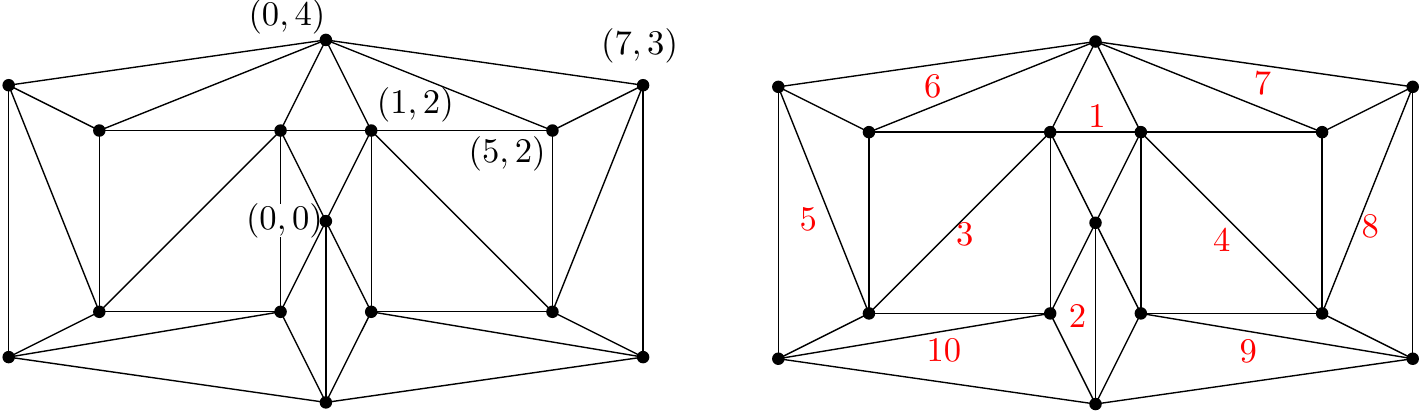}
\caption{A non-regular, recursively regular subdivision.}
\label{fig:nonreg}
\end{figure}

The rows of the matrix of the regularity system of $\mathcal{S}$ associated to the edges labeled in~\cref{fig:nonreg} are
\begin{align*}
s_1 & =(0,0,-4,0,4,-4,0,0,0,4,0,0,0,0,0) \\
s_2 & =(0,0,0,4,-4,0,4,0,0,0,-4,0,0,0,0) \\
s_3 & =(-16,16,-16,16,0,0,0,0,0,0,0,0,0,0,0) \\
s_4 & =(0,0,0,0,0,-16,16,-16,16,0,0,0,0,0,0) \\
s_5 & =(-12,12,0,0,0,0,0,0,0,0,0,8,0,0,-8) \\
s_6 & =(0,-13,9,0,0,0,0,0,0,-4,0,0,0,0,8) \\
s_7 & =(0,0,0,0,0,9,0,0,-13,-4,0,0,0,8,0) \\
s_8 & =(0,0,0,0,0,0,0,-12,12,0,0,0,8,-8,0) \\
s_9 & =(0,0,0,0,0,0,0,-9,13,0,0,4,0,-	8,0) \\
s_{10} & =(13,0,0,-9,0,0,0,0,0,0,4,-8,0,0,0).
\end{align*}

The following positive values for the dual variables are a solution to the dual system

\[
\sum_{i=1}^{10} y_i s_i=0\]
\[ y_i \ge 0, \text{ for } i= 1,\ldots,10\]

\[y_{1}=y_{2} =1,y_{3}=y_{4} =\frac{1}{32},y_{5} =y_{6} =y_{7} =y_{8} =y_{9} =y_{10} =\frac{1}{2}.\]
%

The coarsening $\mathcal{S}'$ of $\mathcal{S}$ in~\cref{fig:reccoars} is regular, because the depicted heights represent a convex lifting. 
The subdivisions $\mathcal{S}$ is then recursively-regular because it can be obtained refining with regular subdivisions the cells of $\mathcal{S}'$.

\begin{figure}[htbp]
\centering
\includegraphics[page=2,scale=1]{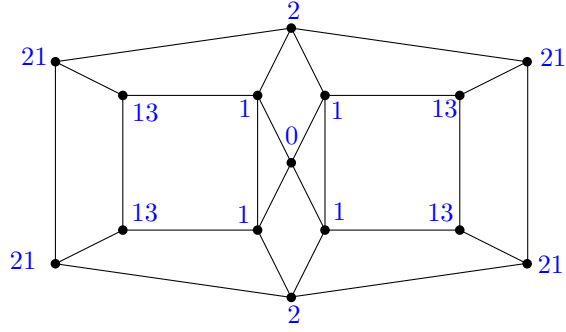}
\caption{A regular coarsening of $\mathcal{S}$.}
\label{fig:reccoars}
\end{figure}

\section[A subdivision whose regularity tree has two levels]{A subdivision whose regularity tree has two levels}
\label{Ap:twolevels}

We prove here that the subdivision $\mathcal{S}$ pictured in~\cref{Fig:rectree} (left) is not regular and that its finest regular coarsening is the subdivision $\mathcal{S}_0$ defined by the second level of the tree in~\cref{Fig:rectree} (right). 
To this end, we provide a solution to the dual of its regularity system in the sense of~\cref{Gordan}. 

\begin{figure}[htbp]
\centering
\includegraphics[scale=0.7]{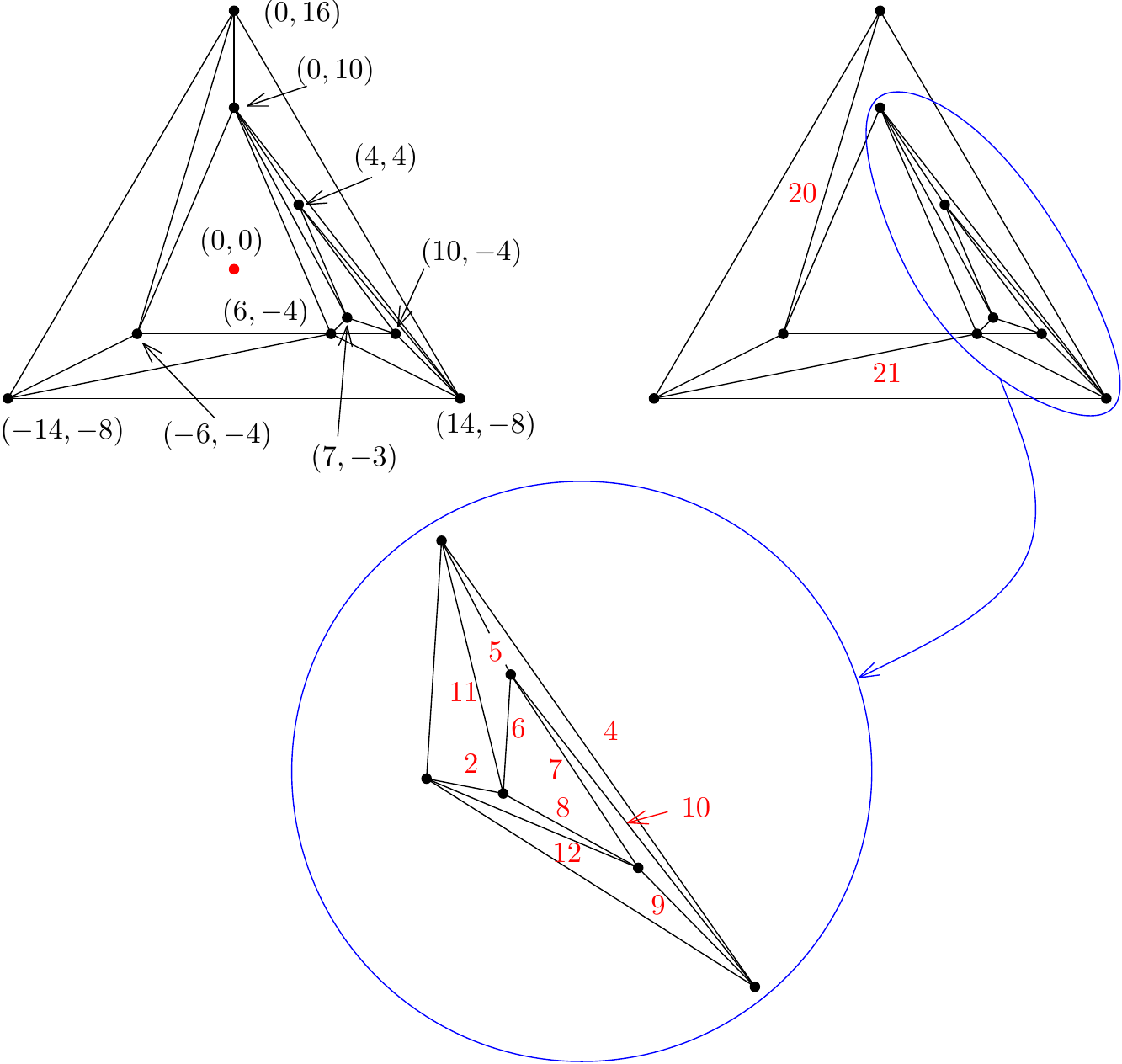}
\caption{A non-regular, recursively regular subdivision.}
\label{fig:twolev}
\end{figure}

The rows of the matrix of the regularity system of $\mathcal{S}$ associated to the edges labeled with numbers in~\cref{fig:twolev} are
\begin{align*}
s_{2} & = (-56,20,0,4,32,0,0,0,0) \\
s_{4} & = (0,0,84,-72,0,-24,0,0,12) \\
s_{5} & = (12,0,-56,34,0,10,0,0,0) \\
s_{6} & = (4,10,-32,18,0,0,0,0,0) \\
s_{7} & = (8,-34,8,0,0,18,0,0,0) \\
s_{8} & = (-32,10,4,0,18,0,0,0,0) \\
s_{9} & = (0,56,16,0,8,32,0,0,0) \\
s_{10} & = (0,12,-16,8,0,-4,0,0,0) \\
s_{11} & = (-20,0,20,-10,10,0,0,0,0) \\
s_{12} & = (16,-12,0,-8,4,0,0,0,0) \\
s_{20} & = (0,0,0,136,0,0,36,-84,-88) \\
s_{21} & = (0,0,0,0,-112,48,-48,112,0).
\end{align*}

The following positive values for the dual variables are a solution to the dual system
\[
\sum_{i=1}^{21} y_i s_i=0\]
\[ y_i \ge 0, \text{ for } i =1,\ldots,21\]
%

\[y_{2}=\frac{1}{10},y_{4}=\frac{11}{10},y_{6}=1,y_{7}=\frac{50}{99},y_{8}=\frac{71}{99}\]
\[y_{9}=\frac{1}{10},y_{10}=\frac{11}{10}, y_{12}=\frac{4}{5}, y_{5}=\frac{109}{110}, y_{11}=\frac{23}{110},y_{20}=\frac{3}{20},y_{21}=\frac{9}{80}.\]

The sub-subdivision of $\mathcal{S}$ pictured (after an affine transformation) in the lower part of~\cref{fig:twolev} is a variant of a typical non-regular subdivision called ``the mother of all examples''in~\cite{LRS}. 
However, since the lines supporting the edges $2$, $5$ and $9$ are concurrent, the sub-subdivision is recursively regular (its finest regular coarsening is the projection of a truncated pyramid). 
Similarly, it is clear that $\mathcal{S}_0$ is regular as well.

\end{document}